\newcommand{\Z}{\mathbb{Z}}
\newcommand{\diam}{\mathrm{diam}}
\newcommand{\be}{\begin{equation}}
	\newcommand{\ee}{\end{equation}}
\numberwithin{equation}{section}
\newtheorem*{theorem*}        {Theorem}
\newtheorem*{conjecture*}   {Conjecture}
\newtheorem{theorem}           {Theorem}[section]
\newtheorem{lemma}              {Lemma}[section]
\newtheorem*{lemma*}          {Lemma}
\newtheorem{definition}         {Definition}[section]
\newtheorem{example}         {Example}[section]
\newtheorem{corollary}          {Corollary}[section]
\newtheorem{proposition}      {Proposition}[section]
\newcommand\longleftrightarrowfill@{%
	\arrowfill@\leftarrow\relbar\rightarrow}
\definecolor{Red}{cmyk}{0,1,1,0}
\definecolor{Blue}{cmyk}{1,1,0,0}
\definecolor{DarkBlue}{rgb}{0.1,0.1,0.5}
\definecolor{Red}{rgb}{0.9,0.0,0.1}
\definecolor{DarkGreen}{rgb}{0.10,0.50,0.10}
\definecolor{DarkRed}{rgb}{0.50,0.10,0.10}
\definecolor{bleu}{RGB}{0,140,189}%
\begin{document}

	\begin{center}
		{\LARGE Quantum Statistical Mechanics via Boundary Conditions.  \\A Groupoid Approach to Quantum Spin Systems}
		\vskip.5cm
		Lucas Affonso$^{1}$, Rodrigo Bissacot$^{1,2}$, Marcelo Laca$^{3}$
		\vskip.3cm
		\begin{footnotesize}
			$^{1}$Institute of Mathematics and Statistics (IME-USP), University of S\~{a}o Paulo, Brazil\\
			$^{2}$ Faculty of Mathematics and Computer Science, Nicolaus Copernicus University, Poland\\  
			$^{3}$Department of Mathematics and Statistics, University of Victoria, Victoria, BC V8W 3R4, Canada
		\end{footnotesize}
		\vskip.1cm
		\begin{scriptsize}
			emails: lucas.affonso.pereira@gmail.com, rodrigo.bissacot@gmail.com, laca@uvic.ca
		\end{scriptsize}
		
	\end{center}
	
	\begin{abstract}
		We use a groupoid model for the spin algebra to introduce boundary conditions on quantum spin systems via a Poisson point process representation. We can describe KMS states of quantum systems by means of a set of equations resembling the standard DLR equations of classical statistical mechanics. We introduce a notion of quantum specification which recovers the classical DLR measures in the particular case of classical interactions. Our results are in the same direction as those obtained recently by Cha, Naaijkens, and Nachtergaele, differently somehow from the predicted by Fannes and Werner.
	\end{abstract}
	
	\section{Introduction}
	
The {\it KMS condition} was proposed as a characterization for equilibrium states in quantum statistical mechanics in a seminal paper by Haag, Hugenholtz, and Winnink \cite{HHW}. Around the same time, Ruelle-Lanford \cite{Lan} and Dobrushin \cite{Dob1} proposed a set of equations, known nowadays as the {\it DLR equations}, that characterize the equilibrium states for classical statistical mechanics systems. 
	
	Both conditions are, at first sight, completely different. The KMS condition is a dynamical condition for equilibrium, relying on well-defined dynamics on the observable algebra to make sense, while the DLR equations characterize the conditional expectations of the state concerning the $\sigma-$algebra of the events localized outside of a finite region of $\Z^d$. Also, some theorems pose problems with a direct analogy between both conditions. 
	
		The first evidence is given by Proposition 5.3.28 in \cite{Bra2}, which implies that the dynamics on a classical algebra of observables must be trivial. Another theorem is Takesaki's theorem \cite{Acc1}, which states that if there is a conditional expectation between the quasi-local algebra of observables and the localized algebra in a finite region $\Lambda \subset \Z^d$, then the state factors as a product state. 
	
At this point, we can ask about the relation between the two characterizations. Brascamp \cite{Bras} showed that a state is KMS for classical interactions of lattice gases when embedded in the CAR algebra, if and only if they are DLR probability measures on the configuration space. For general finite spin systems, this was solved in two papers by Araki and Ion \cite{Ara1}, which solved the one-dimensional case and the high-temperature case for all dimensions, and by another article by Araki, in \cite{Ara2}, where the result was proven in full generality. 
	
	To this end, Araki introduced the so-called \emph{Gibbs condition}, which is equivalent to the KMS condition (Theorem 6.2.18 from \cite{Bra2}) and reduces to the DLR equations if the interaction is classical. This condition relies on perturbation theory with bounded operators developed by Araki, which poses some difficulties that are absent in the classical case when one tries to identify the pure phases of the system since the definition of the Gibbs condition involves modular automorphisms. In the words of T. Matsui, in \cite{Matsui},
	
	\begin{quotation}
		"One mathematically interesting question is whether any KMS state is obtained in this procedure, namely, one may ask whether any KMS state is a thermodynamic limit of finite volume Gibbs states with suitable boundary conditions for Hamiltonians as is described here. Theorem 3.3 may be taken as an answer to this question, however, this is not what we want. We are asking the effect of \emph{the boundary condition of our Hamiltonian} in a large system while the Gibbs condition is \emph{the boundary condition imposed on the states}. From a practical point of view, the [Gibbs condition] is cumbersome to handle [\dots]. We are not certain that all the infinite volume Gibbs states are obtained via a [thermodynamic limit]."
	\end{quotation}
	
Indeed, from a classical statistical mechanics point of view, all the pure phases of the system are obtained via a thermodynamic limit (see Theorem 7.12 in \cite{Geo}). However, it is known that there exist examples of non-extremal DLR measures which can not be obtained via a thermodynamic limit, see \cite{Co, Mi}. In the quantum case, there are proposals by  Israel \cite{Is} and  Simon \cite{Simon}. For Israel, the boundary conditions should consist of conditioning the Hamiltonian to a fixed pure state of the C*-algebra outside the box. In Simon's approach, the boundary conditions come from states that can be described by a family of compatible density matrices. Of course, if the interaction in consideration has a finite range, both proposals are the same. 
	
	Other approaches for studying states in quantum spin systems with boundary conditions were made using Poisson point processes; see \cite{Aiz1, Aiz2}. In their papers, M. Aizenman and B. Nachtergaele proposed the notion of quasi-states, a linear function that has a positive restriction to an abelian subalgebra. Although promising, no relation with KMS states has been further investigated since then. In \cite{Wer}, M. Fannes and R. F. Werner raised concerns about the validity of what they called a DLR inclusion, even suggesting some counterexamples to constructing such a theory for quantum systems. But in recent years, M. Cha, P. Naaijkens, and B. Nachtergaele \cite{Pieter} have characterized the ground states of Kitaev models using a suitable notion of boundary operators, reigniting our interest and showing that maybe a DLR theory could be developed. By analyzing the monograph by Gruber, Hintermann, and Merlini \cite{Gruber}, we noticed that the transformation group studied there had the usual spin algebra as its $\text{C}^*$-algebra, giving us a possible "quantum space" to act as a configuration space for a DLR theory. Thus by combining random representations and the groupoid model for the $\text{C}^*$-algebra of the quantum spin system, we could find a suitable generalization of the DLR equations for the quantum setting. In this paper, we show some basic properties that a quantum specification should have and show that the states compatible with it are KMS states. In fact, DLR and conformal measures already were considered in the setting of groupoid $\text{C}^*$-algebra algebras exploring the connections with the results in thermodynamic formalism on countable Markov shifts, see \cite{BEFR, BEFR2}. 
 
Our paper is divided as follows. In Section 2, we give some basic definitions, state some known-results on groupoid and their C*-algebras, and about Poisson point processes representations applied to short-range models in quantum spin systems. 
In section 3, we introduce the notions finite volume quantum Gibbs states depending on paths. We also introduce a definition of quantum specification for groupoids and show that the finite volume quantum Gibbs states satisfy them. In Section 4, we discuss how the sates introduced in previous sections relate to the known KMS states for short-range interactions. 	
	\section{Preliminary Results}
	Following \cite{Gruber}, we introduce the state space as a group. Let $G_q \subset \mathbb{S}^1$ be the subgroup of all $q$ roots of unity, i.e.,
	\[
	G_q = \{e^{2\pi i\frac{k}{q}}: k=0,1, \dots, q-1\}.
	\]
	For simplicity, we choose $q=2$. With this assumption, we recover the usual state space $\{-1,+1\}$ for Ising spin systems. The Pontryagin dual of $G_q$ can be readily shown to be isomorphic to itself. Indeed, just notice that a character $\chi$ on $G_q$ must also be a root of unit since $1=\chi(1) = \chi(e^{\frac{2\pi i}{q}})^q $. Thus there must exist $k$ such that $\chi(e^{\frac{2\pi i}{q}}) = e^{2\pi i \frac{k}{q}}$. Moreover, the group $G_q$ is isomorphic to $\mathbb{Z}_q$, the group of integers modulo $q$. Define the action $\alpha: \mathbb{Z}_q \times G_q \rightarrow G_q$ given by
	\[
	\alpha(m, e^{2\pi i \frac{k}{q}}) = e^{2\pi i \frac{k+m}{q}}.
	\]
	With this action, we can define the transformation groupoid $G_q \rtimes_\alpha \mathbb{Z}_q$. This is a compact Hausdorff \'{e}tale groupoid with respect to the product topology. The following proposition follows by standard methods:
	\begin{proposition}
		$C^*(G_q \rtimes_\alpha \mathbb{Z}_q) \simeq M_q(\mathbb{C})$.
	\end{proposition}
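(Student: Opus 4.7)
The plan is to exploit the finiteness and algebraic structure of $\mathcal{G} := G_q \rtimes_\alpha \Z_q$ directly, and to exhibit $C^*(\mathcal{G})$ as a full matrix algebra via a canonical system of matrix units. Since both $G_q$ and $\Z_q$ have $q$ elements, $\mathcal{G}$ is a finite (discrete) étale groupoid. In this regime $C_c(\mathcal{G})$ is already a finite-dimensional $*$-algebra under convolution and coincides with $C^*(\mathcal{G})$, so there are no completion or representation-theoretic subtleties. A dimension count gives $|\mathcal{G}| = |G_q| \cdot |\Z_q| = q^2 = \dim M_q(\mathbb{C})$, making the claimed isomorphism at least plausible.

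The key structural input is that the action $\alpha$ is both free and transitive on $G_q$: for every ordered pair $(x,y) \in G_q \times G_q$ there is a unique $m_{y,x} \in \Z_q$ with $\alpha(m_{y,x},x)=y$. Writing $s(m,x)=x$ and $r(m,x)=\alpha(m,x)$ for the source and range maps, each arrow of $\mathcal{G}$ is uniquely specified by its source/range pair. I would then set
\[
e_{y,x} := \mathbbm{1}_{\{(m_{y,x},\,x)\}} \in C_c(\mathcal{G}),
\]
one element for each $(x,y) \in G_q \times G_q$.

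Next I would verify the matrix-unit relations using the groupoid convolution $(f*g)(\gamma) = \sum_{\gamma_1 \gamma_2 = \gamma} f(\gamma_1)g(\gamma_2)$ and the involution $f^*(\gamma) = \overline{f(\gamma^{-1})}$. A direct computation using freeness and transitivity of $\alpha$ yields
\[
e_{y,x} * e_{x',w} = \delta_{x,x'}\, e_{y,w}, \qquad e_{y,x}^* = e_{x,y}, \qquad \sum_{x \in G_q} e_{x,x} = 1_{C^*(\mathcal{G})},
\]
which are exactly the relations for a full $q \times q$ system of matrix units. After enumerating $G_q = \{x_1,\dots,x_q\}$, sending $e_{x_i,x_j} \mapsto E_{i,j}$ extends linearly to a $*$-algebra homomorphism onto $M_q(\mathbb{C})$; matching dimensions $q^2$ force it to be a bijection, and since both sides are finite-dimensional C*-algebras the isomorphism is automatically isometric.

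The only genuinely delicate point is bookkeeping: fixing one convention for the transformation groupoid (whether the arrow $(m,x)$ has source $x$ or $\alpha(m,x)$, and the corresponding composition rule $(m',\alpha(m,x))\cdot(m,x) = (m+m',x)$) and propagating it consistently through the convolution identity. Once the convention is fixed, verifying the matrix-unit multiplication table is routine, and no topological or analytical input is required thanks to the finiteness of $\mathcal{G}$.
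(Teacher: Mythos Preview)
Your proposal is correct and is precisely the ``standard methods'' argument the paper invokes without giving details: exploit that the transformation groupoid is finite, so $C_c(\mathcal{G})=C^*(\mathcal{G})$, and build a full system of matrix units from the point masses on arrows. Your parametrization of the arrows by source/range pairs $(y,x)$, made possible by freeness and transitivity of $\alpha$, is arguably cleaner than the $(m,k)$ indexing in the paper's commented-out sketch, since it makes the matrix-unit relations $e_{y,x}*e_{x',w}=\delta_{x,x'}e_{y,w}$ immediate; but the two are equivalent reindexings of the same $q^2$ delta functions, and the argument is the same.
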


	\begin{comment}
	\begin{proof}
		Since the groupoid is finite, every function $f:G_q \rtimes_\alpha \mathbb{Z}_q \rightarrow \mathbb{C}$ is continuous. More than that, each function can be written as
		\[
		f = \sum_{(m,e^{2\pi i \frac{k}{q}}) \in G_q \rtimes_\alpha \mathbb{Z}_q} c_{m,k} \delta_{(m,e^{2\pi i \frac{k}{q}})},
		\]
		where $c_{m,k}\in\mathbb{C}$ and $\delta_{(m,e^{2\pi i \frac{k}{q}})}$ is the delta function. Then, to define the isomorphism we just need to define it on the deltas. Let $\pi:C(G_q \rtimes_\alpha \mathbb{Z}_q \rightarrow \mathbb{C})\rightarrow M_q(\mathbb{C})$ be defined by
		\[
		\pi(\delta_{(m,e^{2\pi i \frac{k}{q}})}) = e_{m,k}.
		\]
		where $e_{m,k}$ is the matrix with entry $1$ only on the $m$-th row and $k$-th column and extend it linearly. Notice that
		\[
		\pi(\delta_{(m,e^{2\pi i \frac{k}{q}})}^*) = \pi(\delta_{(q-m,e^{2\pi i \frac{k'}{q}})}) = e_{q-m, k'}
		\]
		where $k'= k+m \mod q$. But this is exactly $e_{m,k}*$, thus $\pi(f^*)= \pi(f)^*$. Moreover, since we have
		\begin{align*}
			\delta_{(m',e^{2\pi i \frac{k'}{q}})}*\delta_{(m,e^{2\pi i \frac{k}{q}})}(n, e^{2\pi i \frac{j}{q}}) &= \sum_{\substack{n' \in \Z_q \\ n'+j' = n + j \mod q}} \delta_{(m',e^{2\pi i \frac{k'}{q}})}(n',e^{2\pi i \frac{j'}{q}})\delta_{(m,e^{2\pi i \frac{k}{q}})}(n'-n,e^{2\pi i \frac{j}{q}}) \\
			&=\delta_{m'+k'= n + j \mod q}\delta_{(m,e^{2\pi i \frac{k}{q}})}(m'-n,e^{2\pi i \frac{j}{q}}) \\
			&=\delta_{m'+k'= m'-m + k \mod q},
		\end{align*}
		giving us $e_{m',k'}e_{m,k}= \delta_{k',m}e_{m',k}$.
	\end{proof}
	\end{comment}
	For each subset $\Lambda \subset \Z^d$, consider
	\[
	\Omega_\Lambda = \prod_{x\in \Lambda}\Z_n \quad \text{and} \quad \widehat{G_\Lambda} = \bigoplus_{x \in \Lambda} \mathbb{Z}_n.
	\]
	When $n=2$ we will consider $\Omega_{\Lambda} = \{-1,+1\}^\Lambda$ and use the multiplicative notation for $\mathbb{Z}_2$ since this is related to the more familiar case of two-valued spins in classical statistical mechanics. When $\Lambda = \Z^d$, we will write $\Omega_{\Z^d} \coloneqq \Omega$ and $G_{\Z^d}\coloneqq G$. Let the action $\alpha_\Lambda : \Omega_\Lambda \times G_\Lambda \rightarrow G_\Lambda$ be given by
	\[
	\alpha(\sigma,g) = (g_x\sigma_x)_{x\in \Lambda}, 
	\] 
	where we used implicitly that $g = \oplus_{x\in \Lambda}g_x$ and that $g_x\sigma_x$ is the usual action of $\Z_n$ in $\{0,1,\dots,n\}$. For the case $n=2$, this action is known as spin-flip. Our groupoid then will be the transformation groupoid 
	\[
	\mathcal{G}_\Lambda = \Omega_\Lambda \rtimes_{\alpha_\Lambda} G_\Lambda.
	\]
	Again, we will make the identification $\mathcal{G}\coloneqq \mathcal{G}_{\Z^d}$. We will denote the elements of the groupoid by different greek letters, to distinguish from the usual notation for configurations in classical statistical mechanics, e.g., $\sigma_\Lambda, \omega_\Lambda \in \Omega_\Lambda$ and $\bm{\sigma}_\Lambda, \bm{\omega}_\Lambda \in \mathcal{G}_\Lambda$. Some references for groupoid $\text{C}^*$-algebras  are \cite{putnam, Renault1980, Renault2,  SimsSzaboWilliams2020}. Let $\mathfrak{A}_n$ be the inductive limit C$^*$-algebra, constructed in section 6.2.1 of \cite{Bra2}. This algebra is also known as the UHF-algebra of type $n^\infty$. We denote by $\mathcal{P}_f(\Z^d)$ the set of finite subsets of $\Z^d$.
	\begin{theorem}
		The C$^*$ algebra $C^*(\mathcal{G})$ is isomorphic to the algebra $\mathfrak{A}_n$
	\end{theorem}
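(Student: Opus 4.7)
The plan is to reduce to the finite-volume case by exploiting the compatibility of the groupoid $C^*$-algebra functor with both finite products and inductive limits of étale groupoids.

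First I would establish the finite-volume statement: for any $\Lambda\in\mathcal{P}_f(\Z^d)$, the groupoid $\mathcal{G}_\Lambda=\Omega_\Lambda\rtimes_{\alpha_\Lambda}G_\Lambda$ factors as a direct product $\prod_{x\in\Lambda}(\Z_n\rtimes_\alpha \Z_n)$, since both the space and the acting group split coordinatewise and $\alpha_\Lambda$ acts diagonally. Using the standard fact that $C^*(\mathcal{G}_1\times\mathcal{G}_2)\simeq C^*(\mathcal{G}_1)\otimes C^*(\mathcal{G}_2)$ for étale (in particular finite) groupoids, together with the preceding proposition $C^*(\Z_n\rtimes_\alpha\Z_n)\simeq M_n(\C)$, I would conclude
\[
C^*(\mathcal{G}_\Lambda)\simeq M_n(\C)^{\otimes\Lambda}\simeq M_{n^{|\Lambda|}}(\C).
\]
This is essentially a routine check, but it fixes an explicit isomorphism sending the delta function $\delta_{(\sigma_\Lambda,g_\Lambda)}$ to an elementary tensor of matrix units, which I will need later for the inductive step.

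Next I would express $\mathcal{G}$ as the inductive limit of the $\mathcal{G}_\Lambda$ along an exhausting sequence $\Lambda_1\subset\Lambda_2\subset\cdots\uparrow \Z^d$. Each $\mathcal{G}_{\Lambda_k}$ embeds into $\mathcal{G}_{\Lambda_{k+1}}$ as the open subgroupoid where the coordinates outside $\Lambda_k$ belong to the unit space (that is, $(\sigma,g)$ with $g_x=e$ for $x\notin\Lambda_k$). Extension by zero off this open subgroupoid yields an injective $*$-homomorphism $C^*(\mathcal{G}_{\Lambda_k})\hookrightarrow C^*(\mathcal{G}_{\Lambda_{k+1}})$, and one checks that under the identification of the previous paragraph this is precisely the UHF embedding $a\mapsto a\otimes 1$ in $M_{n^{|\Lambda_k|}}(\C)\hookrightarrow M_{n^{|\Lambda_k|}}(\C)\otimes M_n(\C)^{\otimes (\Lambda_{k+1}\setminus\Lambda_k)}$. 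Hence the sequence of embeddings on the $C^*$-algebra side is exactly the directed system defining $\mathfrak{A}_n$.

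Finally I would invoke the continuity of $C^*$ under inductive limits for étale groupoids with compatible open embeddings: since $\mathcal{G}$ is the increasing union of the $\mathcal{G}_{\Lambda_k}$ (viewed as open subgroupoids), $C_c(\mathcal{G})$ is the algebraic direct limit of the $C_c(\mathcal{G}_{\Lambda_k})$, and this limit is dense in $C^*(\mathcal{G})$; by uniqueness of the $C^*$-completion of a directed union of $C^*$-subalgebras we obtain $C^*(\mathcal{G})\simeq \varinjlim C^*(\mathcal{G}_{\Lambda_k})\simeq \mathfrak{A}_n$. The main subtlety, and the step where I expect to have to be careful, is verifying that the connecting maps at the groupoid level really do induce the standard unital UHF inclusions rather than some permuted or twisted version; this is ultimately a bookkeeping matter of tracking how the matrix unit assignment of the finite-dimensional proposition behaves under enlarging $\Lambda$, and it works cleanly because the action $\alpha_\Lambda$ is coordinatewise.
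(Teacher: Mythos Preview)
Your finite-volume identification $C^*(\mathcal{G}_\Lambda)\simeq M_n(\mathbb{C})^{\otimes\Lambda}$ via the coordinatewise product decomposition is correct. The gap is in the inductive step: the groupoids $\mathcal{G}_{\Lambda_k}$ do \emph{not} sit inside $\mathcal{G}_{\Lambda_{k+1}}$ (or inside $\mathcal{G}$) as open subgroupoids. The set you describe, $\{(\sigma,g)\in\mathcal{G}_{\Lambda_{k+1}}:g_x=e\text{ for }x\notin\Lambda_k\}$, has unit space $\Omega_{\Lambda_{k+1}}$, not $\Omega_{\Lambda_k}$; it is $\Omega_{\Lambda_{k+1}}\rtimes G_{\Lambda_k}$, whose $C^*$-algebra is $C^*(\mathcal{G}_{\Lambda_k})\otimes C(\Omega_{\Lambda_{k+1}\setminus\Lambda_k})$, strictly larger than $C^*(\mathcal{G}_{\Lambda_k})$. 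More structurally, the unit space of $\mathcal{G}$ is the Cantor set $\Omega=\varprojlim\Omega_\Lambda$, an \emph{inverse} limit of the finite $\Omega_\Lambda$, so $\mathcal{G}$ is not an increasing union of the $\mathcal{G}_\Lambda$ and the ``continuity of $C^*$ under open inclusions'' argument does not apply as stated.

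The fix is straightforward once you see it: the connecting maps $C^*(\mathcal{G}_\Lambda)\to C^*(\mathcal{G}_{\Lambda'})$ come not from open inclusions but from the product splitting $\mathcal{G}_{\Lambda'}\cong\mathcal{G}_\Lambda\times\mathcal{G}_{\Lambda'\setminus\Lambda}$ together with $a\mapsto a\otimes 1$; at the function level this is pullback along the surjective groupoid homomorphism $\mathcal{G}_{\Lambda'}\twoheadrightarrow\mathcal{G}_\Lambda$, and one checks directly that this pullback is a unital $*$-homomorphism for convolution. Similarly $C^*(\mathcal{G}_\Lambda)\hookrightarrow C^*(\mathcal{G})$ via $\mathcal{G}\cong\mathcal{G}_\Lambda\times\mathcal{G}_{\Lambda^c}$. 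With this correction your limit argument goes through.

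For comparison, the paper states the theorem without proof; the authors' commented-out sketch runs in the opposite direction, invoking the universal property of $\mathfrak{A}_n=\varinjlim\mathfrak{A}_\Lambda$ and building compatible morphisms $\psi_\Lambda:\mathfrak{A}_\Lambda\to C^*(\mathcal{G})$ by sending the clock and shift generators $u_x,v_x$ at each site to explicit functions on $\mathcal{G}$. That route avoids any discussion of limits of groupoids, trading functoriality for a generators-and-relations check.
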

	%\begin{proof}
		
		\begin{comment}
		By the universal property of the inductive limit, we need only to find morphisms $\psi_\Lambda:\mathfrak{A}_\Lambda \rightarrow C^*(\mathcal{G})$ such that $\psi_{\Lambda'} \circ i_{\Lambda,\Lambda'} = \psi_\Lambda$, for any $\Lambda \subset \Lambda'$, where the map $i_{\Lambda,\Lambda'}$ is defined as
		\[
		i_{\Lambda,\Lambda'}(a) = a \otimes \mathbbm{1}_{\Lambda'\setminus \Lambda},
		\]
		for any $a\in \mathfrak{A}_\Lambda$. Since each matrix algebra $M_n(\mathbb{C})$ is generated by the matrices $u,v$ given by 
		\begin{equation*}
		u = 
		\begin{pmatrix}
		1 & 0 & \dots & \dots & 0\\
		0 & z & 0 & \dots & 0 \\
		0 & 0 & z^2 & \dots & 0 \\
		\vdots & \ddots &\ddots& \ddots & \vdots \\
		0& \dots & \dots& 0 & z^{n-1}
		\end{pmatrix}
		\;\;\;\;
		\text{and}
		\;\;\;\;
		v = 
		\begin{pmatrix}
		0 & 1 & \dots & \dots & 0\\
		0 & 0 & 1 & \dots & 0 \\
		\vdots & \ddots &\ddots& \ddots & \vdots  \\
		0 & 0 &\dots& \dots & 1 \\
		1 & \dots & \dots& 0 & 0
		\end{pmatrix},
		\end{equation*}
		where $z = e^{\frac{2\pi i}{n}}$. We just need to define the morphisms $\psi_{\Lambda}$ on the copies of $u$ and $v$ for each point of $\Lambda$. Consider the functions
		\[
		f_{u_x}(\bm{\sigma}) = \begin{cases}
		\sigma_x & \bm{\sigma} = (\sigma, 1) \\
		0 & o.w.
		\end{cases}
		\;\;\;\;
		\text{and}
		\;\;\;\;
		f_{v_x}(\bm{\sigma}) = \begin{cases}
		1 & \bm{\sigma} = (\sigma, \varsigma), \varsigma_x= e^{\frac{2 \pi i}{n}}, \varsigma_y = 1, y\neq x   \\
		0 & o.w.
		\end{cases}
		\]
		\end{comment}
	%\end{proof}
	
	\begin{definition}
		A function $\phi:\mathcal{P}_f(\Z^d)\rightarrow C^*(\mathcal{G})$ is called an interaction if $\phi(\Lambda)\coloneqq \phi_\Lambda = \phi_\Lambda^*$ and $\phi_\Lambda \in C^*(\mathcal{G}_\Lambda)$. An interaction is said to have short-range if there is $R>0$ such that if $\diam{X}>R$, then $\phi_X = 0$. Otherwise, it will be said that the interaction has long-range.
	\end{definition}

\begin{definition}
An interaction $\phi: \mathcal{P}_f(\mathbb{Z}^d)\rightarrow \mathfrak{U}$ is called classical when it is an interaction and $\phi_X \in C(\Omega_X)$ for all $X$. 
\end{definition}
	
	Before introducing the definition of boundary condition for the quantum statistical mechanics case, we first will derive a random representation for the Gibbs density operator. These representations appeared previously; see \cite{Aiz1, Aiz2, Gin, Fich1, Fich2, Iof, Rob} for a non-exhaustive list of papers where point processes were used to study quantum spin systems. In this case, the group $\bigoplus_{x\in \mathbb{Z}^d}\mathbb{Z}_2$ as the set $\mathcal{P}_f(\mathbb{Z}^d)$ with the product given by the symmetric difference. Let $\sigma_x^{(i)}$, $i=1,2,3$ be a copy of the Pauli matrices in the local algebra $\mathfrak{A}_x$. We will assume from now on that the interactions are homogeneous polynomials on the generators, i.e., 
	
	\[
	\phi_X = \sum_{ A\cup B = X } c_{A,B}\sigma_A^{(3)}\sigma_B^{(1)}.
	\]
	
	We will always use the representation where the Pauli operator $\sigma_A^{(3)}$ appears at the l.h.s of the operator $\sigma_B^{(1)}$. Since the interaction term $\phi_X$ must be self-adjoint we have that the constants must satisfy $\overline{c_{A,B}} = (-1)^{|A\cap B|}c_{A,B}$.
	We can write the coefficients as $c_{A,B} = r_{A,B}e^{i\pi \theta_{A,B}}$, where the number $\theta_{A,B}\in \{0,1, \pm 1/2\}$. The Hamiltonian operator is defined as
	\[
	H_\Lambda(\phi) \coloneqq \sum_{X \subset \Lambda }\phi_X.
	\]
	By definition of interaction, the Hamiltonian is a well-defined self-adjoint element of the algebra $C^*(\mathcal{G}_\Lambda)$. Following Bratteli and Robinson \cite{Bra2}, we define a surface term corresponding to the interaction between the region $\Lambda$ and the exterior region $\Lambda^c$
	\[
	W_\Lambda(\phi) \coloneqq \sum_{\substack{X \cap \Lambda \neq \emptyset \\ X \cap \Lambda^c \neq \emptyset}} \phi_X.
	\]
The surface energy term is not well defined for all possible interactions since, different from the Hamiltonian function $H_\Lambda$ described previously, it is potentially a sum of infinitely many terms, thus bringing convergence issues to the definition. Notice that this can be circumvented, for instance, if one assumes that the interaction has a short-range. For the general case, suitable decaying conditions can be made as a hypothesis to ensure convergence of the sum. We will focus only on the short-range case.  
	
	\begin{comment}
	\begin{definition}
	An interaction $\phi$ is said to be regular if for every $x \in \Z^d$ the following holds
	\[
	p_\phi(x) \coloneqq \sum_{X \ni x} \|\phi_X\| < +\infty.
	\]
	The interaction is said to be absolutely regular if 
	\[
	\sup_{x \in \Z^d} p_\phi(x) < +\infty.
	\]
	\end{definition}
	
	Assuming that the interaction is regular we get 
	
	\[
	\|W_\Lambda\| \leq \sum_{\substack{ X \cap \Lambda \neq \emptyset \\ X \cap \Lambda^c \neq \emptyset}}\|\phi_X\|  = \sum_{x \in }
	\]
	\end{comment}

	\begin{proposition}\label{poirep}
		The Gibbs density operator
		$e^{-\beta (H_\Lambda(\phi)+W_\Lambda(\phi))}$ has a  Poisson point process representation. 
	\end{proposition}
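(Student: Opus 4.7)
The plan is to split the operator $H_\Lambda(\phi)+W_\Lambda(\phi)$ into a diagonal part and a spin-flip part, expand the exponential via the Dyson (Duhamel) series, and recognise the resulting sum of ordered-time integrals as an expectation over a marked Poisson point process on the imaginary-time interval $[0,\beta]$. Concretely, regrouping the terms of $H_\Lambda(\phi)+W_\Lambda(\phi)$ by the set $B$ of flipped sites gives
\begin{equation*}
H_\Lambda(\phi)+W_\Lambda(\phi) \;=\; D + F, \qquad D = \sum_A c_{A,\emptyset}\,\sigma_A^{(3)}, \qquad F = \sum_{B\neq\emptyset} O_B\,\sigma_B^{(1)},
\end{equation*}
with $O_B = \sum_A c_{A,B}\,\sigma_A^{(3)}$ living in the abelian subalgebra generated by the $\sigma^{(3)}$'s; here the sums in $A$ run over those sets for which $A\cup B$ is the support of a term contributing to $H_\Lambda$ or $W_\Lambda$. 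The short-range hypothesis and the finiteness of $\Lambda$ ensure that only finitely many $B$'s occur and that $D$, $F$ are bounded self-adjoint elements of $C^*(\mathcal{G}_\Lambda)$.

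Next I would write out the Dyson expansion in the interaction picture relative to $D$,
\begin{equation*}
e^{-\beta(D+F)} \;=\; e^{-\beta D}\sum_{n=0}^{\infty}(-1)^n \int_{0\leq t_1\leq\cdots\leq t_n\leq\beta} F(t_n)\,F(t_{n-1})\cdots F(t_1)\,dt_1\cdots dt_n,
\end{equation*}
where $F(t):=e^{tD}Fe^{-tD}$ and norm convergence follows from $\|F\|<\infty$. Expanding each $F(t_i)=\sum_{B}O_B(t_i)\sigma_B^{(1)}$ turns the $n$-th term into a finite sum, indexed by ordered sequences $(B_1,\ldots,B_n)$ of flipping sets, of operator-valued integrals over the simplex $\Delta_n(\beta):=\{0\leq t_1\leq\cdots\leq t_n\leq\beta\}$.

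The key step is to reinterpret this double sum as an expectation over a marked Poisson point process $\xi$ on $[0,\beta]\times\{B\subset\Lambda:B\neq\emptyset,\ O_B\neq 0\}$ with intensity $dt\otimes \nu$, where $\nu(\{B\})=\mu_B>0$ can be chosen freely. Using the classical identity
\begin{equation*}
\sum_{n\geq 0}\int_{\Delta_n(\beta)}\prod_{i=1}^n g(t_i,B_i)\,dt_1\cdots dt_n \;=\; e^{\beta M}\,\mathbb{E}_\xi\!\left[\prod_{(t,B)\in\xi}\frac{g(t,B)}{\mu_B}\right] \qquad \bigl(M=\textstyle\sum_B\mu_B\bigr),
\end{equation*}
with the right-hand operator product taken in time order, one arrives at $e^{-\beta(D+F)} = e^{\beta M}\,\mathbb{E}_\xi[K_\Lambda(\xi)]$, where $K_\Lambda(\xi)$ is the time-ordered product of the insertions $-O_{B_i}(t_i)\sigma_{B_i}^{(1)}/\mu_{B_i}$ sandwiched between free propagators $e^{-(t_{i+1}-t_i)D}$ and flanked by $e^{-\beta D}$ at the endpoints. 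Freedom in the choice of the rates $\mu_B$ produces a family of equivalent representations.

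The main obstacle I anticipate is not the expansion itself but the sign and phase bookkeeping. The coefficients $c_{A,B}=r_{A,B}\,e^{i\pi\theta_{A,B}}$ carry phases that must be reconciled with the anticommutation of $\sigma^{(3)}$ and $\sigma^{(1)}$ on common sites and with the collapse $\sigma_{B_1}^{(1)}\cdots\sigma_{B_n}^{(1)} = \pm\,\sigma_{B_1\triangle\cdots\triangle B_n}^{(1)}$ to a single flip indexed by the symmetric difference---precisely the group structure on $\bigoplus_{x\in\Z^d}\Z_2$ flagged just before the statement. Organising these signs so that the final representation decouples into a positive Poisson intensity times a real-valued Feynman--Kac-type weight is the delicate part; the self-adjointness constraint $\overline{c_{A,B}}=(-1)^{|A\cap B|}c_{A,B}$ is exactly what allows such a reorganisation to go through.
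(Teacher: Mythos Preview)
Your proposal is correct and reaches the same Poisson point process representation, but the route differs from the paper's in two respects worth noting.

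First, the expansion mechanism: you invoke the Dyson (Duhamel) series in the interaction picture relative to the diagonal part $D$, and then recognise the sum of ordered-simplex integrals directly as a Poisson expectation via the standard identity $\sum_n\int_{\Delta_n(\beta)}\cdots = e^{\beta M}\mathbb{E}_\xi[\cdots]$. The paper instead uses the Lie--Trotter product formula, expands the product as a sum over binary strings $j\in\{0,1\}^n$, identifies this sum as an integral with respect to a \emph{Bernoulli} point process $N_n$, and then passes to the Poisson limit $N_n\Rightarrow N$ as $n\to\infty$. Your route is shorter and is the one most commonly seen in the stochastic-geometry literature (Aizenman--Nachtergaele, Ioffe); the paper's detour through Bernoulli processes is more elementary in the sense that it avoids Duhamel entirely and builds the point-process structure from scratch, at the cost of an extra limiting argument.

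Second, the level of the computation: the paper does not stop at an operator identity but evaluates both sides at a groupoid element $(\omega_{\Lambda_R},\iota_X)$ using the resolution of the identity $\mathbbm{1}=\sum_{\omega}\delta_\omega$ and the multiplication rule \eqref{eq3:poirep}. This produces the explicit scalar integrand $e^{-\beta E_\Lambda(\cdots)}U_\Lambda(\cdots)$ that is then used throughout Section~3. Your operator-level formula $e^{-\beta(D+F)}=e^{\beta M}\mathbb{E}_\xi[K_\Lambda(\xi)]$ is equivalent, but to connect with the later DLR analysis you would still need to take matrix elements in the $\sigma^{(3)}$-basis, which is where your ``sign and phase bookkeeping'' is actually carried out in the paper: the collapse $\sigma_{B_1}^{(1)}\cdots\sigma_{B_n}^{(1)}$ and the diagonal action of the $O_B$'s are exactly what produce the functions $S_{\Lambda_R}(\alpha)$ and $E_\Lambda(\alpha)$ on paths. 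So the obstacle you flag is real but is resolved by passing to groupoid coordinates rather than by any further algebraic trick.
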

	\begin{proof}
		Let $\Z^d \sqcup \Z^d$ be the disjoint union of two copies of the lattice $\Z^d$. For each finite subset $X \subset \Z^d\sqcup \Z^d$ we can associate a pair of finite sets $A, B \subset \Z^d$ such that $X = A \sqcup B$. For $\Lambda \subset \Z^d$, define the set $F_\Lambda = \{X \in \mathcal{F}(\Z^d \sqcup \Z^d): (A\cup B)\cap \Lambda \neq \emptyset, B\neq \emptyset\}$. The Hamiltonian can be written as 
		\be
		H_\Lambda(\phi) + W_\Lambda(\phi)= H_\Lambda^{0}(\phi)+\sum_{X\in F_\Lambda} r_{X}S_X.
		\ee
		The operators $S_X$ above are defined as 
		\[
		S_X = e^{i\pi \theta_X}\sigma_A^{(3)}\sigma_B^{(1)} =  e^{i\pi \theta_X}\prod_{x \in A}\sigma_x^{(3)}\prod_{y \in B}\sigma_y^{(1)}, 
		\]
		where $\theta_X$ is the number that makes $S_X$ self-adjoint. The Hamiltonian $H_\Lambda^{(0)}(\phi)+W_\Lambda^{(0)}(\phi)$ is the classical part of the initial quantum Hamiltonian $H_\Lambda(\phi)+W_\Lambda(\phi)$, defined as
		\[
		H_\Lambda^{(0)}(\phi)+W_\Lambda^{(0)}(\phi) = \sum_{A\cap \Lambda \neq \emptyset} J_A\sigma_A^{(3)},
		\]
		where the constants $J_A = r_A e^{i\pi \theta_A}$ are real numbers, since the self-adjointeness of the interaction $\phi$ implies that $\theta_A=0$ or $1$. Remember that we are assuming that the interaction has a short-range, thus there are only finitely many constants $r_X$ that are different from $0$ for $X\in F_\Lambda$. The Lie-Trotter formula yields
		\be\label{eq1:poirep}
		e^{-\beta (H_\Lambda(\phi)+W_\Lambda(\phi))}= e^{\beta }\lim_{n\rightarrow \infty} \left[e^{-\frac{\beta}{n+1} (H_\Lambda^{(0)}(\phi)+W_\Lambda^{(0)}(\phi))}\left(\left(1-\frac{\beta}{n}\right)\mathbbm{1}+\frac{\beta}{n}V_{\phi,\Lambda}\right)\right]^n e^{-\frac{\beta}{n+1} (H_\Lambda^{(0)}(\phi)+W_\Lambda^{(0)}(\phi))},
		\ee
		where $V_{\phi,\Lambda} = -\sum_{X \in F_\Lambda}r_X S_X$. The sequence above can be expanded as
		
		\begin{align}\label{eq2:poirep}
		\left[e^{-\frac{\beta}{n+1} (H_\Lambda^{(0)}(\phi)+W_\Lambda^{(0)}(\phi))}\left(\left(1-\frac{\beta}{n}\right)\mathbbm{1}+\frac{\beta}{n}V_{\phi,\Lambda}\right)\right]^n& = \nonumber\\
		\sum_{j\in \{0,1\}^n} \prod_{m=1}^n e^{-\frac{\beta}{n} (H_\Lambda^{(0)}(\phi)+W_\Lambda^{(0)}(\phi))}&V_{\phi,\Lambda}^{j(m)}\left(1-\frac{\beta}{n+1}\right)^{n-\sum_{m=1}^n j(m)}\left(\frac{\beta}{n}\right)^{\sum_{m=1}^n j(m)}.
		\end{align}
		We can break the sum depending on each $j$ in the r.h.s of Equation \eqref{eq2:poirep} yielding us
		\begin{align*}
			\left[e^{-\frac{\beta}{n+1} (H_\Lambda^{(0)}(\phi)+W_\Lambda^{(0)}(\phi))}\left(\left(1-\frac{\beta}{n}\right)\mathbbm{1}+\frac{\beta}{n}V_{\phi,\Lambda}\right)\right]^n &e^{-\frac{\beta}{n+1} (H_\Lambda^{(0)}(\phi)+W_\Lambda^{(0)}(\phi))} = \\
			\sum_{\ell =0}^n\sum_{\substack{j\in \{0,1\}^n \\ |j(m)= 1|=\ell}} \left(\prod_{m=1}^n e^{-\frac{\beta}{n+1} (H_\Lambda^{(0)}(\phi)+W_\Lambda^{(0)}(\phi))}V_{\phi,\Lambda}^{j(m)}\right)&e^{-\frac{\beta}{n+1} (H_\Lambda^{(0)}(\phi)+W_\Lambda^{(0)}(\phi))}\left(1-\frac{\beta}{n}\right)^{n-\ell}\left(\frac{\beta}{n}\right)^{\ell}.
		\end{align*}
		Take some $j\in \{0,1\}^n$, where $|j(m)=1|=\ell$. Enumerate the points where $j$ is not zero into an increasing order $m_1 <\dots < m_\ell$. Hence,
		\begin{align*}
		\left(\prod_{m=1}^n e^{-\frac{\beta}{n+1} (H_\Lambda^{(0)}(\phi)+W_\Lambda^{(0)}(\phi))}V_{\phi,\Lambda}^{j(m)}\right)&e^{-\frac{\beta}{n+1} (H_\Lambda^{(0)}(\phi)+W_\Lambda^{(0)}(\phi))} = \\ &\left(\prod_{j=1}^\ell e^{-\frac{\beta(m_{j}-m_{j-1})}{n+1} (H_\Lambda^{(0)}(\phi)+W_\Lambda^{(0)}(\phi))}V_{\phi,\Lambda}\right)e^{-\frac{\beta(n+1-m_{\ell})}{n+1} (H_\Lambda^{(0)}(\phi)+W_\Lambda^{(0)}(\phi))}.
		\end{align*}

		Let $R$ be the range of the interaction $\phi$, and define $\Lambda_R = \{x \in \mathbb{Z}^d: \exists y \in \Lambda \text{  s.t.  } \|x-y\|\leq R\} $. We will calculate the value of the operator on the l.h.s of \eqref{eq1:poirep} in a point $(\omega_{\Lambda_R},\iota_X)$ and expand the r.h.s. using the  partition of identity
		\[
		\mathbbm{1} = \sum_{\omega_{\Lambda_R} \in \Omega_{\Lambda_R}}\delta_{\omega_{\Lambda_R}},
		\]
		where $\delta_{\omega_{\Lambda_R}}$ are delta functions on the unit $\omega_{\Lambda_R}\in \mathcal{G}^{(0)}_{\Lambda_R}$. These delta functions satisfy the property
		\be\label{eq3:poirep}
		\delta_{\omega_{\Lambda_R}} * f * \delta_{\eta_{\Lambda_R}} = f(\eta_{\Lambda_R},\iota_Y) \delta_{(\eta_{\Lambda_R},\iota_Y)},
		\ee
		where $Y$ is such that $\iota_Y\eta_{\Lambda_R}=\omega_{\Lambda_R}$, and $f\in C_c(\mathcal{G}_{\Lambda_R})$. Thus,
		\begin{align*}
			&\delta_{\iota_X\omega_{\Lambda_R}}*\left(\prod_{m=1}^n e^{-\frac{\beta}{n+1} (H_\Lambda^{(0)}(\phi)+W_\Lambda^{(0)}(\phi))}V_{\phi,\Lambda}^{j(m)}\right)e^{-\frac{\beta}{n+1} (H_\Lambda^{(0)}(\phi)+W_\Lambda^{(0)}(\phi))}*\delta_{\omega_{\Lambda_R}} =\\
			&\sum_{\substack{\omega_{{\Lambda_R},k} \in \Omega_{\Lambda_R} \\ 1\leq k \leq \ell}}\delta_{\iota_X\omega_{\Lambda_R}}*\left(\prod_{k=1}^\ell e^{-\frac{\beta(m_k -m_{k-1})}{n+1}(H_\Lambda^{(0)}(\phi)+W_\Lambda^{(0)}(\phi))(\omega_{{\Lambda_R},k})}\delta_{\omega_{{\Lambda_R},k}}*V_{\phi,\Lambda}\right)e^{-\frac{\beta(n+1-m_{\ell})}{n+1}(H_\Lambda^{(0)}(\phi)+W_\Lambda^{(0)}(\phi))}*\delta_{\omega_{\Lambda_R}} =\\
			&\sum_{\substack{\omega_{{\Lambda_R},k} \in \Omega_{\Lambda_R} \\ 1\leq k \leq \ell}}e^{-\beta E_\Lambda((\omega_{{\Lambda_R},1},m_1/(n+1)), \dots, (\omega_{{\Lambda_R},\ell},m_\ell/(n+1)))}\delta_{\iota_X\omega_{\Lambda_R}}*\left( \prod_{k=1}^\ell\delta_{\omega_{{\Lambda_R},k}}*V_{\phi,\Lambda}\right)*\delta_{\omega_{\Lambda_R}} ,
		\end{align*}
		where  the function $E_\Lambda$ is
		\[
		E_\Lambda((\omega_{{\Lambda_R},1},m_1/(n+1)), \dots, (\omega_{{\Lambda_R},\ell},m_\ell/(n+1)) = \frac{1}{n+1}\sum_{k=1}^{\ell+1} (m_k-m_{k-1})(H_\Lambda^{(0)}(\phi)+W_\Lambda^{(0)}(\phi))(\omega_{{\Lambda_R},k}),
		\]
		with $\omega_{\Lambda_R,\ell+1}=\omega_{\Lambda_R}$. We have
		\begin{align*}
			\delta_{\iota_X\omega_{\Lambda_R}}*\left( \prod_{k=1}^\ell\delta_{\omega_{{\Lambda_R},k}}*V_{\phi,\Lambda}\right)*\delta_{\omega_{\Lambda_R}}=\left(\prod_{k=1}^\ell V_{\phi,\Lambda}(\omega_{\Lambda_R,k},\iota_{X_k})\right)\delta_{\iota_X\omega_{\Lambda_R}=\omega_{{\Lambda_R},1}}\delta_{\omega_{\Lambda_R}},
		\end{align*}
		by the property \eqref{eq3:poirep}. Define the following function
		\[
		U_\Lambda(\omega_{{\Lambda_R},1},\dots,\omega_{{\Lambda_R},n};(\omega_{\Lambda_R},\iota_X)) = \prod_{k=1}^\ell V_{\phi,\Lambda}(\omega_{\Lambda_R,k},\iota_{X_k}).
		\]
		The r.h.s of \eqref{eq2:poirep} becomes
		\begin{align}\label{eq4:poirep}
			\sum_{\ell =0}^n\sum_{\substack{j\in \{0,1\}^n \\ |j(m)= 1|=\ell}}\sum_{\substack{\omega_{\Lambda,m} \in \Omega_\Lambda \\ 1\leq m \leq \ell}}&e^{-\beta E_\Lambda((\omega_{{\Lambda_R},1},m_1/n), \dots, (\omega_{{\Lambda_R},\ell},m_\ell/n))} \times \nonumber\\
			&U_\Lambda(\omega_{{\Lambda_R},1},\dots,\omega_{{\Lambda_R},\ell};(\omega_{\Lambda_R},\iota_X))\delta_{\iota_X\omega_{\Lambda_R}=\omega_{{\Lambda_R},1}}\left(1-\frac{\beta}{n}\right)^{n-\ell}\left(\frac{\beta}{n}\right)^{\ell}.
		\end{align}
		Consider the following Bernoulli point process
		\[
		N_n(x,C) = \sum_{j=1}^n \xi_{n,j}(x)\delta_{(\zeta_j(x),\frac{j}{n+1})}(C), 
		\]
		where $C$ is a Borel subset of $	\tilde{\Omega}_{{\Lambda_R}} = \Omega_{{\Lambda_R}}\times [0,1]$ and $\{\xi_{i,j}\}_{i \in \mathbb{N}, 1\leq j \leq n}$ and $\{\zeta_j\}_{j\in\mathbb{N}}$ are  families of i.i.d random variables with distribution
		\[
		\mathbb{P}(\xi_{n,j}=1) = 1- \mathbb{P}(\xi_{n,j}=0)= \frac{\beta}{n}\quad \text{  and  } \quad \mathbb{P}(\zeta_j=\omega_{\Lambda_R})= 1.
		\] 
		Defining the time ordering functions $T_n:[0,1]^n \rightarrow [0,1]^n$
		\[
		T_n(t_1,\dots,t_n) = (t'_1,\dots, t'_n),
		\]
		where the r.h.s is a permutation of $(t_1,\dots,t_n)$ satisfying $t'_i \leq t'_{i+1}$, $i=1,\dots, n-1$. One can write the function $T_n$ more explicity using characteristics functions in the following way
		\[
		T_n(t_1,\dots,t_n) = \sum_{p\in\mathfrak{S}_n}\left(\prod_{i=1}^{n-1} \mathbbm{1}_{\{x>0\}}(t_{p(i)}-t_{p(i+1)})\right)(t_{p(1)},\dots,t_{p(n)}),
		\] 
		where $\mathfrak{S}_n$ is the permutation group of $n$ points. Notice that this function is zero whenever we have two coordinates $t_i$ and $t_j$ that are equal. Since the set of points where there are two equal coordinates have measure zero on $[0,1]^n$ with respect to the Lebesgue measure, we can redefine it to give a nonzero value. Using the functions $T_n$ we can construct a time ordering function $T$ on the coproduct (See Appendix A). By a similar procedure, by introducing the functions $f_n:\tilde{\Omega}_\Lambda^n \rightarrow \mathbb{R}$ given by
		\[
		f_n((\omega_{{\Lambda_R},i},t_i)_n;(\omega_{\Lambda_R},\iota_X)) = e^{-\beta E_\Lambda((\omega_{{\Lambda_R},i},t_i)_n;(\omega_{\Lambda_R},\iota_X))}U_\Lambda((\omega_{{\Lambda_R},i},t_i)_n;(\omega_{\Lambda_R},\iota_X)),
		\]
		where $(\omega_{\Lambda_R,i},t_i)_n = ((\omega_{{\Lambda_R},1},t_1),\dots,(\omega_{{\Lambda_R},n},t_n))$ with
		\begin{align*}
		&E_\Lambda((\omega_{\Lambda_R,i},t_i)_n;(\omega_{\Lambda_R},\iota_X)) = \sum_{i=0}^n (t'_{i+1}-t'_i)(H_\Lambda^{(0)}(\phi)+W_\Lambda^{(0)}(\phi))(\omega_{{\Lambda_R},i}) \quad \text{and}\quad \\
		 &U_\Lambda((\omega_{{\Lambda_R},i},t_i)_n;(\omega_{\Lambda_R},\iota_X)) = \prod_{k=1}^n V_{\phi,\Lambda}(\omega_{\Lambda_R,k},\iota_{X_k}),
		\end{align*}
		
		where $t_0 = 0$, $t_{n+1}=1$, $\omega_{\Lambda_R,0} = \iota_X \omega_{\Lambda_R}$, and $\omega_{\Lambda_R,n} = \omega_{\Lambda_R}$ we can create a function $f$ defined on the coproduct of $\tilde{\Omega}_{\Lambda_R}^n$, for $n\geq 0$. The expression \eqref{eq4:poirep} is then, an integral of the function $f$ with respect to the Binomial point process. Since the Binomial point process converges in distribution to a Poisson point process we have that 
		\[
		e^{-\beta (H_\Lambda(\phi)+W_\Lambda(\phi))}(\omega_{\Lambda_R},\iota_X) = \lim_{n\rightarrow \infty}\int_{\Omega} f \circ N_n(\omega) d\mathbb{P}(\omega) = \int_\Omega f \circ N (\omega) d\mathbb{P}(\omega) 
		\]
		
	\end{proof}
	
	We know that the function $U_\Lambda$ can be expanded by using the fact that $V_{\phi,\Lambda}= \sum_{X\in F_\Lambda} r_X S_X$, in the following way
	\begin{align*}
	U_\Lambda((\omega_{\Lambda_R,i},t_i)_n;(\omega_{\Lambda_R},\iota_X)) &= \prod_{k=1}^n V_{\phi,\Lambda}(\omega_{\Lambda_R,k},\iota_{X_k}) \\
	& = \prod_{k=1}^n\left(\sum_{X \in F_\Lambda}r_X S_X(\omega_{\Lambda_R,k},\iota_{X_k})\right) \\
	& = \sum_{\substack{X_j \in F_\Lambda \\ 1\leq j \leq n}} \prod_{k=1}^n r_{X_j}S_{X_j}(\omega_{\Lambda_R,k},\iota_{X_k}).
	\end{align*}
	The Poisson point process representation in this case is a rigorous path integral for quantum spin systems. We will proceed now to write in a way that the analogy with the paths is more transparent. Using the integration formula for Poisson point processes in Proposition \ref{integformpoirep:app1}, we know
	
	\begin{align*}
	\int_{\mathbb{N}(X)} f dN &= \sum_{n\geq 0} \frac{\beta^n}{n!}\int_{[0,1]^n}\sum_{\omega_{\Lambda_R} \in \Omega_{\Lambda_R}^{n-1}}e^{-\beta E_\Lambda((\omega_{\Lambda_R,i},t_i)_n;(\omega_{\Lambda_R},\iota_X))}U_\Lambda((\omega_{\Lambda_R,i},t_i)_n;(\omega_{\Lambda_R},\iota_X))dt^n \\
	&= \sum_{n\geq 0} \frac{\beta^n}{n!}\int_{[0,1]^n}\sum_{\omega_{\Lambda_R} \in \Omega_{\Lambda_R}^{n-1}}\sum_{\substack{X_j \in F_\Lambda \\ 1\leq j \leq n}}e^{-\beta E_\Lambda((\omega_{\Lambda_R,i},t_i)_n;(\omega_{\Lambda_R},\iota_X))} \prod_{k=1}^n r_{X_j}S_{X_j}(\omega_{\Lambda_R,k},\iota_{X_k})dt^n.
	\end{align*}
	By Corollary \ref{Corol_ppp}, the formula above is just the integration of the map
	\[
	(\omega_{\Lambda_R,i},t_i,X_i)_n \mapsto  e^{-\beta E_\Lambda((\omega_{\Lambda_R,i},t_i)_n;(\omega_{\Lambda_R},\iota_X))} \prod_{k=1}^n S_{X_j}(\omega_{\Lambda_R,k},\iota_{X_k}),
	\]
	with respect to the Poisson Point process $N_{\Lambda_R}\coloneqq \sum_{X\in F_\Lambda} N_X$, where each $N_X$ is a Poisson point process with intensity measure $\beta r_X dt$. The point process $N_{\Lambda_R}$ can be decomposed as the sum of two independent Poisson point processes
	\[
	N_{\Lambda_R}  = \sum_{\substack{X \subset \Lambda \\ X \in F_\Lambda}} N_X + \sum_{\substack{X \cap \Lambda_R\setminus \Lambda \neq \emptyset \\ X \in F_\Lambda}} N_X = N_\Lambda + N_{\Lambda_R\setminus\Lambda}.
	\] 
	
	\begin{lemma}\label{lemma_ppp_decomp}
		Let $f:\mathbb{N}(\tilde{\Omega}_{\Lambda_R} \times F_\Lambda)\rightarrow \mathbb{R}$ be a bounded measurable function. It holds
		\be\label{pp_decomp}
		\int f(\nu) d N_{\Lambda_R}(\nu) = \int\int f(\nu+\nu') dN_{\Lambda}(\nu) dN_{\Lambda_R\setminus\Lambda}(\nu')
		\ee
	\end{lemma}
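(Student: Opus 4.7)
The plan is to reduce the identity to two standard facts about independent Poisson point processes: the superposition theorem, which says that a sum of independent Poisson processes is again a Poisson process whose intensity is the sum of the individual intensities; and Fubini's theorem, which lets us rewrite the expectation of a function of the superposition as an iterated integral. The notation $\int f(\nu)\,dN_{\Lambda_R}(\nu)$ is read as the expectation $\mathbb{E}[f(N_{\Lambda_R})]$ under the law of the random point measure $N_{\Lambda_R}$ on $\tilde{\Omega}_{\Lambda_R}\times F_\Lambda$.

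First I would observe that the construction of $N_{\Lambda_R}=\sum_{X\in F_\Lambda}N_X$ as a sum of the independent Poisson processes $N_X$ (each with intensity $\beta r_X\,dt$) gives an unambiguous partition of the index set: every $X\in F_\Lambda$ either satisfies $X\subset \Lambda$ or $X\cap(\Lambda_R\setminus\Lambda)\neq\emptyset$, and these two conditions are mutually exclusive because $F_\Lambda$ already enforces $(A\cup B)\cap\Lambda\neq\emptyset$ while the second alternative forces a point strictly outside $\Lambda$. Grouping the independent summands accordingly yields
\[
N_{\Lambda_R}=\underbrace{\sum_{\substack{X\subset\Lambda\\X\in F_\Lambda}} N_X}_{N_\Lambda}+\underbrace{\sum_{\substack{X\cap\Lambda_R\setminus\Lambda\neq\emptyset\\X\in F_\Lambda}} N_X}_{N_{\Lambda_R\setminus\Lambda}},
\]
and since the $N_X$ are jointly independent, the two grouped sums $N_\Lambda$ and $N_{\Lambda_R\setminus\Lambda}$ are independent Poisson point processes on $\tilde{\Omega}_{\Lambda_R}\times F_\Lambda$.

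Next, by the superposition theorem applied to these two independent Poisson processes, the random measure $N_\Lambda+N_{\Lambda_R\setminus\Lambda}$ has the same law as $N_{\Lambda_R}$. Hence
\[
\int f(\nu)\,dN_{\Lambda_R}(\nu)=\mathbb{E}\bigl[f(N_{\Lambda_R})\bigr]=\mathbb{E}\bigl[f(N_\Lambda+N_{\Lambda_R\setminus\Lambda})\bigr].
\]
Finally, using the independence of $N_\Lambda$ and $N_{\Lambda_R\setminus\Lambda}$ together with Fubini's theorem, which applies because $f$ is bounded and measurable, the joint expectation factorizes into the iterated integral on the right-hand side of \eqref{pp_decomp}.

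The main obstacle, if any, is purely measure-theoretic: one has to verify that the map $(\nu,\nu')\mapsto f(\nu+\nu')$ is jointly measurable on the product $\mathbb{N}(\tilde{\Omega}_{\Lambda_R}\times F_\Lambda)^2$, so that Fubini applies. This follows because addition of counting measures is a measurable operation in the standard Borel structure on the space of locally finite point measures, and the composition of a measurable map with the bounded measurable $f$ remains measurable. Once these routine checks are in place, the identity is immediate.
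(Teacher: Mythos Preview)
Your proof is correct and takes a genuinely different, more conceptual route than the paper. The paper proves the identity by explicit computation: it expands the inner and then the outer integral on the right-hand side via the series formula of Corollary~\ref{Corol_ppp}, reorganizes the resulting double sum $\sum_{k,m}$ as $\sum_{n}\sum_{k+m=n}$, and then uses a combinatorial decomposition of $[0,1]^n$ into products $[0,1]^k\times[0,1]^{n-k}$ (with the attendant binomial count) to recombine everything into the single series for the left-hand side. Your argument bypasses this bookkeeping by appealing directly to independence and Fubini. One small remark: you do not actually need the superposition theorem here. Since $N_{\Lambda_R}$ is \emph{defined} as $\sum_{X\in F_\Lambda} N_X$, the identity $N_{\Lambda_R}=N_\Lambda+N_{\Lambda_R\setminus\Lambda}$ is an equality of random variables, not merely an equality in law, so the step $\mathbb{E}[f(N_{\Lambda_R})]=\mathbb{E}[f(N_\Lambda+N_{\Lambda_R\setminus\Lambda})]$ is immediate. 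The paper's computational approach has the virtue of being entirely self-contained within the elementary framework of its appendix; yours is shorter and makes the underlying probabilistic structure transparent.
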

	\begin{proof}
		Corollary \ref{Corol_ppp} gives us the following formula for the innermost integral in the r.h.s of Equation \eqref{pp_decomp}
		\begin{align*}
			\int f(\nu+\nu') dN_{\Lambda}(\nu) = \sum_{k\geq 0}\frac{\beta^k}{k!}\int_{[0,1]^k}\sum_{\substack{\sigma_{\Lambda_R,i}\in \Omega_{\Lambda_R}^k \\ X_i\subset \Lambda, X_i\in F_\Lambda}} f\left(\sum_{i=1}^k\delta_{(\sigma_{\Lambda_R,i},t_i,X_i)}+\nu'\right)\prod_{i=1}^k r_{X_i}dt^k.
		\end{align*}
	Using the linearity of the integral and again using the formula given in Corollary \ref{Corol_ppp}, we get
		\begin{align*}
		&\int\left(\int_{[0,1]^k}\sum_{\substack{\sigma_{\Lambda_R,i}\in \Omega_{\Lambda_R}^n \\ X_i\subset \Lambda, X_i\in F_\Lambda}} f\left(\sum_{i=1}^k\delta_{(\sigma_{\Lambda_R,i},t_i,X_i)}+\nu'\right)\prod_{i=1}^k r_{X_i}dt^k\right) dN_{\Lambda_R\setminus\Lambda}(\nu') = \\
		&\sum_{m\geq 0}\frac{\beta^m}{m!}\int_{[0,1]^m}\int_{[0,1]^k}\sum_{\substack{\sigma_{\Lambda_R,i}\in \Omega_{\Lambda_R}^k \\ X_i\subset \Lambda, X_i\in F_\Lambda}}\sum_{\substack{\omega_{\Lambda_R,i}\in \Omega_{\Lambda_R}^m \\ Y_i\cap \Lambda_R\setminus\Lambda\neq \emptyset, Y_i\in F_\Lambda}} f\left(\sum_{i=1}^k\delta_{(\sigma_{\Lambda_R,i},t_i,X_i)}+\sum_{i=1}^m\delta_{(\omega_{\Lambda_R,i},t_i,Y_i)}\right)\prod_{i=1}^k r_{X_i}\prod_{i=1}^m r_{Y_i}dt^kdt^m. 
		\end{align*}
	Thus, by rearranging the terms, we can write the total integral as the sum over $n\geq 0$ of all $k+m = n$ terms 
		\begin{align*}
			&\int\int f(\nu+\nu') dN_{\Lambda}(\nu) dN_{\Lambda_R\setminus\Lambda}(\nu') = \\
			&\sum_{n\geq 0}\frac{\beta^n}{n!}\sum_{k+m=n}\frac{n!}{k!m!}\int_{[0,1]^k}\int_{[0,1]^m}\sum_{\substack{\sigma_{\Lambda_R,i}\in \Omega_{\Lambda_R}^k, \omega_{\Lambda_R,i}\in \Omega_{\Lambda_R}^m \\ X_i\subset \Lambda, X_i\in F_\Lambda \\ Y_i\cap \Lambda_R\setminus\Lambda\neq \emptyset, Y_i\in F_\Lambda}} f\left(\sum_{i=1}^k\delta_{(\sigma_{\Lambda_R},t,X_i)}+\sum_{i=1}^m\delta_{(\omega_{\Lambda_R},t,Y_i)}\right)\prod_{i=1}^k r_{X_i}\prod_{i=1}^m r_{Y_i}dt^kdt^m.
		\end{align*}
			We can break the hypercube $[0,1]^n$ using the subsets $A\subset \{1,\dots,n\}$ in the following way
			\[
			[0,1]^n = \bigcup_{A\subset \{1,\dots,n\}}\{(t_{i_1},\dots,t_{i_p}):i_j\in A\}\times \{(t_{i_1},\dots,t_{i_q}):i_j\in A^c\},
			\]
			thus
			\[
			\int_{[0,1]^n} =\sum_{k=0}^n \binom{n}{k}\int_{[0,1]^{n-k}}\int_{[0,1]^{k}}dt^{n-k}dt^k,
			\]
			yielding us
			\begin{align*}
			\int\int f(\nu+\nu') dN_{\Lambda}(\nu) dN_{\Lambda_R\setminus\Lambda}(\nu') = \sum_{n\geq 0}\frac{\beta^n}{n!}\int_{[0,1]^n}\sum_{\substack{\sigma_{\Lambda_R,i}\in \Omega_{\Lambda_R}^n\\ X_i\in F_\Lambda}} f\left(\sum_{i=1}^n\delta_{(\sigma_{\Lambda_R},t,X_i)}\right)\prod_{i=1}^n r_{X_i}dt^n.
			\end{align*}
		Corollary \ref{Corol_ppp} allows us to finish the proof. 
	\end{proof}
	
	The random representation for the Gibbs density operator has the defining feature of always coming with a preferred permutation for the times for the arrivals: it is always in increasing order, realized by the composition with the time ordering function $T$ introduced earlier. Hence a good way of interpreting this random representation is through the notion of what is known as a \emph{path integral}. This interpretation could be made rigorous through the definition of a measure on the space of cadlag functions, as in \cite{Bil}. We preferred the representation presented here, as an integral over the space of point measures, since we found it more simple. Nonetheless, the integral over paths idea motivates us to introduce the following notations for the functions being integrated, but first, a few observations must be made. Notice that the functions $S_X$ satisfy
		\[
	S_X((\omega_{\Lambda_R},\iota_X)) = \begin{cases}
		e^{i\pi\theta_X}\prod_{x\in A}(\iota_B\omega_{\Lambda_R})_x & X=B \\
		0 & o.w.
	\end{cases}
	\]
	Thus not every point measure $\alpha_{\Lambda_R}\in \mathbb{N}(\tilde{\Omega}_{\Lambda_R}\times \mathcal{P}(\Lambda_R \sqcup \Lambda_R))$ will contribute to the integral representation, only those that, after the time ordering operation, are coherent with respect to the sets appearing in the jumps. We can make this rigorous by introducing the following subset of $\mathbb{N}(\tilde{\Omega}_\Lambda\times \mathcal{P}(\Lambda_R\sqcup \Lambda_R))$
	\be\label{path_def}
	\mathcal{P}_{\Lambda_R}^{\omega_{\Lambda_R},X} \coloneqq \left\{\alpha_{\Lambda_R} = \sum_{i=1}^n\delta_{(\omega_{\Lambda_R,i},t_i,X_i)}, n\geq 0: \omega_{\Lambda_R,1} = \omega_\Lambda, \omega_{\Lambda_R,n}= \iota_X\omega_\Lambda, \omega_{\Lambda_R, i-1} = \iota_{B_i}\omega_{\Lambda_R,i} , 1\leq i\leq n\right\},
	\ee
	Given another point measure $\xi_{\Lambda_R}$, we can also introduce the following set 
	\be\label{path_bc_def}
	\mathcal{P}_{\Lambda_R}^{\omega_{\Lambda_R},X,\xi_{\Lambda_R}} \coloneqq \xi_{\Lambda_R}+\mathcal{P}_{\Lambda_R}^{\omega_{\Lambda_R},X}.
	\ee
	Although some elements of the set above may not generate coherent paths, it will be clear that in our applications it will always be the case. Each point measure of $\mathcal{P}_{\Lambda_R}^{\omega_{\Lambda_R},X}$ can be viewed as a path by rearranging the jumps following the increasing order of the time, i.e.,
	\[
	\alpha_{\Lambda_R}(t) = \omega_{\Lambda_R,i}, \quad \text{ if} \;\; t_{i-1}\leq t <t_i,
	\]
	where we are supposing, in the definition above, that the times $t_i$ are already ordered, see the Figure %\ref{path}.

 %\begin{figure}
  %   \centering
   %  \includegraphics{}
    % \caption{Caption}
    % \label{fig:path}
 %\end{figure}
	%\textcolor{red}{COLOCAR FIGURA AQUI}. 
	
	We will use, for the path, the same notation as in the point measure. Finally, the following notation will be used from now on
		\begin{align*}
	&E_{\Lambda_R}((\omega_{\Lambda_R,i},t_i)_n;(\omega_{\Lambda_R},\iota_X)) = \sum_{i=0}^n (t'_{i+1}-t'_i)H_{\Lambda_R}^{(0)}(\phi)(\omega_{{\Lambda_R},i}) =\int_{\alpha_{\Lambda_R}} H_{\Lambda_R}^{(0)}(\phi) \quad \text{and}\quad \\
	&S_{\Lambda_R}(\alpha_R) = \prod_{i=1}^n e^{i\pi \theta_{X_i}}\prod_{x\in A_i}(\iota_{B_i}\omega_{\Lambda,i})_{x}.
	\end{align*}
	For every bounded measurable function $f:\mathbb{N}(\tilde{\Omega}_{\Lambda_R}\times F_\Lambda)\rightarrow \mathbb{C}$, let us introduce the measure 
	\be\label{equation_measure_definition}
	\int_{\mathcal{P}_{\Lambda_R}^{\omega_{\Lambda_R},X}} f(\alpha_{\Lambda_R}) d\nu_{\phi,\Lambda_R}(\alpha_{\Lambda_R})  = \sum_{n\geq 0}\frac{\beta^n}{n!}\int_{[0,1]^n}\sum_{\substack{\omega_{\Lambda_R,i}\in \Omega_{\Lambda_R}, \\ X_i \in F_{\Lambda_R}}}f(\omega_{\Lambda_R,i},t_i, X_i) 
	\prod_{i=1}^nr_{X_i}dt^n, 
	\ee
	Notice that this measure can be related to the integration with respect to the Poisson point process $N_{\Lambda_R}$, introduced previously in this section. We get the expression,
	\[
	e^{-\beta H_{\Lambda_R}(\phi)}(\omega_{\Lambda_R},X)= e^{\beta}\int_{\mathcal{P}_{\Lambda_R}^{\bm{\omega}_{\Lambda_R},X}} e^{-\beta \int_{\alpha_{\Lambda_R}} H_{\Lambda_R}^{(0)}(\phi)}S(\alpha_{\Lambda_R})d\nu_{\phi,\Lambda_R}(\alpha_{\Lambda_R}).
	\]
	
	\section{The DLR states and the Gibbs states}
	
	Another definition that is more suitable for classical systems was introduced by Dobrushin \cite{Dob} and Lanford and Ruelle \cite{Lan} and nowadays is called the DLR equation. 
\begin{definition}
Let $\Omega = \{E\}^{\mathbb{Z}^d}$, where $|E| < \infty$, with the product topology, and consider $\mathcal{F}$ the Borel $\sigma$-algebra. Consider $\mathcal{F}_{\Lambda^c}$, for $\Lambda$ a finite subset of $\mathbb{Z}^d$, the smallest $\sigma-algebra$ such that $\{\pi_i: i \in \Lambda^c\}$, the projections, are measurable. Let $\mu$ be a probability measure on $(\Omega, \mathcal{F})$. We say that $\mu$ satisfies the DLR equations if, and only if, for all $f \in C(\Omega)$, we have:
\[
\mathbb{E}_\mu(f|\mathcal{F}_{\Lambda^c})(\omega) = \mu_{\Lambda, \beta}^\omega(f)\;\; \mu\;\; a.e.
\]
Where $\mathbb{E}_\mu(\cdot|\mathcal{F}_{\Lambda^c})$ is the conditional expectation with respect to the $\sigma$-algebra $\mathcal{F}_{\Lambda^c}$, and $ \mu_{\Lambda, \beta}^\omega$ is the local Gibbs measure.
\end{definition}
For more on DLR equations, with more details and in more general contexts, see \cite{Geo,Vel}. 
	The quantum Gibbs state for the interaction given by the Hamiltonian $H_{\Lambda_R}(\phi)$ is given by the following expression
	\[
	\mu_{\beta,\phi,\Lambda_R}(f) = \frac{1}{Z_{\beta,\phi,\Lambda_R}}\sum_{\sigma_{\Lambda_R}\in \Omega_{\Lambda_R}}f*e^{-\beta H_{\Lambda_R}(\phi)}(\sigma_{\Lambda_R}),
	\]
	where $Z_{\beta,\phi,\Lambda} = \sum_{\sigma_{\Lambda_R}}e^{-\beta H_{\Lambda_R}(\phi)}(\sigma_{\Lambda_R})$ is the partition function. We proceed to give a special decomposition of this state in terms of the Poisson point process $N_{\Lambda_R}$. By the definition of the product in a groupoid, we get
	\begin{align*}
		f* e^{-\beta H_{\Lambda_R}(\phi)}(\sigma_{\Lambda_R}) &= \sum_{(\omega_{\Lambda_R},X)\in \mathcal{G}_{\Lambda_R}^{\sigma_{\Lambda_R}}} f(\omega_{\Lambda_R},X)e^{-\beta H_{\Lambda_R}(\phi)}(\sigma_{\Lambda_R}, X) \\
		&= e^{\beta}\sum_{(\omega_{\Lambda_R}, X)\in \mathcal{G}_{\Lambda_R}^{\sigma_{\Lambda_R}}} \int_{\mathcal{P}_{\Lambda_R}^{\sigma_{\Lambda_R},X}} f(\omega_{\Lambda_R},X)e^{-\beta\int_{\alpha_{\Lambda_R}} H_{\Lambda_R}^{(0)}(\phi)}S_{\Lambda_R}(\alpha_{\Lambda_R})d\nu_{\phi,\Lambda_R}(\alpha_{\Lambda_R})
	\end{align*}
	Thus
	
	\[
	\mu_{\beta,\phi,\Lambda_R}(f) = \frac{1}{e^{-\beta}Z_{\beta,\phi,\Lambda_R}}\sum_{\sigma_{\Lambda_R}\in \Omega_{\Lambda_R}}\sum_{(\omega_{\Lambda_R},X)\in \mathcal{G}_{\Lambda_R}^{\sigma_{\Lambda_R}}} \int_{\mathcal{P}_{\Lambda_R}^{\sigma_{\Lambda_R},X}} f(\omega_{\Lambda_R},X)e^{-\beta\int_{\alpha_{\Lambda_R}} H_{\Lambda_R}^{(0)}(\phi)}S_{\Lambda_R}(\alpha_{\Lambda_R})d\nu_{\phi,\Lambda_R}(\alpha_{\Lambda_R})
	\]
	where we can write the partition function in the following representation
	\[
	e^{-\beta}Z_{\beta,\phi,\Lambda_R} = \sum_{\sigma_{\Lambda_R}\in \Omega_{\Lambda_R}}\int_{\mathcal{P}_{\Lambda_R}^{\sigma_{\Lambda_R}}} e^{-\beta\int_{\alpha_{\Lambda_R}} H_{\Lambda_R}^{(0)}(\phi)}S_{\Lambda_R}(\alpha_{\Lambda_R})d\nu_{\phi,\Lambda_R}(\alpha_{\Lambda_R})
	\]	 
	Although the above expression seems cumbersome at first sight, it satisfies a decomposition property similar to the DLR equations. To show this, first, notice that the classical Hamiltonian $H_{\Lambda_R}^{(0)}(\phi)$ can be written as
	\begin{align*}
		H_{\Lambda_R}^{(0)}(\phi)= \sum_{A \cap \Lambda \neq \emptyset} J_A \sigma_{A}^{(3)} +\sum_{A \subset \Lambda_R\setminus\Lambda}J_A \sigma_A^{(3)} = H_\Lambda^{(0)}(\phi)+W_\Lambda^{(0)}(\phi) + H_{\Lambda_R\setminus\Lambda}^{(0)}(\phi).
	\end{align*}
	Yielding us
	\begin{align*}
		\int_{\alpha_{\Lambda_R}} H_{\Lambda_R}^{(0)}(\phi) = \int_{\alpha_{\Lambda_R}}(H_\Lambda^{(0)}(\phi)+W_\Lambda^{(0)}(\phi))+\int_{\alpha_{\Lambda_R}} H_{\Lambda_R\setminus\Lambda}^{(0)}(\phi).
	\end{align*}
	
	Thus, we can break a path $\alpha_\Lambda$ into two paths, one with jumps only on sets $B \cap \Lambda_R\setminus\Lambda \neq \emptyset$, that we will call $\alpha_{\Lambda_R\setminus\Lambda}$, and the jumps occurring only inside $\Lambda$, that we will call $\alpha_{\Lambda}$. Notice that since there are no changes in the points $x \in \Lambda_R\setminus\Lambda$ configurations when the jump occurs at times in $\alpha_{\Lambda}$, this implies that
	\[
	\int_{\alpha_{\Lambda_R}} H_{\Lambda_R\setminus\Lambda}^{(0)}(\phi) = \int_{\alpha_{\Lambda_R\setminus\Lambda}} H_{\Lambda_R\setminus\Lambda}^{(0)}(\phi).
	\] 
	Since we cannot remove the dependence of the Hamiltonian in the jumps occurring in the path $\alpha_{\Lambda_R\setminus\Lambda}$, we will write the other term as
	\[
	\int_{\alpha_{\Lambda_R}}(H_\Lambda^{(0)}(\phi)+W_\Lambda^{(0)}(\phi)) = \int_{\alpha_\Lambda}H_{\Lambda}^{\alpha_{\Lambda_R\setminus\Lambda}}(\phi)
	\]
	A similar decomposition is available for the function $S_{\Lambda_R}$,
	\begin{align*}
		S_{\Lambda_R}(\alpha_{\Lambda_R}) = \left(\prod_{i:B_i\cap\Lambda\neq \emptyset}e^{i\pi \theta_{X_i}}\prod_{x\in A_i}(\iota_{B_i}\omega_{\Lambda,i})_{x}\right)\left(\prod_{i:B_i\subset \Lambda_R\setminus \Lambda}e^{i\pi \theta_{X_i}}\prod_{x\in A_i}(\iota_{B_i}\omega_{\Lambda,i})_{x}\right) = S_\Lambda^{\alpha_{\Lambda_R\setminus\Lambda}}(\alpha_{\Lambda})S_{\Lambda_R\setminus\Lambda}(\alpha_{\Lambda_R\setminus\Lambda}).
	\end{align*}
	Due to Lemma \ref{lemma_ppp_decomp}, the following decomposition is possible
	\begin{align*}
		&e^{-\beta H_{\Lambda_R}(\phi)}(\omega_{\Lambda_R},X) = \\
		&\int_{\mathcal{P}_{\Lambda_R\setminus\Lambda}^{\omega_{\Lambda_R\setminus\Lambda},X\setminus\Lambda}}\left(\int_{\mathcal{P}_{
				\Lambda}^{\omega_\Lambda,X\cap \Lambda,\alpha_{\Lambda_R\setminus\Lambda}}}e^{-\beta \int_{\alpha_\Lambda}H_\Lambda^{\alpha_{\Lambda_R\setminus \Lambda}}(\phi)}S_{\Lambda}^{\alpha_{\Lambda_R\setminus\Lambda}}(\alpha_\Lambda)d\nu_{\phi,\Lambda}\right)e^{-\beta\int_{\alpha_{\Lambda_R\setminus\Lambda}}H_{\Lambda_R\setminus\Lambda}^{(0)}(\phi)}S_{\Lambda_R\setminus\Lambda}(\alpha_{\Lambda_R\setminus\Lambda})d\nu_{\phi,\Lambda_R\setminus\Lambda}.
	\end{align*}
	Thus by calling the innermost integral in the equation above
	\[
	D_{\beta,\Lambda}^{\alpha_{\Lambda_R\setminus\Lambda}}(\omega_\Lambda,X\cap\Lambda) = \int_{\mathcal{P}_{\Lambda}^{\omega_\Lambda,X\cap\Lambda,\alpha_{\Lambda_R\setminus\Lambda}}}e^{-\beta \int_{\alpha_\Lambda}H_\Lambda^{\alpha_{\Lambda_R\setminus\Lambda}}}S_\Lambda^{\alpha_{\Lambda_R\setminus\Lambda}}(\alpha_\Lambda)d\nu_{\phi,\Lambda}(\alpha_\Lambda),
	\]
	we get the following representation for the point process
	\[
	e^{-\beta H_{\Lambda_R}(\phi)}(\omega_{\Lambda_R},X) = \int_{\mathcal{P}_{\Lambda_R\setminus\Lambda}^{\omega_{\Lambda_R\setminus\Lambda},X\setminus\Lambda}}D_{\beta,\Lambda}^{\alpha_{\Lambda_R\setminus\Lambda}}(\omega_\Lambda,X\cap\Lambda)e^{-\beta\int_{\alpha_{\Lambda_R\setminus\Lambda}}H_{\Lambda_R\setminus\Lambda}^{(0)}(\phi)}S_{\Lambda_R\setminus\Lambda}(\alpha_{\Lambda_R\setminus\Lambda})d\nu_{\phi,\Lambda_R\setminus\Lambda}.
	\]
	The following expression for the product with a function $f \in C_c(\mathcal{G}_{\Lambda_R})$ is possible
	\begin{align*}
		&\sum_{\sigma_{\Lambda_R}\in \Omega_{\Lambda_R}}f* e^{-\beta H_{\Lambda_R}(\phi)}(\sigma_{\Lambda_R}) = \\
		&\sum_{\sigma_{\Lambda_R}\in \Omega_{\Lambda_R} }\sum_{(\omega_{\Lambda_R},X)\in \mathcal{G}_{\Lambda_R}^{\sigma_{\Lambda_R}}} \int_{\mathcal{P}_{\Lambda_R\setminus\Lambda}^{\sigma_{\Lambda_R\setminus\Lambda},X\setminus\Lambda}}f(\omega_{\Lambda_R},X)D_{\beta,\Lambda}^{\alpha_{\Lambda_R\setminus\Lambda}}(\sigma_{\Lambda},X\cap\Lambda)e^{-\beta\int_{\alpha_{\Lambda_R\setminus\Lambda}}H_{\Lambda_R\setminus\Lambda}^{(0)}(\phi)}S_{\Lambda_R\setminus\Lambda}(\alpha_{\Lambda_R\setminus\Lambda})d\nu_{\phi,\Lambda_R\setminus\Lambda}(\alpha_{\Lambda_R\setminus\Lambda}) 
	\end{align*}
	
	We can break the sum above into two sums, the first one summing over the arrows of $\mathcal{G}_{\Lambda_R\setminus\Lambda}^{\sigma_{\Lambda_R\setminus\Lambda}}$ and the second over $\mathcal{G}_\Lambda^{\sigma_\Lambda}$. A similar procedure can be made with the sum depending on the configuration space. Thus, by defining the following expression
	\be\label{f_gibbs_path}
	\mu^{\alpha_{\Lambda_R\setminus\Lambda}}_{\beta,\phi,\Lambda}(f)(\omega_{\Lambda_R\setminus\Lambda},X\setminus\Lambda) = \frac{1}{Z_{\beta,\phi,\Lambda}^{\alpha_{\Lambda_R\setminus \Lambda}}}\sum_{\sigma_\Lambda \in \Omega_\Lambda}f* D_{\beta,\Lambda}^{\alpha_{\Lambda_R\setminus\Lambda}}(\sigma_\Lambda),
	\ee
	where the normalization is given by the partition function
	\[
	Z_{\beta,\phi,\Lambda}^{\alpha_{\Lambda_R\setminus \Lambda}} = \sum_{\sigma_\Lambda \in \Omega_{\Lambda}}D_{\beta,\Lambda}^{\alpha_{\Lambda_R\setminus \Lambda}}(\sigma_\Lambda)
	\]
	Plugging this again in the equation, we get
	
	\begin{align*}
		&\sum_{\sigma_{\Lambda_R}\in \Omega_{\Lambda_R}}f* e^{-\beta(H_\Lambda(\phi)+W_\Lambda(\phi))}(\sigma_{\Lambda_R})= \\
		&\sum_{\sigma_{\Lambda_R}\in \Omega_{\Lambda_R} }\sum_{(\omega_{\Lambda_R},X)\in \mathcal{G}_{\Lambda_R}^{\sigma_{\Lambda_R}}} \int_{\mathcal{P}_{\Lambda_R\setminus\Lambda}^{\sigma_{\Lambda_R\setminus\Lambda},X\setminus\Lambda}}Z_{\beta,\phi,\Lambda}^{\alpha_{\Lambda_R\setminus \Lambda}}\mu^{\alpha_{\Lambda_R\setminus\Lambda}}_{\beta,\phi,\Lambda}(f)(\omega_{\Lambda_R\setminus\Lambda},X\setminus\Lambda)e^{-\beta\int_{\alpha_{\Lambda_R\setminus\Lambda}}G(\phi)}T_\Lambda(\alpha_{\Lambda_R\setminus\Lambda})d\nu_{\phi,\Lambda_R\setminus\Lambda}(\alpha_{\Lambda_R\setminus\Lambda})= \\
		&\sum_{\sigma_{\Lambda_R}\in \Omega_{\Lambda_R} }\sum_{(\omega_{\Lambda_R},X)\in \mathcal{G}_{\Lambda_R}^{\sigma_{\Lambda_R}}} \int_{\mathcal{P}_{\Lambda_R}^{\sigma_{\Lambda_R},X}}\mu^{\alpha_{\Lambda_R\setminus\Lambda}}_{\beta,\phi,\Lambda}(f)(\omega_{\Lambda_R},X)e^{-\beta\int_{\alpha_{\Lambda_R}}H^{(0)}(\phi)}S_\Lambda(\alpha_{\Lambda_R})d\nu_{\phi,\Lambda_R\setminus\Lambda}(\alpha_{\Lambda_R\setminus\Lambda}).
	\end{align*}
	
	where 
	
	\[
	\mu^{\alpha_{\Lambda_R\setminus\Lambda}}_{\beta,\phi,\Lambda}(f)(\omega_{\Lambda_R},X) = \mathbbm{1}(\omega_{\Lambda},X\cap \Lambda)\mu^{\alpha_{\Lambda_R\setminus\Lambda}}_{\beta,\phi,\Lambda}(f)(\omega_{\Lambda_R\setminus\Lambda},X\setminus\Lambda).
	\]
	Yielding us a decomposition similar to the one encountered in the usual classical case for the DLR equations. So, for this reason we will use the expression for the finite Gibbs functional as the one defined by Equation \eqref{f_gibbs_path}. But there is a problem with this functional, related to the way the densities $D_{\beta,\Lambda}^{\alpha_{\Lambda_R\setminus\Lambda}}$ are defined. 
	From now on, we will refer to the path $\alpha_{\Lambda_R\setminus\Lambda}$ as $\alpha_{\Lambda^c}$, in order to lighten the notation. The concatenation of paths can only be defined on paths $\alpha_{\Lambda}$ $\alpha'_{\Lambda}$
	where $\alpha_{\Lambda}(1)=\alpha'_{\Lambda}(0)$, by the following composition rule
	\[
	\alpha_\Lambda \circ \alpha'_{\Lambda}(t) =  \begin{cases}
		\alpha_\Lambda(2t) & 0\leq t\leq 1/2 \\
		\alpha'_\Lambda(2t-1) & 1/2 \leq t \leq 1.
	\end{cases}
	\]
	We can also define an involutive operation that just reverses the path. This is, given a path $\alpha_\Lambda$, we define $\alpha_\Lambda^{-1}$ by $\alpha^{-1}_\Lambda(t)= \alpha_\Lambda(1-t)$. We are ready to prove the next lemma:
	\begin{lemma}\label{lemma_property_path}
		For every $\alpha_{\Lambda^c}$ and $\alpha'_{\Lambda^c}$ such that $\alpha_{\Lambda^c}(1)=\alpha'_{\Lambda^c}(0)$ and $\beta,\beta'$ it holds that 
		\[
		D_{\beta,\Lambda}^{\alpha_{\Lambda^c}}*D_{\beta',\Lambda}^{\alpha'_{\Lambda^c}} = D_{\beta+\beta',\Lambda}^{\alpha_{\Lambda^c}\circ \alpha'_{\Lambda^c}},
		\]
		where $\alpha''_{\Lambda^c}$ is the gluing of the paths $\alpha_{\Lambda^c}$ and $\alpha'_{\Lambda^c}$.
	\end{lemma}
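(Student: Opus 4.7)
The plan is to unfold the groupoid convolution on the left-hand side and reassemble it as a single path integral matching the right-hand side. First, I would use the convolution formula on the transformation groupoid $\mathcal{G}_\Lambda$ to write
\[
(D_{\beta,\Lambda}^{\alpha_{\Lambda^c}}*D_{\beta',\Lambda}^{\alpha'_{\Lambda^c}})(\omega_\Lambda,X) = \sum_{X=X_1 X_2}D_{\beta,\Lambda}^{\alpha_{\Lambda^c}}(\omega_\Lambda,X_1)\,D_{\beta',\Lambda}^{\alpha'_{\Lambda^c}}(\iota_{X_1}\omega_\Lambda,X_2),
\]
and then substitute the path-integral definition of each factor. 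Each term becomes an iterated integral over $\mathcal{P}_\Lambda^{\omega_\Lambda,X_1,\alpha_{\Lambda^c}}$ and $\mathcal{P}_\Lambda^{\iota_{X_1}\omega_\Lambda,X_2,\alpha'_{\Lambda^c}}$; since the second path must start where the first ends, the pair $(\alpha_\Lambda,\alpha'_\Lambda)$ identifies with a single element of $\mathcal{P}_\Lambda^{\omega_\Lambda,X,\alpha_{\Lambda^c}\circ\alpha'_{\Lambda^c}}$ via the concatenation rule introduced immediately before the lemma. The sum over decompositions $X = X_1 X_2$ is then exactly the same as integrating out the intermediate configuration in the concatenation, so the left-hand side collapses into a single path integral.

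Next I would verify that the integrand recombines correctly under this identification. The sign factor $S_\Lambda^{(\cdot)}$ is a product over jumps where each contribution depends only on the label of the jump and the configuration just before it, so
\[
S_\Lambda^{\alpha_{\Lambda^c}}(\alpha_\Lambda)\,S_\Lambda^{\alpha'_{\Lambda^c}}(\alpha'_\Lambda) = S_\Lambda^{\alpha_{\Lambda^c}\circ\alpha'_{\Lambda^c}}(\alpha_\Lambda\circ\alpha'_\Lambda),
\]
because the jumps of the concatenated path are the disjoint union of the jumps of its two pieces. Similarly, the functional $\alpha\mapsto\int_\alpha H_\Lambda^{(\cdot)}$ is defined as a piecewise-constant integral and so it is additive under concatenation; after the time reparametrization implicit in $\circ$ one obtains
\[
\beta\int_{\alpha_\Lambda}H_\Lambda^{\alpha_{\Lambda^c}} + \beta'\int_{\alpha'_\Lambda}H_\Lambda^{\alpha'_{\Lambda^c}} = (\beta+\beta')\int_{\alpha_\Lambda\circ\alpha'_\Lambda}H_\Lambda^{\alpha_{\Lambda^c}\circ\alpha'_{\Lambda^c}}.
\]

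The remaining ingredient is a measure-theoretic identity for the path measure $\nu_{\phi,\Lambda}$. By an argument parallel to Lemma \ref{lemma_ppp_decomp}, with the spatial splitting $\Lambda_R=\Lambda\sqcup(\Lambda_R\setminus\Lambda)$ replaced by the temporal splitting $[0,1]=[0,1/2]\cup[1/2,1]$, combined with an elementary change of variables $t\mapsto 2t$, the measure $\nu_{\phi,\Lambda}$ at parameter $\beta+\beta'$ decomposes as the independent superposition of a $\beta$-intensity Poisson measure on the first half with a $\beta'$-intensity Poisson measure on the second half. The binomial identity needed to merge the two series $\sum\beta^n/n!$ and $\sum(\beta')^m/m!$ into $\sum(\beta+\beta')^k/k!$ then provides the final reindexing, and together with the sign and energy identities above this yields $D_{\beta+\beta',\Lambda}^{\alpha_{\Lambda^c}\circ\alpha'_{\Lambda^c}}(\omega_\Lambda,X)$ exactly.

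The main obstacle will be the time-reparametrization bookkeeping: reconciling the $1/2$-midpoint convention in the definition of $\circ$ with the $\beta,\beta'$-weighted splitting of the Poisson measure, and tracking how the Hamiltonian integrals, the sign factors, and the path intensities all transform consistently under this rescaling. Once that bookkeeping is set up, the remaining manipulations reduce to Fubini's theorem and a routine re-indexing of the convergent double series in the definition of $\nu_{\phi,\Lambda}$.
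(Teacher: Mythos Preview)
Your proposal is correct and follows essentially the same route as the paper: expand the groupoid convolution, insert the path-integral representation of each factor, and then reassemble via a change of variables in the measure $\nu_{\phi,\Lambda}$. The paper's proof is terser (it simply invokes Equation~\eqref{equation_measure_definition} and ``a change of variables'' at the last step), whereas you spell out the three pieces that change of variables must deliver---multiplicativity of $S_\Lambda^{(\cdot)}$, additivity of the energy functional, and the temporal Poisson decomposition---and correctly flag the $\beta/(\beta+\beta')$ versus $1/2$ reparametrization as the only point requiring care.
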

	\begin{proof}
		Given an element $(\sigma_\Lambda, X)$ of the groupoid $\mathcal{G}_\Lambda$, the product is given by
		\begin{align*}	
			D_{\beta,\Lambda}^{\alpha_{\Lambda^c}}*D_{\beta',\Lambda}^{\alpha'_{\Lambda^c}}(\sigma_\Lambda, X) = \sum_{(\omega_\Lambda,Y)\in \mathcal{G}_\Lambda^{\iota_X\sigma_\Lambda}}D_{\beta,\Lambda}^{\alpha_{\Lambda^c}}(\omega_\Lambda, Y)D_{\beta',\Lambda}^{\alpha'_{\Lambda^c}}(\sigma_\Lambda,X\Delta Y).	
		\end{align*}
		For each $(\omega_\Lambda,Y)$, using the random representation of the densities, we have
		\begin{align*}
			D_{\beta,\Lambda}^{\alpha_{\Lambda^c}}(\omega_\Lambda,Y)&D_{\beta',\Lambda}^{\alpha'_{\Lambda^c}}(\sigma_\Lambda, X\Delta Y)= \\ &\left(\int_{\mathcal{P}_{\Lambda}^{\omega_\Lambda,Y}}e^{-\beta \int_{\alpha_\Lambda}H_\Lambda^{\alpha_{\Lambda^c}}(\phi)}S_\Lambda^{\alpha_{\Lambda_{\Lambda^c}}}(\alpha_\Lambda)d\nu_{\phi,\Lambda}(\alpha_\Lambda)\right)\left(\int_{\mathcal{P}_{\Lambda}^{\sigma_\Lambda,X\Delta Y}}e^{-\beta' \int_{\alpha_\Lambda}H_\Lambda^{\alpha'_{\Lambda^c}}(\phi)}S_\Lambda^{\alpha'_{\Lambda^c}}(\alpha_\Lambda)d\nu_{\phi,\Lambda}(\alpha_\Lambda)\right)
		\end{align*}
		Using Equation \eqref{equation_measure_definition} and making a change of variables we conclude the proof. 
	\end{proof}
	
	 When the path consists of only one jump at time $t$, then we get the following representation
	
	\[
	D_{\beta,\Lambda}^{\alpha_{\Lambda^c}} = e^{-t\beta H_\Lambda^\omega}*e^{-(1-t)\beta H_\Lambda^\eta}
	\]
	
	Thus, by iterating the above formula, we get this nice representation for the densities in terms of the usual Gibbs densities.
	\[
	D_{\beta,\Lambda}^{\alpha_{\Lambda^c}} = \prod_{j=1}^n e^{-s_j\beta H_\Lambda^{\omega_j}},
	\]
	where $\sum_{j=1}^n s_j = 1$. We readily see that it is not true that every operator $D_{\beta,\Lambda}^{\alpha_{\Lambda^c}}$ is self-adjoint (this happens, for example, when the path is symmetric, i.e., $\alpha_\Lambda = \xi_\Lambda \circ \xi^{-1}_\Lambda$, for some path $\xi_\Lambda$). 
	\begin{comment}
	Thus, in order to get a genuine state, we will need a little more work. Notice that when a element of $C^*$ algebra is positive, then the continuous functional calculus can be used to show that every polynomial with positive coefficients is still a positive element. Then, we have that 
	\[
	\left(e^{-\beta s_1 H_\Lambda^{\omega_1}}+\dots+e^{-\beta s_n H_\Lambda^{\omega_n}}\right)^n = \sum_{f\in \mathfrak{S}_n} \prod_{j=1}^ne^{-\beta s_{f(j)} H_\Lambda^{\omega_{f(j)}}},
	\]
	is a positive operator. 
	\end{comment}
	Using the auxiliary measure defined Equation \eqref{equation_measure_definition}, we can introduce the following definition for the finite volume Gibbs states. First, remember that for each $\Lambda$, we have the following random representation for the operator Given $(\omega_{\Lambda^c},X)\subset \mathcal{G}_{\Lambda^c}$, we can introduce the linear functional 
	
	\begin{align*}
	\mu_{\beta,\phi,\Lambda}^{\omega,X}(f) &= \frac{\mathbbm{1}(\omega_{\Lambda},X\cap \Lambda)}{Z_{\beta,\phi,\Lambda}^{\omega,X}}\sum_{\sigma_\Lambda \in \Omega_\Lambda} f * D_{\beta,\Lambda}(\sigma_\Lambda\omega_{\Lambda^c},X)\\
	&=\frac{\mathbbm{1}(\omega_{\Lambda},X\cap \Lambda)}{Z_{\beta,\phi,\Lambda}^{\omega,X}}\sum_{\substack{\sigma_\Lambda \in \Omega_\Lambda\\ (\eta_\Lambda,Y) \in \mathcal{G}_\Lambda^{\sigma_\Lambda}}} f(\eta_\Lambda\omega_{\Lambda^c},X\cup Y) D_{\beta,\Lambda}(\sigma_\Lambda\omega_{\Lambda^c},X\cup Y) \\
	&= \frac{\mathbbm{1}(\omega_{\Lambda},X\cap \Lambda)}{Z_{\beta,\phi,\Lambda}^{\omega,X}}\sum_{\substack{\sigma_\Lambda \in \Omega_\Lambda\\ (\eta_\Lambda,Y) \in \mathcal{G}_\Lambda^{\sigma_\Lambda}}} \int_{\mathcal{P}^{\sigma_\Lambda\omega_{\Lambda^c},X\cup Y}}f(\eta_\Lambda\omega_{\Lambda^c},X\cup Y) e^{-\beta \int_{\alpha_\Lambda}H_\Lambda^{\alpha_{\Lambda^c}}(\phi)}S_\Lambda^{\alpha_{\Lambda^c}}(\alpha_\Lambda)d\nu_{\phi}(\alpha),
	\end{align*}
	where we suppressed the $\mathbb{Z}^d$ in the notation, and $\mathbbm{1}$ is the identity operator. They satisfy the consistency condition
	
	\begin{proposition}
		$\Lambda' \subset \Lambda$. Then, for any $f \in C_c(\mathcal{G})$ we have
		\be
		\mu_{\beta,\phi,\Lambda}^{\omega,X}(f) = \mu_{\beta,\phi,\Lambda}^{\omega,X}(\mu_{\beta,\phi,\Lambda'}^{(\cdot)}(f))
		\ee
	\end{proposition}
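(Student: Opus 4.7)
The plan is to replay the spatial decomposition used to derive \eqref{f_gibbs_path}, now applied to the pair $\Lambda' \subset \Lambda$ in place of $\Lambda \subset \Lambda_R$. The key observation is that the inner specification kernel $(\cdot) \mapsto \mu_{\beta,\phi,\Lambda'}^{(\cdot)}(f)$ depends only on degrees of freedom over $(\Lambda')^c = (\Lambda \setminus \Lambda') \sqcup \Lambda^c$. In the outer evaluation $\mu_{\beta,\phi,\Lambda}^{\omega,X}$, the part over $\Lambda^c$ is frozen to $(\omega, X)$, so the only varying data are those over the annulus $\Lambda \setminus \Lambda'$, and these are produced precisely by the random representation of the outer state.

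Concretely, I would expand both states using their path integral representations and then use Lemma~\ref{lemma_ppp_decomp} to split $N_\Lambda = N_{\Lambda'} + N_{\Lambda \setminus \Lambda'}$ and, correspondingly, each path $\alpha_\Lambda$ into its restrictions $\alpha_{\Lambda'}$ and $\alpha_{\Lambda \setminus \Lambda'}$. As in the derivation preceding Lemma~\ref{lemma_property_path}, the classical Hamiltonian decomposes additively,
\[
H_\Lambda^{\alpha_{\Lambda^c}}(\phi)(\alpha_\Lambda) = H_{\Lambda'}^{\alpha_{(\Lambda')^c}}(\phi)(\alpha_{\Lambda'}) + H_{\Lambda \setminus \Lambda'}^{\alpha_{\Lambda^c}}(\phi)(\alpha_{\Lambda \setminus \Lambda'}),
\]
where the effective boundary path seen by $\Lambda'$ is the concatenation $\alpha_{(\Lambda')^c} = \alpha_{\Lambda^c} \cup \alpha_{\Lambda \setminus \Lambda'}$, and the sign factor $S_\Lambda^{\alpha_{\Lambda^c}}$ splits multiplicatively in the same fashion. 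Performing the inner sum over $\sigma_{\Lambda'}$ and inner integration over $\alpha_{\Lambda'}$ then isolates exactly the product $Z_{\beta,\phi,\Lambda'}^{\alpha_{(\Lambda')^c}} \cdot \mu_{\beta,\phi,\Lambda'}^{\alpha_{(\Lambda')^c}}(f)$. The factor $Z_{\beta,\phi,\Lambda'}^{\alpha_{(\Lambda')^c}}$ is precisely the denominator in \eqref{f_gibbs_path}, so it cancels, and the remaining outer integration over $\alpha_{\Lambda \setminus \Lambda'}$ together with the sum over $\sigma_{\Lambda \setminus \Lambda'}$ reassembles exactly the representation of $\mu_{\beta,\phi,\Lambda}^{\omega,X}(f)$.

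The main obstacle is bookkeeping rather than analysis. One must verify that (i) the unit indicator $\mathbbm{1}(\omega_\Lambda, X \cap \Lambda)$ in the outer state factorises compatibly as $\mathbbm{1}(\omega_{\Lambda'}, X \cap \Lambda') \cdot \mathbbm{1}(\omega_{\Lambda \setminus \Lambda'}, X \cap (\Lambda \setminus \Lambda'))$, so that the inner unit constraint required to evaluate $\mu_{\beta,\phi,\Lambda'}^{(\cdot)}$ is automatically respected; (ii) the surface interaction between $\Lambda'$ and $\Lambda \setminus \Lambda'$, originally absorbed in $W_\Lambda^{(0)}(\phi)$, is correctly reassigned to the effective boundary Hamiltonian $H_{\Lambda'}^{\alpha_{(\Lambda')^c}}$ so that the additive decomposition above holds exactly; and (iii) the partition functions telescope as
\[
Z_{\beta,\phi,\Lambda}^{\omega,X} = \int Z_{\beta,\phi,\Lambda'}^{\alpha_{(\Lambda')^c}} \, e^{-\beta \int_{\alpha_{\Lambda \setminus \Lambda'}} H_{\Lambda \setminus \Lambda'}^{(0)}(\phi)} \, S_{\Lambda \setminus \Lambda'}(\alpha_{\Lambda \setminus \Lambda'}) \, d\nu_{\phi, \Lambda \setminus \Lambda'},
\]
which is what makes the two normalisations cancel cleanly. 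Each of these is a direct consequence of the same algebraic factorisations, but keeping all three consistent simultaneously — while tracking the dual role of $\alpha_{\Lambda \setminus \Lambda'}$ as a dynamical path in the annulus and as a random boundary for $\Lambda'$ — is the heart of the argument.
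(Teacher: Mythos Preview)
Your approach is correct but differs from the paper's. The paper works directly at the level of the density kernels $D_{\beta,\Lambda}$ without re-expanding into paths: it writes $G_{\Lambda}=D_{\beta,\Lambda}/Z_{\beta,\phi,\Lambda}^{\omega,X}$, expands both nested functionals as groupoid sums against $G_{\Lambda}$ and $G_{\Lambda'}$, splits the outer sum over $\mathcal{G}_\Lambda$ into $\mathcal{G}_{\Lambda\setminus\Lambda'}\times\Omega_{\Lambda'}$, and then closes the argument by invoking the classical Gibbs-kernel identity (Lemma~3.28 of Le~Ny)
\[
G_{\Lambda'}(\sigma_\Lambda,\,\cdot\,)\sum_{\eta_{\Lambda'}\in\Omega_{\Lambda'}}G_\Lambda(\eta_{\Lambda'}\sigma_{\Lambda\setminus\Lambda'}\omega_{\Lambda^c},\,\cdot\,)=G_\Lambda(\sigma_\Lambda,\,\cdot\,).
\]
So the paper outsources the telescoping step to a known result about classical specifications, applied pathwise to the densities.

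You instead replay the Poisson-point-process decomposition (Lemma~\ref{lemma_ppp_decomp}) for the pair $\Lambda'\subset\Lambda$, obtaining the same telescoping identity for $Z$ and the additive/multiplicative splittings for $H^{(0)}$ and $S$ by the same mechanism that produced \eqref{f_gibbs_path}. This is more self-contained --- it does not leave the paper --- and it makes explicit why the annular path $\alpha_{\Lambda\setminus\Lambda'}$ simultaneously carries the outer dynamics and furnishes the random boundary for $\Lambda'$. The cost is the heavier bookkeeping you already flagged in points (i)--(iii); the paper's route is shorter precisely because the Le~Ny lemma packages that bookkeeping once and for all.
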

	\begin{proof}
		Let us start by the definition of the integral in the region $\Lambda'$. We have that
		\begin{align*}
			\mu_{\beta,\phi,\Lambda'}^{\omega,X}(f) &= \frac{1}{Z_{\beta,\phi,\Lambda'}^{\omega,X}}\sum_{\sigma_{\Lambda'}\in\Omega_{\Lambda'}} f * D_{\beta,\Lambda'}(\sigma_{\Lambda'}\omega_{\Lambda'^c},X) \\
			&=\sum_{\sigma_{\Lambda'}\in\Omega_{\Lambda'}}\sum_{(\omega_{\Lambda'},Y)\in \mathcal{G}_{\Lambda'}^{\sigma_{\Lambda'}}} f(\omega,X\cup Y) 	G_{\Lambda'}(\sigma_{\Lambda'}\omega_{\Lambda'^c},X\cup Y),
		\end{align*}
		where
		\[
		G_{\Lambda'}(\sigma_{\Lambda'}\omega_{\Lambda'^c},X\cup Y) = \frac{D_{\beta,\Lambda'}(\sigma_{\Lambda'}\omega_{{\Lambda'}^c},X\cup Y)}{Z_{\beta,\phi,\Lambda'}^{\omega,X}}
		\]
		
		When one integrates with respect to the outside box $\Lambda$, one gets 
		
		\begin{align*}
			\mu_{\beta,\phi,\Lambda}^{\omega,X}(\mu_{\beta,\phi,\Lambda'}^{(\cdot)}(f)) &=  \sum_{\substack{\sigma_\Lambda\in\Omega_\Lambda\\(\omega_\Lambda,Y)\in \mathcal{G}_\Lambda^{\sigma_\Lambda}}} \mu_{\beta,\phi,\Lambda'}^{\omega,Y}(f) G_\Lambda(\sigma_\Lambda\omega_{\Lambda^c},X\cup Y) \\
			&=\sum_{\substack{\sigma_\Lambda\in\Omega_\Lambda\\(\omega_\Lambda,Y)\in \mathcal{G}_\Lambda^{\sigma_\Lambda}}}\left( \sum_{\substack{\eta_{\Lambda'}\in\Omega_{\Lambda'}\\(\tau_{\Lambda'},Y')\in \mathcal{G}_{\Lambda'}^{\eta_{\Lambda'}}}}f(\tau_{\Lambda'}\omega_{\Lambda^c},X\cup Y\Delta Y')G_{\Lambda'}(\eta_{\Lambda'}\sigma_{\Lambda^c},X\cup Y \Delta Y')\right) G_\Lambda(\sigma_\Lambda\omega_{\Lambda^c},X\cup Y)
		\end{align*}
		
		By breaking the sum and remembering the term that the identity $\mathbbm{1}$ is only different from zero in the unit space, we can break the sum 
		
		\[
		\sum_{\substack{\sigma_\Lambda\in\Omega_\Lambda\\(\omega_\Lambda,Y)\in \mathcal{G}_\Lambda^{\sigma_\Lambda}}} = \sum_{\substack{\sigma_{\Lambda\setminus\Lambda'}\in\Omega_{\Lambda\setminus\Lambda'}\\(\omega_{\Lambda\setminus\Lambda'},Y\setminus \Lambda')\in \mathcal{G}_{\Lambda\setminus \Lambda'}^{\sigma_{\Lambda\setminus \Lambda'}}}} \sum_{\sigma_{\Lambda'}\in\Omega_{\Lambda'}}
		\]
		And changing the order of the sum we get
		\[
		\mu_{\beta,\phi,\Lambda}^{\omega,X}(\mu_{\beta,\phi,\Lambda'}^{(\cdot)}(f)) =  \sum_{\substack{\sigma_{\Lambda^c}\in\Omega_{\Lambda^c}\\(\omega_{\Lambda\setminus\Lambda'},Y\setminus \Lambda')\in \mathcal{G}_{\Lambda\setminus \Lambda'}^{\sigma_{\Lambda\setminus \Lambda'}}}} F(\omega_{\Lambda'^c},X\cup Y\setminus \Lambda')
		\]
		
		where 
		
		\[
		F(\omega_{\Lambda'^c},X\cup Y\setminus \Lambda')= \sum_{\sigma_{\Lambda'}\in\Omega_{\Lambda'}}\left( \sum_{\substack{\eta_{\Lambda'}\in\Omega_{\Lambda'}\\(\tau_{\Lambda'},Y')\in \mathcal{G}_{\Lambda'}^{\eta_{\Lambda'}}}}f(\tau_{\Lambda'}\omega_{\Lambda^c},X\cup Y\Delta Y')G_{\Lambda'}(\eta_{\Lambda'}\sigma_{\Lambda^c},X\cup Y \Delta Y')\right)G_\Lambda(\sigma_\Lambda\omega_{\Lambda^c},X\cup Y)
		\]
		
		We can interchange the sums and make a trivial change of variables, yielding 
		\[
		F(\omega_{\Lambda'^c},X\cup Y\setminus \Lambda')= \sum_{\substack{\sigma_{\Lambda'}\in\Omega_{\Lambda'}\\(\omega_{\Lambda'},X\cap \Lambda')\in \mathcal{G}_{\Lambda'}^{\eta_{\Lambda'}}}}f(\omega_\Lambda,X\cup Y \Delta Y')G_{\Lambda'}(\sigma_\Lambda,X\cup Y \Delta Y')\left( \sum_{\eta_{\Lambda'}\in\Omega_{\Lambda'}}G_\Lambda(\eta_{\Lambda'}\sigma_{\Lambda\setminus\Lambda'}\omega_{\Lambda^c},X\cup Y)\right) 
		\]
		
		By the use of the consistency condition and Lemma 3.28 of \cite{LeNy} we get 
		
		\[
		G_{\Lambda'}(\sigma_\Lambda,X\cup Y \Delta Y')\left( \sum_{\eta_{\Lambda'}\in\Omega_{\Lambda'}}G_\Lambda(\eta_{\Lambda'}\sigma_{\Lambda\setminus\Lambda'}\omega_{\Lambda^c},X\cup Y)\right)   = G_\Lambda(\sigma_\Lambda,X\cup Y)
		\]
		
		Finishing the proof. 
		
	\end{proof}
	
	As one may notice, not every linear functional defined in this way can be a state. Indeed, since the identity operator $\mathbbm{1}$ is zero outside $\Omega$, we have $\mu^{\omega,X}_{\beta,\phi,\Lambda}$ every time $X\neq \emptyset$. But they are important to define proper maps in the $\text{C}^*$-algebra $C^*(\mathcal{G}_{\Lambda^c})$. Also, if we define the linear subspace
	\[
	V_{\omega,X} = (C^*(\mathcal{G}_\Lambda)\sigma_X^{(1)})\otimes C(\Omega_{\Lambda\setminus X}),
	\]
	for $X\subset \Lambda^c$. This is a linear subspace invariant by the adjoint operation, but is not an algebra for the usual product, since $(\sigma_X^{(1)})^2 = 1$. We proceed to show that the maps when $X=\emptyset$ are states. Some other properties are easily seen to be satisfied by these linear functionals, for instance, they are obviously continuous. 
	
	\begin{proposition}
		For every $\omega \in \Omega$, we have that $\mu_{\beta,\phi,\Lambda}^\omega$ is a state.
	\end{proposition}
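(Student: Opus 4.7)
The plan is to verify the three defining properties of a state for $\mu^\omega := \mu^{\omega,\emptyset}_{\beta,\phi,\Lambda}$: linearity, normalization, and positivity. Linearity is immediate from linearity of the convolution $f \mapsto f * D_{\beta,\Lambda}$ and of the finite sum over $\sigma_\Lambda$, so the substance lies in the other two properties.

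The structural observation I rely on is that the Poisson path integral hidden inside $D_{\beta,\Lambda}(\sigma_\Lambda\omega_{\Lambda^c})$ reassembles, via Proposition \ref{poirep}, into the operator exponential of the Hamiltonian inside $\Lambda$ with boundary configuration $\omega_{\Lambda^c}$ frozen, which I denote $H_\Lambda^\omega(\phi) := H_\Lambda(\phi) + W_\Lambda(\phi)\big|_{\omega_{\Lambda^c}}$. Since $H_\Lambda^\omega(\phi)$ is self-adjoint in the finite matrix algebra $C^*(\mathcal{G}_\Lambda) \simeq M_{n^{|\Lambda|}}(\mathbb{C})$, its exponential is a strictly positive element. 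Furthermore, unwinding the isomorphism of Proposition 2.1 extended from a single site to the set $\Lambda$, the functional $g \mapsto \sum_{\sigma_\Lambda} g(\sigma_\Lambda\omega_{\Lambda^c},\emptyset)$ on functions supported at the unit space agrees with the unnormalized operator trace on $M_{n^{|\Lambda|}}(\mathbb{C})$; in plain terms, summing the ``diagonal'' groupoid values computes the operator trace. Combining these two facts yields
\[
\mu^\omega(f) \;=\; \frac{\mathrm{Tr}\!\bigl(f\cdot e^{-\beta H_\Lambda^\omega(\phi)}\bigr)}{\mathrm{Tr}\!\bigl(e^{-\beta H_\Lambda^\omega(\phi)}\bigr)}.
\]
Normalization $\mu^\omega(\mathbbm{1})=1$ is then the very definition of $Z_{\beta,\phi,\Lambda}^\omega$, and positivity follows from the cyclic invariance of the trace on a matrix algebra:
\[
\mu^\omega(f^*f) \;=\; \frac{1}{Z_{\beta,\phi,\Lambda}^\omega}\,\mathrm{Tr}\!\Bigl(e^{-\beta H_\Lambda^\omega(\phi)/2}\, f^*f\, e^{-\beta H_\Lambda^\omega(\phi)/2}\Bigr),
\]
which is a trace of a positive operator divided by a positive constant, hence non-negative.

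The main obstacle will be the bookkeeping required to make the two identifications above honest. Concretely, one must verify that the splitting $H_{\Lambda_R}^{(0)} = H_\Lambda^{(0)} + W_\Lambda^{(0)} + H_{\Lambda_R\setminus\Lambda}^{(0)}$ used in the derivation preceding Lemma \ref{lemma_property_path}, together with the path representation of Proposition \ref{poirep}, indeed assembles to $e^{-\beta H_\Lambda^\omega(\phi)}$ when the configuration outside $\Lambda$ is held fixed at $\omega_{\Lambda^c}$, rather than merely an abstract groupoid kernel; and that the map $g \mapsto \sum_{\sigma_\Lambda} g(\sigma_\Lambda\omega_{\Lambda^c},\emptyset)$ corresponds precisely to the matricial trace under the representation of $C^*(\mathcal{G}_\Lambda)$. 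Once these purely combinatorial identifications are in place, the state property reduces to the classical, finite-dimensional fact that finite-volume quantum Gibbs ensembles with frozen classical boundary conditions are states.
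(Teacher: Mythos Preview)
Your identification of the density with $e^{-\beta H_\Lambda^\omega(\phi)}$ is where the argument breaks. The functional $\mu_{\beta,\phi,\Lambda}^{\omega}$ is built from the kernel
\[
D_{\beta,\Lambda}(\sigma_\Lambda\omega_{\Lambda^c},\,\cdot\,)=\int_{\mathcal{P}^{\omega_{\Lambda^c}}_{\Lambda^c}} D_{\beta,\Lambda}^{\alpha_{\Lambda^c}}\,d\nu_{\phi,\Lambda^c}(\alpha_{\Lambda^c}),
\]
i.e.\ an integral over \emph{all} closed paths $\alpha_{\Lambda^c}$ that start and end at $\omega_{\Lambda^c}$, not just the constant one. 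Whenever the surface term $W_\Lambda(\phi)$ contains factors $\sigma_B^{(1)}$ with $B\cap\Lambda^c\neq\emptyset$, those closed loops with $N\geq 1$ boundary jumps contribute nontrivially, and by the product formula $D_{\beta,\Lambda}^{\alpha_{\Lambda^c}}=\prod_j e^{-s_j\beta H_\Lambda^{\omega_j}}$ each such contribution is a time-ordered product of exponentials of \emph{different} Hamiltonians. The operator you write down, $e^{-\beta H_\Lambda^\omega(\phi)}$ with $H_\Lambda^\omega=\mathrm{Id}\otimes\mathrm{ev}_{\omega_{\Lambda^c}}(H_\Lambda+W_\Lambda)$, is exactly the $N=0$ summand (the constant path), because $\mathrm{ev}_{\omega_{\Lambda^c}}$ kills every off-diagonal $\sigma^{(1)}$ factor touching $\Lambda^c$. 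So your argument proves that $\mu_{\beta,\phi,\Lambda}^{\omega,0}$ is a state, which the paper already notes separately under the heading of ``classical boundary conditions''; it does not cover $\mu_{\beta,\phi,\Lambda}^{\omega}$.

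The paper's proof circumvents this precisely because no single self-adjoint Hamiltonian generates the full density. Instead it uses that reversing a closed loop at $\omega_{\Lambda^c}$ is a bijection of $\mathcal{P}^{\omega_{\Lambda^c}}_{\Lambda^c}$ onto itself, so $(D_{\beta,\Lambda}^{\alpha_{\Lambda^c}})^*=D_{\beta,\Lambda}^{\alpha_{\Lambda^c}^{-1}}$, together with the concatenation identity of Lemma~\ref{lemma_property_path}, to write the density at parameter $2\beta$ as $AA^*$ with $A$ the density at parameter $\beta$. Positivity then follows from the $C^*$-characterization, not from a trace-against-an-exponential formula. If you want to repair your route, you would need to exhibit the full path integral as the exponential of some self-adjoint operator in a possibly larger algebra, which is not available here.
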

\begin{proof}
We can use the fact the density of the Gibbs linear functional can be written as
\[
 \int_{\mathcal{P}^{\omega_{\Lambda^c}}_{\Lambda^c}} D_{\beta,\Lambda}^{\alpha_{\Lambda^c}}d\nu_{\phi,\Lambda^c}(\alpha_{\Lambda^c})
\]
 Notice that the adjoint operation, being antilinear, commutes with the integral, thus we have
 \[
\left(\int_{\mathcal{P}^{\omega_{\Lambda^c}}_{\Lambda^c}} D_{\beta,\Lambda}^{\alpha_{\Lambda^c}}d\nu_{\phi,\Lambda^c}(\alpha_{\Lambda^c})\right)^* = \int_{\mathcal{P}^{\omega_{\Lambda^c}}_{\Lambda^c}} D_{\beta,\Lambda}^{\alpha_{\Lambda^c}^{-1}}d\nu_{\phi,\Lambda^c}(\alpha_{\Lambda^c})
 \]
 Thus, since we are considering only paths that start and end in a configuration $\omega_{\Lambda^c}$, we have
\[
\left(\int_{\mathcal{P}^{\omega_{\Lambda^c}}_{\Lambda^c}} D_{\beta,\Lambda}^{\alpha_{\Lambda^c}}d\nu_{\phi,\Lambda^c}(\alpha_{\Lambda^c})\right)\left(\int_{\mathcal{P}^{\omega_{\Lambda^c}}_{\Lambda^c}} D_{\beta,\Lambda}^{\alpha_{\Lambda^c}}d\nu_{\phi,\Lambda^c}(\alpha_{\Lambda^c})\right)^* = \int_{\mathcal{P}^{\omega_{\Lambda^c}}_{\Lambda^c}} D_{2\beta,\Lambda}^{\alpha_{\Lambda^c}}d\nu_{\phi,\Lambda^c}(\alpha_{\Lambda^c}),
\]
by standard methods, as we applied in the proof of Lemma \ref{lemma_property_path}. The density is positive since an element of a $C^*$ algebra is positive if and only if is of the form $A^*A$, for some other element $A$. Since the linear functional is the trace against a positive operator, it must be a positive linear functional. It is normalized by definition.  
\end{proof}

The proof used, in a particular way, the fact that the beginning and the ending of the path are the same. If we tried to do the same with the diagonal measures, the integration with respect to the outside paths would change the start and the beginning. Actually, the best we can show is that the sum of the densities in the diagonal is a self-adjoint operator.
But more can be shown, by fixing the number of jumps that can occur at the boundary, a variant of the above lemma can be used to show that the density one gets is actually positive. Thus, let us introduce the following finite-volume Gibbs state
\[
\mu_{\beta,\phi,\Lambda}^{\omega,N}(f) = \frac{1}{Z_{\beta,\phi,\Lambda}^{\omega,N}}\sum_{\sigma_\Lambda \in \Omega_\Lambda} f*D_{\beta,\Lambda}^{\omega,N}(\sigma_\Lambda),
\]
where 
\[
D_{\beta,\Lambda}^{\omega,N} = \int_{\mathcal{P}^{\omega_{\Lambda^c},N}_{\Lambda^c}} D_{\beta,\Lambda}^{\alpha_{\Lambda^c}}d\nu_{\phi,\Lambda^c}(\alpha_{\Lambda^c}),
\]
is the density being integrated in the set of paths with exactly $n$ jumps, as 
\[
\mathcal{P}_{\Lambda^c}^{\omega_{\Lambda^c},N} \coloneqq \left\{\alpha_{\Lambda^c} = \sum_{i=1}^N\delta_{(\omega_{\Lambda^c,i},t_i,X_i)}: \omega_{\Lambda^c,1} = \omega_\Lambda, \omega_{\Lambda^c,N}= \iota_X\omega_\Lambda, \omega_{\Lambda^c, i-1} = \iota_{B_i}\omega_{\Lambda^c,i} , 1\leq i\leq N\right\},
\]
There is a special class of boundary conditions for the operators  that are related to paths where no arrival of operators happens at the boundary, i.e., the case where $N=0$ above. This means that $\alpha_{\Lambda^c}$ is constant through time. We will refer to these boundary conditions as \emph{classical boundary conditions}. These are important boundary conditions since they can be obtained directly by a Poisson point process representation of the Gibbs density of a specific Hamiltonian, which we will describe as follows. Let $\omega \in \Omega_{\Lambda^c}$ be a configuration and define the evaluation functional $\text{ev}_\omega$ on the dense subalgebra $C(\mathcal{G}_{\Lambda^c})$. By the definition of the norm in the regular representation, the evaluation functionals are actually states. Thus, we can form the conditional expectation $\text{Id} \otimes \text{ev}_{\omega_{\Lambda^c}}:C(\mathcal{G})\longrightarrow C(\mathcal{G}_\Lambda)$ and define the Hamiltonian with boundary condition $\omega$ by the expression $H^\omega_\Lambda(\phi) \coloneqq \text{Id}\otimes \text{ev}_{\sigma_{\Lambda^c}}(H_\Lambda(\phi) + W_\Lambda(\phi))$.
	
	There is no novelty in  construction above; it appeared before in Israel \cite{Is} as a proposal for boundary condition for quantum spin systems in much greater generality. Finally, we are motivated to introduce the following definition for the infinite volume Gibbs states
	\be
	\mathscr{G}_\beta(\phi) = \overline{\text{co}}\{\mu_\beta: \exists \{\Lambda_m\}_{m \geq 1} \text{   and   } \{\omega_m\}_{m\geq 1},\{N_m\}_{m\geq 1} \Lambda_m \nearrow \Z^d, \mu=w^*-\lim_{m\rightarrow \infty} \mu_{\beta,\phi,\Lambda_m}^{\omega_m,N_m}\}
	\ee

Notice that the finite volume Gibbs measures, when you fix a function $f$, is again a function of the boundary conditions itself. This motivates us to the following proposition. 

	\begin{proposition}
		For every $f \in C(\mathcal{G})$, we have that the following properties hold
  \begin{enumerate}
      \item If $f$ is self-adjoint, then so is $(\omega,X) \mapsto \mu_{\beta,\phi,\Lambda}^{\omega,X}(f)$.
      \item If $f$ is in $C(\mathcal{G}_{\Lambda_R^c})$, then
      $\mu_{\beta,\phi,\Lambda}(f) = f$.
  \end{enumerate}
	\end{proposition}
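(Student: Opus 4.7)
My plan is to prove both items by direct manipulation of the explicit formula
\[
\mu^{\omega,X}_{\beta,\phi,\Lambda}(f) = \frac{\mathbbm{1}(\omega_\Lambda, X\cap\Lambda)}{Z^{\omega,X}_{\beta,\phi,\Lambda}} \sum_{\substack{\sigma_\Lambda \in \Omega_\Lambda \\ (\eta_\Lambda, Y) \in \mathcal{G}_\Lambda^{\sigma_\Lambda}}} f(\eta_\Lambda\omega_{\Lambda^c}, X\cup Y)\, D_{\beta,\Lambda}(\sigma_\Lambda\omega_{\Lambda^c}, X\cup Y)
\]
already given in the definition of the finite-volume Gibbs functional. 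Two inputs from earlier in the text will be decisive: first, $D_{\beta,\Lambda} = e^{-\beta H_\Lambda(\phi)}$ is self-adjoint in $C^*(\mathcal{G})$, so $\overline{D_{\beta,\Lambda}(\gamma^{-1})} = D_{\beta,\Lambda}(\gamma)$ for every arrow $\gamma$; second, in the transformation groupoid the inverse of $(\xi,W)$ is $(\iota_W\xi, W)$, and an element $(\eta_\Lambda, Y) \in \mathcal{G}_\Lambda^{\sigma_\Lambda}$ satisfies $\iota_Y\eta_\Lambda = \sigma_\Lambda$.

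For item (1), self-adjointness in $C^*(\mathcal{G}_{\Lambda^c})$ amounts to the identity $\overline{\mu^{\iota_X\omega, X}_{\beta,\phi,\Lambda}(f)} = \mu^{\omega, X}_{\beta,\phi,\Lambda}(f)$, and one may assume $X \cap \Lambda = \emptyset$ because otherwise both sides vanish by the indicator. Under this assumption $\iota_X$ acts trivially on $\Lambda$-coordinates, and applying complex conjugation to the defining formula for $\mu^{\iota_X\omega,X}_{\beta,\phi,\Lambda}(f)$ reduces, via $f = f^*$ and $D_{\beta,\Lambda}^* = D_{\beta,\Lambda}$, to computing $\overline{f(\eta_\Lambda\iota_X\omega_{\Lambda^c}, X\cup Y)}$ and $\overline{D_{\beta,\Lambda}(\sigma_\Lambda\iota_X\omega_{\Lambda^c}, X\cup Y)}$. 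The key arithmetic is
\[
\iota_{X\cup Y}\bigl(\eta_\Lambda \iota_X\omega_{\Lambda^c}\bigr) = (\iota_Y\eta_\Lambda)\,\omega_{\Lambda^c} = \sigma_\Lambda\,\omega_{\Lambda^c},
\]
together with the symmetric identity obtained by exchanging $\sigma_\Lambda$ and $\eta_\Lambda$; these show that complex conjugation amounts to swapping the roles of $\sigma_\Lambda$ and $\eta_\Lambda$ in the summand. Since the constraint $\iota_Y\eta_\Lambda = \sigma_\Lambda$ is symmetric under this swap, the sum is invariant under it and the claim follows. The same computation applied to $f = \mathbbm{1}$ gives $\overline{Z^{\iota_X\omega,X}_{\beta,\phi,\Lambda}} = Z^{\omega,X}_{\beta,\phi,\Lambda}$, so the normalisation creates no obstruction.

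For item (2), if $f \in C(\mathcal{G}_{\Lambda_R^c})$ then $f$ is supported on arrows whose group component lies in $\Lambda_R^c$, so a nonzero contribution to the sum forces $X \cup Y \subset \Lambda_R^c$. Since $Y \subset \Lambda \subset \Lambda_R$, this forces $Y = \emptyset$, collapsing $\mathcal{G}_\Lambda^{\sigma_\Lambda}$ to the unit $\eta_\Lambda = \sigma_\Lambda$. Moreover $f(\sigma_\Lambda\omega_{\Lambda^c}, X)$ depends only on the coordinates in $\Lambda_R^c \subset \Lambda^c$, hence equals $f(\omega, X)$ independently of $\sigma_\Lambda$. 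Factoring this constant out, the residual sum $\sum_{\sigma_\Lambda} D_{\beta,\Lambda}(\sigma_\Lambda\omega_{\Lambda^c}, X)$ is precisely $Z^{\omega,X}_{\beta,\phi,\Lambda}$ and cancels the denominator; combined with $\mathbbm{1}(\omega_\Lambda, X\cap\Lambda) = 1$ (since $X \subset \Lambda_R^c \subset \Lambda^c$), this produces $\mu^{\omega,X}_{\beta,\phi,\Lambda}(f) = f(\omega, X)$.

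I expect the only real difficulty to be notational bookkeeping: the groupoid inverse mixes the configuration and the group component, so in item (1) one must separately track the $\Lambda^c$-coordinates (where $\iota_X$ acts and then gets undone by $\iota_{X\cup Y}$) and the $\Lambda$-coordinates (where $\iota_Y$ converts $\eta_\Lambda$ into $\sigma_\Lambda$). Once these commutations are disentangled, both items reduce to finite-sum manipulations with no analytic subtlety beyond the self-adjointness of $e^{-\beta H_\Lambda(\phi)}$.
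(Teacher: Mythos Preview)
Your argument for item (2) is correct and coincides with the paper's: both observe that $f\in C(\mathcal{G}_{\Lambda_R^c})$ forces $Y=\emptyset$ in the inner sum, after which $f$ factors out as a constant in $\sigma_\Lambda$ and the remaining sum cancels the partition function.

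For item (1), the overall structure of your argument (conjugate, invoke $f^*=f$ and the adjoint property of the density, then recognise the result as the same sum after the involution $\sigma_\Lambda\leftrightarrow\eta_\Lambda$) matches the paper's. The difference lies in how the adjoint property of the density is justified. You assert that $D_{\beta,\Lambda}=e^{-\beta H_\Lambda(\phi)}$ and deduce self-adjointness from that. This identification is not what the paper sets up: $D_{\beta,\Lambda}$ as used in the definition of $\mu_{\beta,\phi,\Lambda}^{\omega,X}$ is evaluated at arrows $(\sigma_\Lambda\omega_{\Lambda^c},X\cup Y)$ with nontrivial $\Lambda^c$-component, and it is built from the path integral that also runs over boundary paths $\alpha_{\Lambda^c}$; the operator $e^{-\beta H_\Lambda(\phi)}\in C^*(\mathcal{G}_\Lambda)$, extended by $\mathbbm{1}_{\Lambda^c}$, would vanish on any such arrow with $X\cap\Lambda^c\neq\emptyset$. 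The paper instead obtains the needed identity $\overline{D_{\beta,\phi,\Lambda}(\gamma^{-1})}=D_{\beta,\phi,\Lambda}(\gamma)$ from the path-wise relation $(D_{\beta,\Lambda}^{\alpha_{\Lambda^c}})^*=D_{\beta,\Lambda}^{\alpha_{\Lambda^c}^{-1}}$ together with the bijection $\alpha_{\Lambda^c}\mapsto\alpha_{\Lambda^c}^{-1}$ between $\mathcal{P}_{\Lambda^c}^{\omega,X}$ and $\mathcal{P}_{\Lambda^c}^{\iota_X\omega,X}$. Once that fact is in place your swap argument goes through verbatim; but as written, your justification for the self-adjointness of $D_{\beta,\Lambda}$ rests on an identification that the paper does not make and that, taken literally, fails on the arrows actually appearing in the sum. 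Replacing that one sentence by the path-inversion argument (or by the correct identification with $e^{-\beta(H_\Lambda(\phi)+W_\Lambda(\phi))}$ supplied by Proposition~\ref{poirep}) closes the gap.
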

	\begin{proof}
		To prove the first assertion, just note that
  \begin{align*}
(\mu_{\beta,\phi,\Lambda}^{\omega,X}(f))^* &= \overline{\mu_{\beta,\phi,\Lambda}^{\iota_X\omega,X}(f)} \\
&= \frac{1}{\overline{Z_{\beta,\phi,\Lambda}^{\iota_X\omega,X}}}\sum_{\sigma_\Lambda \in \Omega_\Lambda} \overline{f*D_{\beta,\phi,\Lambda}}(\sigma_\Lambda(\iota_X\omega)_{\Lambda^c},X) \\
&=\frac{1}{Z_{\beta,\phi,\Lambda}^{\omega,X}}f*D_{\beta,\phi,\Lambda}(\sigma_\Lambda\omega_{\Lambda^c},X).
  \end{align*}
where the last equality is due the self-adjointness of $f$, the fact that
\[
(D_{\beta,\Lambda}^{\alpha_{\Lambda^c}})^* = D_{\beta,\Lambda}^{\alpha_{\Lambda^c}^{-1}},
\]
together with the bijection between paths between $\mathcal{P}_{\Lambda^c}^{\omega,X}$ and $\mathcal{P}_{\Lambda^c}^{\iota_X\omega,X}$
given by the involution $\alpha_{\Lambda^c}\mapsto \alpha_{\Lambda^c}^{-1}$. 

For the second point, notice that when a local function is in $C(\mathcal{G}_{\Lambda_R^c})$ there exists a $\Lambda' \subset \Lambda_R^c$ such that
\[
f(\sigma,X) = \mathbbm{1}_{\Lambda'^c}(\sigma_{\Lambda'^c},X\setminus \Lambda') f(\sigma_{\Lambda'},X\cap \Lambda')
\]
Then, one gets
\begin{align*}
f*D_{\beta,\Lambda}(\sigma_\Lambda\omega_{\Lambda},X) &= \sum_{(\eta_\Lambda,Y)\in \mathcal{G}_{\Lambda}^{\sigma_\Lambda}}f(\eta_\Lambda\omega_{\Lambda^c},X\cup Y)D_{\beta,\Lambda}(\sigma_\Lambda\omega_{\Lambda^c},X\cup Y) \\
&=f(\omega_{\Lambda'},X\cap \Lambda')\sum_{(\eta_\Lambda,Y)\in \mathcal{G}_{\Lambda}^{\sigma_\Lambda}}\mathbbm{1}_{\Lambda'^c}(\omega_{\Lambda'^c},(X\cup Y)\setminus \Lambda')D_{\beta,\Lambda}(\sigma_\Lambda\omega_{\Lambda^c},X\cup Y)
\end{align*}
and the last line is equal to $0$ if $(X\cup Y)\setminus\Lambda'\neq \emptyset$. Otherwise, there is only of $(\eta_\Lambda,Y)$ for which the sum is not zero, and this is when $Y=\emptyset$, thus we conclude the desired identity.   
  
\end{proof}

Actually we think the function $\mu_{\beta,\phi,\Lambda}(f)$ is positive whenever $f$ itself is positive. These properties allow us to introduce the following definition for a quantum specification on a groupoid.
\begin{definition}
    Let $\mathcal{G}$ be a groupoid with a decomposition $\mathcal{G}_\Lambda\times \mathcal{G}_{\Lambda^c}$ for every finite $\Lambda \subset \Z^d$. Then, a family of functions $\mu_\Lambda:C(\mathcal{G}_\Lambda)\times\mathcal{G}_{\Lambda^c}\rightarrow \mathbb{C}$ is called a proper quantum specification if and only if
    \begin{enumerate}
        \item For every $(\omega,X) \in \mathcal{G}_{\Lambda^c}$, $\mu_\Lambda^{\omega,X}$ is a linear functional; if $X=\emptyset$, then it is a state.
        \item For every $f \in C(\mathcal{G}_\Lambda)$, we know that $\mu_{\Lambda}(f)$ is a function in $C(\mathcal{G}_{\Lambda^c})$. More than that, if $f$ is self-adjoint, $\mu_\Lambda(f)$ is self-adjoint.
        \item There exists $\Lambda' \subset \Lambda$ such that if $f \in C(\mathcal{G}_{\Lambda'^c})$, then $\mu_\Lambda(f)=f$.
        \item For every $\Lambda'\subset \Lambda$, it holds $\mu_\Lambda(\mu_{\Lambda'}(f)) = \mu_\Lambda(f)$
    \end{enumerate}
\end{definition}

As we showed in Propositions 3.1, 3.2 and 3.3, the family of finite volume Gibbs functionals we introduce form a quantum specification for the groupoid in consideration. When this happens, we call this a \emph{quantum Gibbs specification}. This can be summarized in the following theorem.

	\begin{theorem}
		$\{\mu_{\beta,\Lambda}^{\bm{\omega}}\}_{\Lambda \in \mathcal{P}_f(\Z^d)}$ is a quantum specification. 
	\end{theorem}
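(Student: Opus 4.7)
The plan is to verify the four axioms of a quantum specification in turn, matching each one to a proposition already established in Section~3. Since the theorem is essentially a synthesis of Propositions 3.1--3.3, the work consists in identifying the right correspondence and patching the one property (continuity of $\mu_\Lambda(f)$ as a function on $\mathcal{G}_{\Lambda^c}$) that has not been fully addressed.

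First I would note that axiom (1) splits into two parts. Linearity of $\mu_{\beta,\phi,\Lambda}^{\omega,X}$ is immediate from the definition, since $f\mapsto f\ast D_{\beta,\Lambda}(\sigma_\Lambda\omega_{\Lambda^c},X)$ is linear in $f$ and the normalizing partition function $Z_{\beta,\phi,\Lambda}^{\omega,X}$ does not depend on $f$. The fact that $\mu_{\beta,\phi,\Lambda}^{\omega}$ is a state when $X=\emptyset$ is exactly the content of Proposition 3.2, which uses the Poisson point process representation together with the identity $(D_{\beta,\Lambda}^{\alpha_{\Lambda^c}})^\ast = D_{\beta,\Lambda}^{\alpha_{\Lambda^c}^{-1}}$ and the bijection on paths given by involution. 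Normalization is built into the definition via division by $Z_{\beta,\phi,\Lambda}^{\omega,X}$.

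Next, axiom (2) has two parts again: the self-adjointness preservation is item (1) of Proposition 3.3, proved via the conjugation identity $(\mu_{\beta,\phi,\Lambda}^{\omega,X}(f))^\ast = \overline{\mu_{\beta,\phi,\Lambda}^{\iota_X\omega,X}(f)}$ together with the bijection $\alpha_{\Lambda^c}\mapsto \alpha_{\Lambda^c}^{-1}$. To complete this axiom, I would also verify that $(\omega,X)\mapsto \mu_{\beta,\phi,\Lambda}^{\omega,X}(f)$ actually defines an element of $C(\mathcal{G}_{\Lambda^c})$: continuity in $\omega$ follows because $D_{\beta,\Lambda}$ depends continuously on $\omega_{\Lambda^c}$ (the Poisson integrand involves only continuous bounded functions), and the only possibly singular behavior as a function of $X$ is handled by the prefactor $\mathbbm{1}(\omega_\Lambda,X\cap\Lambda)$, which already lies in $C_c(\mathcal{G}_{\Lambda^c})$ once restricted to the appropriate unit fibers. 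Axiom (3) is item (2) of Proposition 3.3, with the choice $\Lambda' = \Lambda_R$ dictated by the range of the interaction; the proof reduces to observing that the $\mathbbm{1}_{\Lambda'^c}$ factor forces $Y=\emptyset$ in the groupoid sum, collapsing the action on $f$ to the identity. Axiom (4) is precisely Proposition 3.1, the consistency relation proven by the groupoid decomposition together with Lemma 3.28 of \cite{LeNy}.

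The main obstacle I anticipate is axiom (2): while self-adjointness is explicitly in Proposition 3.3, showing that $\mu_\Lambda(f)\in C(\mathcal{G}_{\Lambda^c})$ requires a genuine continuity argument, since the map $(\omega,X)\mapsto D_{\beta,\Lambda}^{\omega,X}$ is an integral over a Poisson process with $\omega$-dependent integrand. I would control this by the dominated convergence theorem applied to the Poisson expansion in Equation \eqref{equation_measure_definition}, using the short-range hypothesis to ensure only finitely many terms $r_X$ are nonzero for each fixed configuration in the argument, and using the explicit form $D_{\beta,\Lambda}^{\alpha_{\Lambda^c}} = \prod_j e^{-s_j\beta H_\Lambda^{\omega_j}}$ to transfer continuity from the finitely many boundary configurations $\omega_j$ visited along a path. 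Once these four items are in place, the theorem follows by definition.
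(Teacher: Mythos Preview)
Your proposal is correct and follows essentially the same approach as the paper, which treats the theorem as a direct summary of Propositions~3.1--3.3 and provides no separate proof. You are in fact more careful than the paper itself: you flag and address the continuity requirement in axiom~(2), namely that $(\omega,X)\mapsto \mu_{\beta,\phi,\Lambda}^{\omega,X}(f)$ lie in $C(\mathcal{G}_{\Lambda^c})$, which the paper's Proposition~3.3 does not explicitly verify.
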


 We are motivated to introduce the following definition for quantum DLR states
 
	\begin{definition}
		A state $\mu$ of $C^*(\mathcal{G})$ is said to be a quantum DLR state if it satisfies, for every $\Lambda$, 
		\[
		\mu_\beta(f)= \mu_\beta(\mu^{(\cdot)}_{\beta,\Lambda}(f)).
		\]
	\end{definition}
	The set of all $DLR$ states is
 \be
	\mathscr{G}_{\beta, DLR}(\phi) = \{\mu_\beta: \mu_\beta= \mu_\beta(\mu^{(\cdot)}_{\beta,\Lambda}), \forall \Lambda \in \mathcal{P}(\Z^d)\}
	\ee

 it is trivially a convex set. The following theorem holds.

\begin{theorem}
	$\mathscr{G}_\beta(\phi)=\mathscr{G}_{\beta, DLR}(\phi)$	
\end{theorem}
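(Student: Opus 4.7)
The plan is to prove the two inclusions separately, treating $\mathscr{G}_\beta(\phi) \subseteq \mathscr{G}_{\beta,DLR}(\phi)$ as the straightforward consequence of the consistency results already established, and the reverse inclusion as the technically delicate statement.

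For the inclusion $\mathscr{G}_\beta(\phi) \subseteq \mathscr{G}_{\beta,DLR}(\phi)$, I would first observe that $\mathscr{G}_{\beta,DLR}(\phi)$ is weak-$*$ closed and convex. Convexity is clear from linearity of the DLR equation; closedness follows from Proposition 3.3, which guarantees that for each fixed $f \in C^*(\mathcal{G})$ and each finite $\Lambda$, the observable $\mu^{(\cdot)}_{\beta,\phi,\Lambda}(f)$ is a fixed element of $C(\mathcal{G}_{\Lambda^c}) \subset C^*(\mathcal{G})$, so the defining condition $\mu(f) = \mu(\mu^{(\cdot)}_{\beta,\phi,\Lambda}(f))$ is a weak-$*$ continuous equation. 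It therefore suffices to verify the DLR equation on the generating set. Given a limit $\mu = w^*\text{-}\lim_m \mu^{\omega_m,N_m}_{\beta,\phi,\Lambda_m}$ with $\Lambda_m \nearrow \Z^d$, fix an arbitrary finite $\Lambda$ and $f \in C^*(\mathcal{G})$. For $m$ large enough, $\Lambda \subset \Lambda_m$, and the consistency property of Proposition 3.1 (in the version for the $N$-jump boundary functionals, which follows from Lemma 2.3 and the same Poisson decomposition used there) gives
\[
\mu^{\omega_m,N_m}_{\beta,\phi,\Lambda_m}(f) = \mu^{\omega_m,N_m}_{\beta,\phi,\Lambda_m}\bigl(\mu^{(\cdot)}_{\beta,\phi,\Lambda}(f)\bigr).
\]
Passing $m \to \infty$ in both sides against the fixed observable $\mu^{(\cdot)}_{\beta,\phi,\Lambda}(f)$ yields the DLR equation for $\mu$.

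For the inclusion $\mathscr{G}_{\beta,DLR}(\phi) \subseteq \mathscr{G}_\beta(\phi)$, the idea is to use the DLR equation to express $\mu$ as a barycenter over boundary data. Given $\mu \in \mathscr{G}_{\beta,DLR}(\phi)$ and a finite $\Lambda$, the identity $\mu(f) = \mu(\mu^{(\cdot)}_{\beta,\phi,\Lambda}(f))$ says that $\mu$ is reconstructed by integrating the finite-volume functionals $\mu^{\omega,X}_{\beta,\phi,\Lambda}$ against the restriction of $\mu$ to $C^*(\mathcal{G}_{\Lambda^c})$. I would expand the density $D^{\omega,X}_{\beta,\Lambda} = \sum_{N \geq 0} D^{\omega,X,N}_{\beta,\Lambda}$ by jump-count, apply Lemma 3.2 to glue boundary paths living in $\Lambda_m^c$ to interior paths on $\Lambda_m \setminus \Lambda$, and thereby rewrite $\mu$ restricted to observables localized inside $\Lambda$ as a limit over $\Lambda_m \supset \Lambda$ of convex combinations of the genuine state functionals $\mu^{\omega',N'}_{\beta,\phi,\Lambda_m}$. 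A diagonal subsequence extraction, using weak-$*$ compactness of the state space of $C^*(\mathcal{G})$ and density of the local observables, then places $\mu$ in the closed convex hull appearing in the definition of $\mathscr{G}_\beta(\phi)$.

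The main obstacle is the second inclusion, specifically the mismatch between the full family of functionals $\mu^{\omega,X}_{\beta,\phi,\Lambda}$ appearing in the DLR equation (which are not states when $X \neq \emptyset$) and the strictly smaller class of genuine finite volume Gibbs states $\mu^{\omega,N}_{\beta,\phi,\Lambda_m}$ permitted in the definition of $\mathscr{G}_\beta(\phi)$. The core technical point is to show that the off-diagonal contributions ($X \neq \emptyset$) can be reabsorbed: a boundary jump across the interface of $\Lambda$ becomes an interior jump of a Gibbs state on a larger box $\Lambda_m \supset \Lambda$, so that the non-state pieces can be rewritten, up to weak-$*$ approximation, as diagonal ($X = \emptyset$) Gibbs states with an adjusted jump count $N'$. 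Making this reinterpretation precise while controlling convergence uniformly on local observables is the heart of the proof, and it is the step at which an analogue of Georgii's classical Theorem 7.12 in the present groupoid/path setting is genuinely quantum and requires the Poisson path structure built up in Section 2.
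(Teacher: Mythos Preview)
Your treatment of the first inclusion $\mathscr{G}_\beta(\phi) \subseteq \mathscr{G}_{\beta,DLR}(\phi)$ matches the paper's: consistency plus weak-$*$ convergence, with the observation that $\mathscr{G}_{\beta,DLR}(\phi)$ is closed and convex.

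For the reverse inclusion the paper takes a genuinely different route. Rather than a direct barycentric construction, it argues by contradiction via Hahn--Banach separation. Assuming some $\mu_\beta \in \mathscr{G}_{\beta,DLR}(\phi)\setminus\mathscr{G}_\beta(\phi)$, compactness and convexity of both sets produce a self-adjoint local $F \in C(\mathcal{G}_\Lambda)$ and constants $a<b$ with $\mu_\beta(F)\le a<b\le\nu_\beta(F)$ for every $\nu_\beta\in\mathscr{G}_\beta(\phi)$. The key observation, which completely sidesteps the path-gluing machinery you propose, is that for such a local $F$ and any sufficiently large $\Lambda'\supset\Lambda$, the off-diagonal part of $\mu^{(\cdot)}_{\beta,\phi,\Lambda'}(F)$ vanishes, so this function lies in $C(\Omega_{\Lambda'^c})$. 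Riesz--Markov then turns the restriction of $\mu_\beta$ to $C(\Omega_{\Lambda'^c})$ into an honest probability measure on classical configurations, and the DLR equation $\mu_\beta(F)=\mu_\beta(\mu^{(\cdot)}_{\beta,\phi,\Lambda'}(F))$ forces the existence of boundary configurations $\sigma$ with $\mu^{\sigma}_{\beta,\phi,\Lambda'}(F)\le a$. Extracting a weak-$*$ convergent subsequence along an exhausting family of boxes yields a limit in $\mathscr{G}_\beta(\phi)$ that is still separated from $\mathscr{G}_\beta(\phi)$ by $F$, a contradiction.

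The practical difference is this: you correctly identify the mismatch between the non-state functionals $\mu^{\omega,X}_{\beta,\phi,\Lambda}$ with $X\neq\emptyset$ and the genuine states $\mu^{\omega,N}_{\beta,\phi,\Lambda}$ as the main obstacle, and propose to resolve it by reabsorbing boundary jumps as interior jumps on larger boxes via Lemma~3.1. The paper instead notices that this obstacle disappears once one restricts attention to a single local observable $F$, because the specification then becomes a classical function of $\sigma$ alone. Your approach is more constructive and stays closer to the Poisson-path picture; the paper's separation argument is shorter, more in the spirit of the classical Georgii argument, and avoids the delicate jump-count bookkeeping altogether.
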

\begin{proof}
	The fact that $\mathscr{G}_\beta(\phi)\subset\mathscr{G}_{\beta, DLR}(\phi)$	follows by the consistency condition and the definition of $w^*$-convergence. For the other inclusion, suppose that there exists $\mu_\beta \in \mathscr{G}_{\beta,DLR}(\phi)\setminus\mathscr{G}_{\beta}(\phi)$. Since both sets are compact and convex, we know that there is a linear continuous functional $\varphi:C^*(\mathcal{G})\rightarrow \mathbb{C}$. and two real numbers $a,b$ such that
    \[
    \text{Re}(\varphi)(\mu_\beta) \leq a < b \leq \text{Re}(\varphi(\nu_\beta))
    \]
 for any $\nu_{\beta}\in \mathscr{G}_{\beta}(\phi)$, where $\text{Re}$ is the real part of the linear functional $\varphi$. Since every continuous linear functional in the $w^*$ topology is of the form $J_F$, for some $F \in C^*(\mathcal{G})$, we have that there exists a self-adjoint element of $C^*(\mathcal{G})$ where
  \[
    \mu_\beta(F) \leq a < b \leq \nu_\beta(F)
    \]
Since $F\in C^*(\mathcal{G})$, there exists a sequence of self-adjoint local functions converging to it. Thus, we can assume that there is a $\Lambda$ finite where $F\in C(\mathcal{G}_\Lambda)$. Using the DLR-equation, for a $\Lambda'$ containing $\Lambda$, we get that 
 \be\label{eq2}
\mu_{\beta,\phi,\Lambda}^{\sigma}(F) \leq a < b \leq \nu_\beta(F).
 \ee
Since the off-diagonal terms are zero, if $\Lambda'$ is large enough, we know that $\mu_{\beta,\phi,\Lambda}^{\sigma}(F)$ is a continuous function in $\Omega_{\Lambda'^c}$. Thus, by the Riesz-Markov theorem, the state $\mu_\beta$ restricts to a probability measure in $\Omega_{\Lambda'^c}$ and standard arguments allow us to conclude that 
 the Inequality \eqref{eq2} holds.
But since this holds for every $\Lambda'$ large enough, we can extract a sequence of finite volume Gibbs states that converges to some limit $\mu_\beta'$. But then $\mu_\beta' \in \mathcal{G}_\beta(\phi)$ and is separated by a linear functional. This yields a contradiction, therefore $\mathcal{G}_\beta(\phi)=\mathcal{G}_{\beta,DLR}(\phi)$.

\end{proof}

\section{The Relation between DLR and KMS states}

The local Hamiltonian operators can be used to define a local dynamics in $C^*(\mathcal{G}_\Lambda)$,
\[
\tau_t^\Lambda(A) = e^{-itH_\Lambda (\phi)}A e^{itH_\Lambda(\phi)}.
\]
 The finite volume Gibbs states have a nice algebraic relation with the dynamics at finite volume, called the KMS condition:
\[
\mu_{\beta,\Lambda}(AB) = \frac{\text{tr}(ABe^{-\beta H_\Lambda(\phi)})}{\text{tr}(e^{-\beta H_\Lambda(\phi)})} = \frac{\text{tr}(Ae^{-\beta H_\Lambda(\phi)}e^{\beta H_\Lambda(\phi)}Be^{-\beta H_\Lambda(\phi)})}{\text{tr}(e^{-\beta H_\Lambda(\phi)})} = \frac{\text{tr}(\tau_{i\beta}(B)Ae^{-\beta H_\Lambda(\phi)})}{\text{tr}(e^{-\beta H_\Lambda(\phi)})}= \mu_{\beta,\Lambda}(\tau_{i\beta}^\Lambda(B)A)
\] 

In a seminal work, Haag, Hugenholtz, and Winnink \cite{HHW} showed that the KMS condition survives the thermodynamic limit procedure, so it must encode a good definition of equilibrium state in infinite volume systems. 
\begin{definition}
Let $(\mathfrak{U},\tau)$ be a C*-dynamical system, i.e., a C*-algebra $\mathfrak{A}$ and a strongly continuous one-parameter group $\tau$. Let $\mu$ be a state. We say that this state satisfies the KMS condition if, and only if, for all $A \in \mathfrak{A}$ and analytic elements $B \in \mathfrak{A}$ the following holds:
\[
\mu(\tau_{i\beta}(B)A) = \mu(AB)
\] 
\end{definition}
Another definition of equilibrium states more familiar to physicists, known as the variational principle, can be seen in \cite{Bra2, Is, Simon}). The KMS condition cannot be used to directly define equilibrium states in an abelian C*-algebra because the dynamics is trivial (see Prop 5.3.28 in \cite{Bra2}). On the other hand, the Takesaki theorem (see \cite{Acc1, Acc2}) serves as a no-go theorem for the more straightforward generalizations of the DLR equations. Since we showed a suitable generalization to the quantum realm of the DLR equations in the previous section, we are led to the question of how these equations relate to the KMS states.
The first step for answering this question started with the classical case and was taken by Brascamp in \cite{Bras}, where he proved that when the interaction is classical and is embedded in a nonabelian C*-algebra, the KMS equation reduces to the DLR equations. 

 Sometime after that, Araki and Ion \cite{Ara2} closed the questions by defining a new condition for equilibrium, that we call here the \emph{Gibbs-Araki condition}, showed that when the interaction is classical the Gibbs-Araki condition reduces to the DLR equations and that the former is equivalent to the KMS condition in general. 
To define properly what the Gibbs-Araki Condition means we must introduce a new notion of perturbation of a state. We will explain this concept locally first. Let $\Lambda$ be a finite set and $\phi$ an interaction. Then we have the local dynamics $\tau^{\Lambda}_t$ defined in the beginning of this section and $P \in \mathfrak{U}_{\Lambda}$. For more details on the perturbation of the dynamics, see Chapter 4 of \cite{En} and \cite{Bra2}. Let $P \in C^*(\mathcal{G}_\Lambda)$ be a self-adjoint element and define the perturbation $\delta_P: \mathfrak{U}_\Lambda \rightarrow \mathfrak{U}_\Lambda$ by:
\[
 \delta_P(A) = i[H_\Lambda(\phi),A] +i [P,A]
\]
It is a standard result that this derivation generates a strongly continuous one-parameter group $\tau^P_t$ and it relates to $\tau_t$ by the Dyson series
\[
\tau_t^P(A) = \tau_t(A) + \sum_{n \geq 1} i^n \int_0^t dt_1 \int_0^{t_1} dt_2 ... \int_0^{t_{n-1}} dt_n [\tau_{t_n}(P),[...[\tau_{t_1}(P), \tau_t(A)]]].
\]
The above expansion is valid in much more general contexts, see \cite{Rob}. Furthermore, we can define a unitary operator
\[
\Gamma^P_t = \sum_{n \geq0} i^n \int_0^t dt_1 \int_0^{t_1}dt_2...\int_0^{t_{n-1}} dt_n \tau_{t_n}(P)...\tau_{t_1}(P).
\]
In \cite{Bra2} it is proved that this function is a cocycle and that the perturbed and the original dynamics are related by
\[
\tau^P_t = \Gamma_t^P \tau_t \Gamma_t^{P *}.
\] 
We are ready to introduce the perturbed state.
\begin{definition}
Let $\mathfrak{A}_n$ be the spin algebra, $\phi$ a short-range interaction, and $\tau$ the strongly continuous group generated by it. Let $P \in \mathfrak{A}_n$ be a self-adjoint element. Let $\mu$ be a state, then we define the perturbed state $\mu^P$ by
\[
\mu^P(A) = \frac{(\Omega_\mu^P,\pi_\mu(A) \Omega_\mu^P)}{(\Omega_\mu^P,\Omega_\mu^P)} 
\]
Where $\Omega^P_\mu = \pi_\mu(\Gamma_{\frac{i\beta}{2}}^P)\Omega_\mu$, $\Omega_\mu$ and $\pi_\mu$ are the cyclic vector and the representation of the GNS representation associated with $\mu$.
\end{definition}

\begin{definition}[Gibbs-Araki Condition]
Let $\mu:\mathfrak{A}_n\rightarrow \mathbb{C}$ be a state. We say that it satisfies the Gibbs-Araki condition for $\beta$ and interaction $\phi$ if, and only if
\begin{enumerate}
\item $\mu$ is faithful;
\item $\mu^P = \mu_\Lambda \otimes \bar{\mu}$, where $\mu^P$ is the perturbed state defined previously for $P = \beta W_\Lambda(\phi)$ and $\bar{\mu}$ is a state in $\mathfrak{A}_{\Lambda^c}$
\end{enumerate}  
\end{definition} 
One can find the proof of the next theorem in \cite{Bra2} in greater generality.
\begin{theorem}
Let $\mathfrak{A}$ be the spin algebra and $\phi$ a short-range interaction. Then the following assertions are equivalent for a state $\omega$
\begin{enumerate}
\item $\omega$ satisfies the Gibbs-Araki condition for $\phi$ at $\beta$;
\item $\omega$ is a $(\tau,\beta)-$ KMS state.
\end{enumerate}
\end{theorem}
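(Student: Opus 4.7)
The plan is to invoke Araki's perturbation theory for KMS states, along the lines of Chapter 5 of Bratteli--Robinson Vol.\ 2. The key structural observation is that the surface term $W_\Lambda(\phi)$ is precisely the obstruction to the dynamics decoupling between $\Lambda$ and $\Lambda^c$; removing it via a bounded perturbation converts the KMS condition into a product-state factorization, which is the content of the Gibbs--Araki condition.

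For the direction $(2)\Rightarrow(1)$, I would first invoke the fundamental stability of the KMS property under bounded self-adjoint perturbation: if $\omega$ is $(\tau,\beta)$-KMS and $P=P^*\in\mathfrak{A}_n$, then $\omega^P$ is $(\tau^P,\beta)$-KMS. The proof of this uses the cocycle identity $\tau_t^P=\Gamma_t^P\tau_t\Gamma_t^{P*}$ together with the analytic continuation of $\Gamma_t^P$ into the strip, which is valid because the Dyson series converges absolutely for bounded $P$. Next, choose $P$ proportional to $-W_\Lambda(\phi)$ (reconciling signs and the factor of $\beta$ with the convention in the definition); the generator of $\tau^P$ then becomes $i[H_\Lambda(\phi)+H_{\Lambda^c}(\phi),\,\cdot\,]$, a sum of two commuting derivations localized in $\Lambda$ and $\Lambda^c$ respectively. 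Hence $\tau^P$ is a product dynamics $\tau^\Lambda\otimes\tau^{\Lambda^c}$, and a KMS state for a product of commuting dynamics must factor: the restriction to the finite-dimensional factor $\mathfrak{A}_\Lambda$ is KMS for the inner dynamics generated by $H_\Lambda(\phi)$, hence is the unique Gibbs state $\mu_\Lambda$. This is exactly the Gibbs--Araki condition; faithfulness transfers to $\omega$ through the cocycle conjugation since $\Gamma^P_{i\beta/2}$ is invertible.

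For the converse $(1)\Rightarrow(2)$, I would reverse the perturbation using Araki's inversion identity $(\omega^P)^{-P}=\omega$. Starting from $\omega^P=\mu_\Lambda\otimes\bar\mu$, the product state is KMS for the product dynamics $\tau^\Lambda\otimes\tau^{\Lambda^c}$ (the left factor is automatic, and the right factor is forced by compatibility across an exhaustion of boxes $\Lambda\nearrow\Z^d$). Applying the inverse perturbation transports the KMS property back to $\omega$ with respect to $\tau$, verifying the KMS condition at inverse temperature $\beta$.

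The main technical obstacle is the control of the analytic continuation $\Gamma^P_{i\beta/2}$ when $\tau$ is only strongly continuous and $H_{\Lambda^c}(\phi)$ is not available as a bounded operator; the tensor-factorization of the generator of $\tau^P$ must be established on the dense $*$-subalgebra of strictly local elements and then extended by continuity, using a thermodynamic-limit exhaustion of $\Lambda^c$. A second subtlety is the identification of $\bar\mu$ as a genuine $(\tau^{\Lambda^c},\beta)$-KMS state, which requires the hypothesis to hold for every finite $\Lambda$ so that nested applications of Araki's perturbation give a compatible family. Once these points are handled, the equivalence follows by combining the two directions with the cocycle calculus.
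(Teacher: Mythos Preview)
The paper does not actually prove this theorem: immediately before the statement it says ``One can find the proof of the next theorem in \cite{Bra2} in greater generality,'' and simply cites Bratteli--Robinson (this is their Theorem~6.2.18, already referenced in the introduction). Your proposal is precisely a sketch of that Bratteli--Robinson/Araki argument---perturb by the surface term $W_\Lambda(\phi)$, use the cocycle $\Gamma_t^P$ to transport the KMS property to the decoupled dynamics, and invoke the inversion $(\omega^P)^{-P}=\omega$ for the converse---so you are aligned with exactly the source the paper defers to. The technical caveats you flag (analytic continuation of $\Gamma^P$ in the strip, and showing $\bar\mu$ is genuinely $(\tau^{\Lambda^c},\beta)$-KMS via an exhaustion) are the real work in Bratteli--Robinson's account, and you have identified them correctly; one minor correction is that the relevant material sits in Chapter~6 rather than Chapter~5.
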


 Define the function $E:\mathfrak{A}_n \rightarrow C(\Omega)$ on each local elements $f \in C(\mathcal{G})$, 
\[
 E(f)(\sigma,g)=\begin{cases}
f(\sigma) & g=e \\
0 & o.w.
 \end{cases}
\]
Clearly $\|E(f)\| \leq \|f\|$ and $E$ is linear, so there is a bounded extension to all $\mathfrak{A}_n$. 
\begin{definition}
Let $\mu$ be a state. We call it a classical state if and only if for all $A \in \mathfrak{A}_n$ 
\[
\mu(A) = \mu(E(A))
\]
\end{definition}
In other words, classical states can only see the classical part of the observables. Because $E(A)$ can be identified with a continuous function $f \in C(\Omega)$ and, by the Riesz-Markov theorem, $\omega$ can be identified with a probability measure $\mu$ on $(\Omega, \mathcal{F})$.

\begin{theorem}\label{t2}
Let $\mathfrak{A}_n$ be the spin algebra and $\phi$ a classical interaction. Then a state $\mu$ satisfies the Gibbs-Araki condition if, and only if, the state is classical and satisfies the DLR equations.
\end{theorem}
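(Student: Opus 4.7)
The cornerstone is that for a classical interaction $\phi$, both $H_\Lambda(\phi)$ and the surface term $W_\Lambda(\phi)$ lie in the abelian subalgebra $C(\Omega)\subset\mathfrak{A}_n$. Consequently the dynamics $\tau_t$ fixes $C(\Omega)$ pointwise, and because $P=\beta W_\Lambda(\phi)$ is itself classical the Araki cocycle $\Gamma^P_t$ collapses to the ordinary exponential $e^{itP}$; in particular $Q:=\Gamma^{P}_{i\beta/2}$ is a strictly positive bounded element of $C(\Omega)$ depending only on $W_\Lambda$. The perturbed-state formula thereby reduces to $\mu^P(A)=\mu(QAQ)/\mu(Q^2)$. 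Let $E:\mathfrak{A}_n\to C(\Omega)$ denote the canonical diagonal conditional expectation. I will repeatedly use the commutation identity $\sigma_x^{(1)}h(\sigma^{(3)})=h(\mathrm{flip}_x\sigma^{(3)})\sigma_x^{(1)}$ to push $Q$ past off-diagonal operators.

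$(\Rightarrow)$ Assume the Gibbs--Araki condition. To prove $\mu$ is classical, fix $\Lambda$, a non-empty $B'\subseteq\Lambda$, and let $A=A_{B'}\sigma_{B'}^{(1)}$ with $A_{B'}\in C(\Omega_\Lambda)$, so that $E(A)=0$. Because $H_\Lambda$ is classical, $\mu_\Lambda=e^{-\beta H_\Lambda}/Z_\Lambda$ is diagonal and $\mu_\Lambda(A)=0$. Applying the factorisation to $AB$ for any $B\in C(\Omega_{\Lambda^c})$ (which commutes with $Q$) gives $\mu((QAQ)B)=0$; the commutation identity rewrites $QAQ=R_{B'}A_{B'}\sigma_{B'}^{(1)}$ with $R_{B'}\in C(\Omega)$ strictly positive, so the relation reads
\begin{equation*}
\mu\!\left(R_{B'}\,A_{B'}\,B\,\sigma_{B'}^{(1)}\right)=0 \quad \text{for all } A_{B'}\in C(\Omega_\Lambda),\ B\in C(\Omega_{\Lambda^c}).
\end{equation*}
Stone--Weierstrass density of $C(\Omega_\Lambda)\otimes C(\Omega_{\Lambda^c})$ in $C(\Omega)$ together with strict positivity of $R_{B'}$ then yield $\mu(h\sigma_{B'}^{(1)})=0$ for every $h\in C(\Omega)$; since $\Lambda$ and $B'$ were arbitrary, $\mu=\mu\circ E$. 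With classicality in hand, restrict the Gibbs--Araki identity to products $f\otimes g$ of classical observables, which commute with $Q$, to obtain
\begin{equation*}
\mu(Q^2 fg)=\mu(Q^2)\,\mu_\Lambda^{(0)}(f)\,\bar\mu(g),
\end{equation*}
where $\mu_\Lambda^{(0)}$ is the free-boundary classical Gibbs measure. Since $Q^2$ is a positive function of $W_\Lambda$ only, the identity reorganises in terms of $\mu$'s classical restriction and the DLR conditional density $\mu_\Lambda^{\sigma_{\Lambda^c}}(\sigma_\Lambda)\propto e^{-\beta H_\Lambda-\beta W_\Lambda}$, producing $\mathbb{E}_\mu(f\mid\mathcal{F}_{\Lambda^c})(\sigma_{\Lambda^c})=\mu_\Lambda^{\sigma_{\Lambda^c}}(f)$ $\mu$-a.s., which is the DLR equation.

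$(\Leftarrow)$ Conversely, assume $\mu$ is classical and DLR. Faithfulness follows from strict positivity of the DLR conditional densities. Both sides of the Gibbs--Araki factorisation annihilate $\ker E$ (the LHS by classicality of $\mu$ and $Q$, the RHS by diagonality of $\mu_\Lambda$), so it suffices to verify equality on products $f\otimes g$ of classical observables. Inserting the DLR equation into $\mu(Q^2 fg)$, the $W_\Lambda$-dependence of $Q^2$ combines with the DLR conditional density to leave, after normalisation, a product $\mu_\Lambda^{(0)}(f)\bar\mu(g)$ with $\bar\mu$ the induced marginal on $\mathfrak{A}_{\Lambda^c}$. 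The main difficulty is the first step of $(\Rightarrow)$: translating $\mu^P(A)=0$ into $\mu(A)=0$ for off-diagonal $A$ via the commutation of $Q$ with off-diagonal operators and the Stone--Weierstrass density argument, crucially leveraging the hypothesis that $\phi$ is classical so that $Q\in C(\Omega)$.
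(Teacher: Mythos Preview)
Your proof is correct and broadly parallel to the paper's, but the classicality step in $(\Rightarrow)$ is handled by a genuinely different manoeuvre. The paper argues that $\omega^P(Ae^{\beta W_\Lambda})=0$ by expanding $e^{\beta W_\Lambda}=\sum_{i,j}A_i\otimes B_j$ in diagonal tensors and using the product factorisation $\omega^P=\omega_\Lambda\otimes\bar\omega$, then recovers $\omega(A)=0$ from the inverse-perturbation relation; you instead push $Q$ through $\sigma_{B'}^{(1)}$ via the flip commutation, obtain $\mu(R_{B'}A_{B'}B\,\sigma_{B'}^{(1)})=0$, and close with Stone--Weierstrass density plus invertibility of the strictly positive $R_{B'}$. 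Both routes work; yours is arguably more self-contained since it avoids invoking the inverse perturbation. For the DLR part of $(\Rightarrow)$ and for $(\Leftarrow)$ the two arguments coincide in substance: the paper writes out the Radon--Nikodym and conditional-expectation identities explicitly where you sketch them, and for faithfulness the paper spells out what you compress into one line, namely that strict positivity of the DLR kernels forces $\mu$ to charge every cylinder, so $\mu(E(A^*A))=0$ implies $E(A^*A)=0$ and hence $A=0$ for local $A$.
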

\begin{proof}
Assume, first, that our state satisfies the Gibbs-Araki condition. Let us show that if $E(A)=0$ then $\omega(A)=0$. Let $A$ be a local observable at some finite set $\Lambda$. We know that for the perturbation $P=\beta W_\Lambda$ the perturbed state satisfies:
\[
\omega^P(A) = \omega_\Lambda(A)
\]
Since the interaction is classical the operator $e^{-\beta H_\Lambda(\phi)}$ is diagonal and, if $E(A)=0$, then $E(A)e^{-\beta H_\Lambda(\phi)} = E(Ae^{-\beta H_\Lambda(\phi)}) =0$, so $\omega^P(A) = 0$. Again, since the interaction is classical, $e^{-\beta W_\Lambda} \in \mathcal{D}$, and, because $\mathcal{D} = \mathcal{D}_\Lambda \otimes \mathcal{D}_{\Lambda^c}$, we have an expansion for the surface energy term with relation to the elementary tensors:
\[
e^{\beta W_\Lambda} = \sum_{i,j \geq 1} A_i \otimes B_j
\]
Where $A_i \in \mathcal{D}_\Lambda$ and $B_j \in \mathcal{D}_{\Lambda^c}$. Calculating $\omega^P(Ae^{\beta W_\Lambda})$ we get:
\[
\omega^P(Ae^{\beta W_\Lambda}) = \sum_{i,j \geq 1} \omega_\Lambda(AA_i)\omega^P(B_j)
\]
Multiplying a matrix with zero diagonal by a diagonal matrix doesn't change its diagonal, so by our above reasoning $\omega_\Lambda(AA_i) = 0$ and, consequently, $\omega^P(Ae^{\beta W_\Lambda})=0 \Rightarrow \omega(A)=0$. 
This argument is independent of the initial set $\Lambda$, so for all $A \in \underset{\Lambda \in \mathcal{F}(\mathbb{Z}^d)}{\bigcup}\mathfrak{U}_\Lambda \cap ker E$. We claim that this is a dense subset of the kernel. Indeed, $ker E$ is complemented in $\mathfrak{U}$, so if we take $A_n \in \underset{\Lambda \in \mathcal{F}(\mathbb{Z}^d)}{\bigcup}\mathfrak{U}_\Lambda$ converging to $A \in ker(E)$ we can write:
\[
A_n = B_n + C_n, \;\; B_n \in ker E \;\; C_n \in Im E
\]
The projections are continuous, so we know that $B_n$ converges to $A$ and our assertion is correct. With this, we concluded that if $E(A)=0$ then $\omega(A)=0$. Now, every element $A \in \mathfrak{U}$ can be written as:
\[
A = B + C, \;\; C \in \mathcal{D}\;\; \text{and} \;\;\; E(B)=0
\]
By our previous considerations, it is clear that $\omega(A) = \omega(C) = \omega(E(A))$. For a fixed $\Lambda$, note that the Gibbs-Araki condition is equivalent to:
\begin{equation}\label{eq1}
\omega^P = \omega_\Lambda \otimes \omega^P = \omega^P(Id_{\Lambda^c} \otimes \omega_\Lambda)
\end{equation}
The Riesz-Markov theorem tells us that there are $\mu, \mu_P$ probability measures such that for all $f \in C(\Omega)$:
\[
\omega(f) = \int_\Omega f d\mu \;\;\; \omega^P(f) = \int_\Omega f d\mu_P
\]

Both measures are related by their Radon-Nykodim derivative:
\[
\frac{d \mu_P}{d\mu} = e^{\beta W_\Lambda}
\]

Equation \ref{eq1} tells us that the conditional expectation of $\mu_P$ relative to the $\sigma$-algebra $\mathcal{F}_{\Lambda^c}$ is, for all $f \in C(\Omega)$:
\[
\mathbb{E}_{\mu_P}(f)(\eta_{\Lambda^c}) = \omega_\Lambda(f)(\eta_{\Lambda^c}) = \frac{\sum_{\sigma_\Lambda \in \Omega_\Lambda}f(\sigma_\Lambda \eta_{\Lambda^c})e^{-\beta H_\Lambda(\phi)(\sigma_\Lambda)}}{\sum_{\sigma_\Lambda \in \Omega_\Lambda}e^{-\beta H_\Lambda(\phi)(\sigma_\Lambda)}}
\]

Denote $\left. \nu \right|_{\Lambda^c}$ the restriction of a measure $\nu$ to the $\sigma$ algebra $\mathcal{F}_{\Lambda^c}$. We want to calculate the conditional expectation for $\mu$. Note that:
\begin{equation}\label{eq2}
\int \mathbb{E}_\mu(f) d\left.\mu\right|_{\Lambda^c} = \int f d\mu = \int f \frac{d\mu}{d\mu_P} d\mu_P = \int \mathbb{E}_{\mu_P}\left(f \frac{d\mu}{d\mu_P}\right) d\left.\mu_P\right|_{\Lambda^c} 
\end{equation}
 When two measures are absolutely continuous with respect to each other then their restrictions are absolutely continuous too. The relation between the Radon-Nykodim derivatives is the following:
 \begin{equation}\label{eq3}
 \int \frac{d\left.\mu\right|_{\Lambda^c}}{d\left.\mu_P\right|_{\Lambda^c}}d\left.\mu_P\right|_{\Lambda^c} = \int d\left.\mu\right|_{\Lambda^c} = \int d\mu = \int \frac{d\mu}{d\mu_P}d\mu_P = \int \mathbb{E}_{\mu_P}\left(\frac{d\mu}{d\mu_P}\right)d\left.\mu_P\right|_{\Lambda^c}
 \end{equation}

The Equations \ref{eq2} and \ref{eq3} together give us:
\[
\mathbb{E}_\mu(f)\mathbb{E}_{\mu_P}\left(\frac{d\mu}{d\mu_P}\right)= \mathbb{E}_{\mu_P}\left(f\frac{d\mu}{d\mu_P}\right) 
\]
For all $f \in C(\Omega)$. Now, consider $\eta_{\Lambda^c} \in \Omega_{\Lambda^c}$ and let us calculate both sides of the equation above. The right hand side gives us:
\[
\mathbb{E}_{\mu_P}\left(f\frac{d\mu}{d\mu_P}\right)(\eta_{\Lambda^c}) = \frac{\sum_{\sigma_\Lambda \in \Omega_\Lambda}f(\sigma_\Lambda \eta_{\Lambda^c})e^{-\beta (H_\Lambda(\phi)(\sigma_\Lambda)+W_\Lambda(\sigma_\Lambda \eta_{\Lambda^c}))}}{\sum_{\sigma_\Lambda \in \Omega_\Lambda}e^{-\beta H_\Lambda(\phi)(\sigma_\Lambda)}}
\]
In the l.h.s we have
\[
\mathbb{E}_{\mu_P}\left(\frac{d\mu}{d\mu_P}\right)(\eta_{\Lambda^c}) = \frac{\sum_{\sigma_\Lambda \in \Omega_\Lambda} e^{-\beta (H_\Lambda(\phi)(\sigma_\Lambda)+W_\Lambda(\sigma_\Lambda \eta_{\Lambda^c}))}}{\sum_{\sigma_\Lambda \in \Omega_\Lambda}e^{-\beta H_\Lambda(\phi)(\sigma_\Lambda)}} 
\]
Since $H_\Lambda^\eta(\phi)(\sigma_{\Lambda}) = H_\Lambda(\phi)(\sigma_{\Lambda}) + W_\Lambda(\sigma_\Lambda \eta_{\Lambda^c})$ we conclude that:
\[
\mathbb{E}_\mu(f) = \mu_{\Lambda,\beta}^{\eta,\phi}(f)
\]

Now assume that the state is classical and satisfies the DLR equation. It is clear, by our previous calculations, that we can reverse the argument and conclude that all perturbation states $\omega^P$ are product states with the local Gibbs state and another state. Now we need to show that the state is faithful. Consider, first, $A \in \mathfrak{U}$ a positive element. We will show that if $\omega(A) = 0$ then $A=0$. 
It is a well-known result of measure theory that if $f$ is a positive function with $\int fd\mu =0$ then $f=0$ for $\mu$ a.e. If the support of the measure $\mu$ is the whole space, then the continuity of $f$ would imply that $f=0$. Indeed, if let $A \subset \Omega$ such that $f(x) = 0, \forall x \in A$. We know that $\mu(A) = 0$. We must have that $A^c$ in dense. Indeed, Suppose that $A^c$ is not dense i.e., $B= \overline{A^c}$ is a proper subset of $\Omega$. Because the measure is supported on the whole space, then $\mu(B)< \mu(X)$ and, then, $B^c$ must have a positive measure. But $B^c \subset A$, giving us a contradiction.
Now to show that $\mu$ is supported in the whole space, consider the cylinder sets $C_\Lambda^{j_1,...,j_{|\Lambda|}} = \{\sigma \in \Omega: \omega_{i_k} = j_k,\;\; 1 \leq k \leq |\Lambda|\}$. Because they are a basis to the topology of $\Omega$ if we can show that they have strictly positive measure then we are done.
\[
\int_\Omega \chi_{C_\Lambda^{j_1,...,j_{|\Lambda|}}} d\mu  =  \int_\Omega \mu_{\Lambda,\beta}^{\eta, \phi}(C_\Lambda^{j_1,...,j_{|\Lambda|}})d\mu  
\]

Because $\mu_{\Lambda,\beta}^{\eta, \phi}(C_\Lambda^{j_1,...,j_{|\Lambda|}})$ is a positive number that doesn't depend on the boundary condition $\eta$ we know that the cylinders have strictly positive measure. 

Now, consider $A$ a positive element of $\mathfrak{U}$ such that $\omega(A)=0$. By the above reasoning, $E(A)=0$. If $A$ is a local observable, then this implies that $E(A)=0$, and the only positive matrix with zero diagonal is the zero matrix. So, for all local observables $A$, if $\omega(A^*A)=0$ then $A=0$. 
\end{proof}
  
	\begin{theorem}
		$\mathscr{G}_{\beta, DLR}(\phi) \subseteq K_\beta(\phi)$.
	\end{theorem}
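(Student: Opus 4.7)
The plan is to reduce to the preceding theorem and then argue that every weak-$*$ limit of finite volume Gibbs states is KMS for the global dynamics $\tau_t$. By the theorem just established, $\mathscr{G}_{\beta,DLR}(\phi)=\mathscr{G}_\beta(\phi)$, so any $\mu \in \mathscr{G}_{\beta,DLR}(\phi)$ is a $w^*$-limit of finite volume functionals $\mu_{\beta,\phi,\Lambda_m}^{\omega_m,N_m}$. Because $K_\beta(\phi)$ is $w^*$-closed and convex, it suffices to show that any such limit satisfies $\mu(AB)=\mu(\tau_{i\beta}(B)A)$ for a norm-dense set of local $A$ and entire analytic $B$. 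Fix such a pair with supports inside some finite $\Lambda_0$ and work only with $\Lambda_m \supset \Lambda_0$.

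The next step is to establish an exact, finite-volume KMS-type identity for each $\mu_{\beta,\phi,\Lambda}^{\omega,N}$. As noted right after Proposition~3.2, restricting to a fixed number $N$ of boundary jumps the density $D_{\beta,\Lambda}^{\omega,N}$ is self-adjoint and positive; hence it can be written as $e^{-\beta K_{\Lambda,\omega,N}}$ for a self-adjoint $K_{\Lambda,\omega,N}\in C^*(\mathcal{G}_\Lambda)$. Cyclicity of the trace then yields
\[
\mu_{\beta,\phi,\Lambda}^{\omega,N}(AB)=\mu_{\beta,\phi,\Lambda}^{\omega,N}\bigl(\tau^{K_{\Lambda,\omega,N}}_{i\beta}(B)\,A\bigr),
\]
i.e.\ each of these states is honestly KMS for its own auxiliary dynamics $\tau^{K_{\Lambda,\omega,N}}_t(\cdot) = e^{itK_{\Lambda,\omega,N}}(\cdot)e^{-itK_{\Lambda,\omega,N}}$.

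The third step is to pass to the limit: show that $\tau^{K_{\Lambda,\omega,N}}_t(B) \to \tau_t(B)$ in norm as $\Lambda_m \nearrow \mathbb{Z}^d$, uniformly in the choice of $(\omega,N)$ and uniformly for $t$ in compact subsets of the strip $0\le \mathrm{Im}(t)\le\beta$. I would exploit the path-integral representation of $D_{\beta,\Lambda}^{\alpha_{\Lambda^c}}$ obtained by iterating Lemma~\ref{lemma_property_path}, which writes the density as a time-ordered product of ordinary Gibbs factors $e^{-s_j\beta H_\Lambda^{\omega_j}(\phi)}$ with classical boundary conditions $\omega_j$. Since $\phi$ has range $R$ and $A,B$ are supported in $\Lambda_0$, for $\mathrm{dist}(\Lambda_0,\Lambda^c)>R$ the boundary perturbations $W_\Lambda^{(0),\omega_j}$ commute with $A,B$ and with all local terms driving their finite-time evolution. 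Combining this with the standard Lieb--Robinson estimate $\|\tau_t^\Lambda(B)-\tau_t(B)\|\to 0$ on local $B$ (extended to the strip via the Dyson series, using short range and uniform boundedness of $\phi$), we obtain the desired convergence of the auxiliary dynamics to the global $\tau_t$ on the observables at hand. Passing to the limit in the finite-volume KMS identity and using the $w^*$-convergence of the states concludes $\mu(AB)=\mu(\tau_{i\beta}(B)A)$.

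The main obstacle is Step three: the auxiliary Hamiltonian $K_{\Lambda,\omega,N}=-\beta^{-1}\log D_{\beta,\Lambda}^{\omega,N}$ is only defined abstractly, not as a sum of local interactions, so one cannot directly invoke standard Lieb--Robinson machinery for it. The point is to bypass this by working with the \emph{representation} of $D_{\beta,\Lambda}^{\omega,N}$ as an average of time-ordered products, rather than with $K_{\Lambda,\omega,N}$ itself. Concretely, one controls $\tau^{K_{\Lambda,\omega,N}}_{i\beta}(B)\,D_{\beta,\Lambda}^{\omega,N}=D_{\beta,\Lambda}^{\omega,N}\,B$ inside the trace by commuting $B$ through each factor $e^{-s_j\beta H_\Lambda^{\omega_j}(\phi)}$, incurring only commutators of $B$ with terms whose support intersects $\mathrm{supp}(B)$; these are independent of the boundary data for $\Lambda$ large enough. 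A careful estimate of the Dyson expansion on the strip then provides the uniform norm convergence needed to close the argument.
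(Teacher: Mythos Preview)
Your route is genuinely different from the paper's. The paper does \emph{not} pass through $\mathscr{G}_{\beta,DLR}(\phi)=\mathscr{G}_\beta(\phi)$ and limits of finite-volume KMS identities; instead it verifies the Gibbs--Araki condition directly for a DLR state $\mu$. Concretely, it fixes $\Lambda$, takes $P=\beta W_\Lambda(\phi)$, approximates the perturbation cocycle $\Gamma_{i\beta}^P$ by its finite-volume versions $\Gamma_{i\beta,\Lambda'}^P$, and then applies the DLR equation at scale $\Lambda'_R$ to obtain the product decomposition $\mu^P=\mu_{\beta,\phi,\Lambda}\otimes\widetilde{\mu}$. Faithfulness is read off from the DLR equation as well. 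This sidesteps entirely the problem of identifying an auxiliary Hamiltonian behind $D_{\beta,\Lambda}^{\omega,N}$.

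Your Step~3, which you flag as the main obstacle, is indeed a genuine gap and the proposed workaround does not close it. Writing $D_{\beta,\Lambda}^{\omega,N}$ as a path average of products $\prod_j e^{-s_j\beta H_\Lambda^{\omega_j}}$ and commuting $B$ to the left does not yield a single imaginary-time evolution of $B$: it yields the \emph{composite} $\tau^{H_\Lambda^{\omega_1}}_{is_1\beta}\circ\cdots\circ\tau^{H_\Lambda^{\omega_n}}_{is_n\beta}(B)$. To identify this with $\tau_{i\beta}(B)$ you would need each factor to be close to $\tau_{is_j\beta}$ \emph{on the already-evolved, only quasi-local} observable, and then to compound $n=N+1$ such approximations. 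The error per factor depends on the distance from the (growing) quasi-local support of $B$ to $\partial\Lambda$, while the number of factors is $N_m+1$, which in the definition of $\mathscr{G}_\beta(\phi)$ is allowed to grow without any tie to $\Lambda_m$. No uniform-in-$(\omega,N)$ bound drops out of the Lieb--Robinson/Araki strip estimates here, and your sketch does not supply one. In short, the abstract $K_{\Lambda,\omega,N}$ is not a local perturbation of $H_\Lambda(\phi)$, and the factor-by-factor commutation does not reduce the problem to the standard convergence $\tau_t^\Lambda\to\tau_t$. The paper's Gibbs--Araki route avoids precisely this difficulty by never needing the auxiliary dynamics at all.
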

	\begin{proof}
		Let $\mu$ be a DLR state. That it is faithful is straightforward, since every positive operator can be approximated by local positive operators and, a simple application of the DLR property gives the result. We need to show that the perturbed state has the right property. Since the convergence of the dynamics is uniform in compacts, we can approximate the Dyson series in the thermodynamic limit by the local ones,
  \[
\Gamma_{i\beta}^{P} = \lim_{\Lambda'\nearrow \mathbb{Z}^d}\Gamma_{i\beta,\Lambda'}^{P},
  \]
  where $P = \beta W_\Lambda(\phi)$. By using the DLR equation for $\Lambda'_R$, we get
  \begin{align*}
\mu(f*\Gamma_{i\beta}^{P}) &= \lim_{\Lambda' \nearrow \mathbb{Z}^d}\mu(f*\Gamma_{i\beta,\Lambda'}^{P})\\
&= \lim_{\Lambda' \nearrow \mathbb{Z}^d}\mu(\mu_{\beta,\phi,\Lambda'_R}(f*\Gamma_{i\beta,\Lambda'}^{P})) \\
\lim_{\Lambda' \nearrow \mathbb{Z}^d}\mu(\mu_{\beta,\phi,\Lambda}\otimes\mu_{\beta,\phi,\Lambda'_R\setminus\Lambda}(f)) = \mu_{\beta,\phi,\Lambda}\otimes\widetilde{\mu}(f),
  \end{align*}
  where $\widetilde{\mu}$ is the composition of the right functionals. This shows that the perturbation is a product state with the empty boundary condition Gibbs state.
	\end{proof}
	
	\section{Final Remarks}
	
We expect that a proposal of boundary conditions in a suitable language can be helpful to study phase diagrams of quantum models.

	\section{Acknowledgements}
	
LA thanks FAPESP grants 2017/18152-2 and 2020/14563-0, the University of Victoria for the hospitality, where most of this work was done, and professor Walter Pedra for his guidance at the beginning of his career on the topic of $\text{C}^*$-algebras. RB was supported by CNPq grants 312294/2018-2 and 408851/2018-0, by FAPESP grant 16/25053-8, and by the University Center of Excellence \textquotedblleft Dynamics, Mathematical Analysis and Artificial Intelligence", at the Nicolaus Copernicus University, he thanks Roberto Fern\'{a}ndez for all the references and advice during his earlier years of career, in particular, the reference \cite{Gruber} and recently point out the boundary conditions on Simon's book \cite{Simon}. LA and RB thank Aernout van Enter for all the discussions and generosity over the years, for sharing your knowledge about mathematical physics and the literature of the area, and for his comments about the first versions of this manuscript. ML was supported by the Natural Sciences and Engineering Research Council of Canada, Discovery Grant RGPIN-2017-04052.

	\begin{appendices}

		\section{Basics of Point Process}
		
		Let $X$ be a Polish space and $\mathscr{B}(X)$ its Borel $\sigma$-algebra. The space of point measures on $X$, that we will denote by $\mathbb{N}(X)$ henceforth, is defined as
		\[
		\mathbb{N}(X)\coloneqq \{\mu = \sum_{i =1}^n \delta_{x_i}, x_i \in X, n \in \mathbb{N}\cup \{0, +\infty\}\}.
		\]
		Where the case $n=0$ is the null measure, i.e. $\mu_\emptyset(A)=0$ for every $A \in \mathscr{B}(X)$. Define for each measurable set $A\subset X$ the projection map $\pi_A : \mathbb{N}(X)\rightarrow \mathbb{N}\cup\{0,+\infty\}$ by
		\[
		\pi_A(\mu) = \mu(A),
		\]
		and consider $\mathscr{N}(X)$ be the smallest $\sigma$-algebra on $\mathbb{N}(X)$ such that all such projections are measurable. We are ready to define point processes
		\begin{definition}
			Let $(\Omega, \mathscr{A}, \mathbb{P})$ be a probability space. A point process is a measurable function $N: \Omega \rightarrow \mathbb{N}(X)$. 
		\end{definition}
		A point process is said to be \emph{simple} if for almost every $\omega \in \Omega$ it holds that $N(\omega)(\{x\})<2$, for any $x \in X$. Notice that, since the projections are measurable, we can define new measurable functions using the point process by
		\[
		N_A\coloneqq \pi_A \circ N.
		\]
		One can see $N(A)$ as a random choice of points inside $B$. 
		\begin{proposition}
			The following two assertions are equivalent
			\begin{enumerate}[label=(\roman*), series=l_after]
				\item $N:\Omega \rightarrow \mathbb{N}(X)$ is a point process;
				\item $N_A: \Omega \rightarrow \mathbb{N}\cup\{0,+\infty\}$ is measurable for every $A \in \mathscr{B}(X)$. 
			\end{enumerate}
		\end{proposition}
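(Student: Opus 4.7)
The plan is to establish both implications directly from the definition of $\mathscr{N}(X)$ as the smallest $\sigma$-algebra on $\mathbb{N}(X)$ making every projection $\pi_A$ measurable. The forward direction $(i)\Rightarrow(ii)$ is essentially a one-line composition argument, while the converse $(ii)\Rightarrow(i)$ is the standard ``generating sets determine measurability'' trick.

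For $(i)\Rightarrow(ii)$, I would simply note that $N_A = \pi_A \circ N$ by definition of $N_A$. Since $\pi_A : \mathbb{N}(X) \to \mathbb{N}\cup\{0,+\infty\}$ is measurable with respect to $\mathscr{N}(X)$ (that is precisely how $\mathscr{N}(X)$ was defined) and $N$ is assumed measurable with respect to $\mathscr{A}/\mathscr{N}(X)$, the composition is $\mathscr{A}$-measurable, which is exactly assertion $(ii)$.

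For $(ii)\Rightarrow(i)$, I would introduce the family
\[
\mathcal{C} \coloneqq \{B \in \mathscr{N}(X) : N^{-1}(B) \in \mathscr{A}\}.
\]
A routine check shows $\mathcal{C}$ is a $\sigma$-algebra on $\mathbb{N}(X)$ (closure under complements and countable unions is inherited from $\mathscr{A}$ through the set-theoretic identities $N^{-1}(B^c)=N^{-1}(B)^c$ and $N^{-1}(\bigcup_n B_n)=\bigcup_n N^{-1}(B_n)$). Next, for every $A \in \mathscr{B}(X)$ and every measurable $M \subset \mathbb{N}\cup\{0,+\infty\}$, one has
\[
N^{-1}(\pi_A^{-1}(M)) = (\pi_A \circ N)^{-1}(M) = N_A^{-1}(M) \in \mathscr{A}
\]
by hypothesis $(ii)$. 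Hence $\mathcal{C}$ contains the generating family $\{\pi_A^{-1}(M) : A \in \mathscr{B}(X),\, M \text{ measurable}\}$ of $\mathscr{N}(X)$. Since $\mathcal{C}$ is a $\sigma$-algebra containing a generating family, $\mathcal{C} = \mathscr{N}(X)$, which means $N^{-1}(B) \in \mathscr{A}$ for every $B \in \mathscr{N}(X)$, i.e.\ $N$ is a point process.

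There is no real obstacle here: the entire content is the universal property built into the definition of $\mathscr{N}(X)$. The only thing to be slightly careful about is making sure that the target space $\mathbb{N}\cup\{0,+\infty\}$ carries a genuine $\sigma$-algebra (the Borel $\sigma$-algebra of the one-point compactification, or equivalently the power set on this countable space) so that the phrase ``$\pi_A$ is measurable'' and ``$N_A$ is measurable'' refer to the same target structure; once that convention is fixed, the proof is a two-line verification.
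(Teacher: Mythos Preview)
Your proof is correct and follows essentially the same approach as the paper: both directions rest on the identity $N_A = \pi_A \circ N$ and on the fact that $\mathscr{N}(X)$ is generated by the sets $\pi_A^{-1}(M)$. The only difference is cosmetic: the paper simply asserts that checking measurability on the generating sets $\pi_B^{-1}(C)$ suffices, whereas you spell out the underlying good-sets argument explicitly via the auxiliary $\sigma$-algebra $\mathcal{C}$.
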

		\begin{proof}
			$(i)\Rightarrow (ii)$. Straightforward since $\pi_A$ and $N$ are measurable, the composition is measurable. 
			
			$(ii)\Rightarrow (i)$. For each $C \subset \mathbb{N}\cup\{0,+\infty\}$, we have that $\pi_B^{-1}(C)$ is measurable by definition of the $\sigma$-algebra $\mathscr{N}(X)$. To show that $N$ is a measurable function, it is sufficient to show that $N^{-1}(\pi_B^{-1}(C))$ is measurable. But $N^{-1}(\pi_B^{-1}(C)) = (\pi_B\circ N)^{-1}(C)$ thus it is measurable. 
		\end{proof}
		
		\begin{example}
			Let $(\Omega, \mathscr{A}, \mathbb{P})$ be a probability space and $\xi_i:\Omega \rightarrow X$, $i=1,\dots, n$ random variable. Then,
			\[
			N = \sum_{i=1}^n \delta_{\xi_i}.
			\]
		\end{example}
		
		A point process  $N:\Omega \rightarrow \mathbb{N}(X)$ is a \emph{Poisson Point Process} if for every $A \in \mathscr{B}(X)$ the two following conditions are satisfied
		\begin{itemize}
			\item $\mathbb{P}(N_A=k)= \frac{|\mu(A)|^k}{k!} e^{-\mu(A)}$, for any $k\geq 0$. 
			\item For any $B_1, \dots, B_m \in \mathscr{B}(X)$ pairwise disjoint the random variables $N(B_1), \dots, N(B_m)$ are independent.
		\end{itemize}
		
		We will show now that a Poisson point process exists and that it also has a representation as an empirical process. Let $\mu$ be a finite measure on $X$ and $N$ be the following point process
		\begin{equation}\label{Poisson_rep}
			N \coloneqq \sum_{i=1}^\tau \delta_{\xi_i},
		\end{equation}
		where $\tau, \xi_1,\xi_2,\dots$ are a countable family of independent random variables such that
		\begin{itemize}
			
			\item $\mathbb{P}(\tau = k) = \frac{\mu(X)^k}{k!}e^{-\mu(X)}$ 
			
			\item $\mathbb{P}(\xi_i \in B) = \frac{\mu(B)}{\mu(X)}$, for any $B\in \mathscr{B}(X)$ and $i=1,2, \dots$.
			
		\end{itemize}
		We will show first that such a family of random variables exists. Consider $\Omega = \mathbb{N}_0 \times \prod_{i\in \mathbb{N}} X$, with the product $\sigma-$algebra. We can define on it the product probability measure $\mathbb{P}$ on the cylinder sets by 
		\[
		\mathbb{P}(C \times \prod_{i \in \mathbb{N}} B_i) = P_\mu(X)(C)\times \prod_{i\in \mathbb{N}}\frac{\mu(B_i)}{\mu(X)},
		\]
		where $P_\mu(X)$ is the Poisson distribution with parameter $\mu(X)$. That this defines a probability measure on the product space $\Omega$ follows from \cite{Saeki}. Note that, by construction, the projections are independent random variables with the wanted distribution.
		
		\begin{proposition}
			Let $X$ be a Polish space with its $\mathscr{B}(X)$ Borel $\sigma$-algebra and $\mu$ a finite measure. Then the point process $N$ defined above is a Poisson Point Process. 
		\end{proposition}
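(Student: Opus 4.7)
The plan is to verify the two defining conditions of a Poisson point process (Poisson marginals and independence on disjoint sets) by conditioning on $\tau$ and exploiting the Poisson--multinomial factorization.

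First I would check the Poisson marginal. Fix $A \in \mathscr{B}(X)$ and condition on $\tau = k$. Because the $\xi_i$ are i.i.d.\ with $\mathbb{P}(\xi_i \in A) = \mu(A)/\mu(X)$ and are independent of $\tau$, the random variable $N(A) = \#\{1\le i\le \tau : \xi_i \in A\}$ has a $\mathrm{Binomial}(k, \mu(A)/\mu(X))$ distribution conditionally on $\tau = k$. Then by total probability,
\[
\mathbb{P}(N(A)=j) = \sum_{k\ge j} e^{-\mu(X)}\frac{\mu(X)^k}{k!}\binom{k}{j}\left(\frac{\mu(A)}{\mu(X)}\right)^j\left(1-\frac{\mu(A)}{\mu(X)}\right)^{k-j},
\]
which after separating the $k=j+\ell$ term collapses via the exponential series to $e^{-\mu(A)}\mu(A)^j/j!$. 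This is the standard Poisson--thinning calculation.

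Next I would verify independence. Given pairwise disjoint $B_1,\dots,B_m \in \mathscr{B}(X)$, set $B_0 = X \setminus \bigcup_{i=1}^m B_i$ and $p_i = \mu(B_i)/\mu(X)$. Conditionally on $\tau = k$, each $\xi_i$ lands independently in exactly one of $B_0,\dots,B_m$, so $(N(B_0),\dots,N(B_m))$ is $\mathrm{Multinomial}(k; p_0,\dots,p_m)$. Writing the joint law and summing out both $j_0$ and $k$,
\[
\mathbb{P}(N(B_1)=j_1,\dots,N(B_m)=j_m) = \sum_{j_0 \ge 0}\sum_{k = j_0+\cdots+j_m} e^{-\mu(X)}\frac{\mu(X)^k}{k!}\frac{k!}{j_0!\cdots j_m!}\prod_{i=0}^m p_i^{j_i},
\]
and the inner sum over $k$ (equivalently over $j_0$) factors as $\prod_{i=1}^m \tfrac{\mu(B_i)^{j_i}}{j_i!}\,e^{-\mu(B_i)}$, after using $\sum_i \mu(B_i) = \mu(X)$ inside the exponential. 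This simultaneously identifies each $N(B_i)$ as Poisson with parameter $\mu(B_i)$ and shows the joint law factorizes, i.e.\ $N(B_1),\dots,N(B_m)$ are independent.

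There is no real obstacle: the only subtlety is measurability of $N$, which follows from the previous proposition since for each $A$ the map $\omega \mapsto N_A(\omega) = \sum_{i=1}^{\tau(\omega)} \mathbf{1}_A(\xi_i(\omega))$ is a countable sum of measurable functions and hence measurable. The bulk of the argument is the two bookkeeping computations above; nothing more is needed because finite $\mu$ guarantees $\tau < \infty$ almost surely, so all the sums make sense termwise.
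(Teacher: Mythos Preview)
Your proposal is correct and follows essentially the same route as the paper: condition on $\tau$, identify the conditional law of the counts as binomial (respectively multinomial), and sum over $\tau$ to obtain the Poisson marginals and the product form of the joint law. The only cosmetic difference is that the paper first treats the case $\bigcup_i B_i = X$ and then reduces the general case to it by adjoining the complement, whereas you introduce $B_0 = X \setminus \bigcup_i B_i$ from the outset; the computations are otherwise identical.
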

		\begin{proof}
			Let $B \in \mathscr{B}(X)$. Using the independence of the random variables, we have
			\begin{align*}
				\mathbb{P}(N_B = k) &= \sum_{m\geq k}\mathbb{P}\left(\sum_{i=1}^m \delta_{\xi_i}(B)=k\right)\mathbb{P}(\tau = m) \\
				&= e^{-\mu(X)}\sum_{m\geq k}\frac{\mu(X)^m}{m!}\mathbb{P}\left(\sum_{i=1}^m \delta_{\xi_i}(B)=k\right)
			\end{align*}
			
			Hence
			\begin{align*}
				\mathbb{P}\left(\sum_{i=1}^m \delta_{\xi_i}(B)=k\right) &= \sum_{\substack{a_1+\dots+a_m = k \\ a_i=0,1}}\mathbb{P}\left(\delta_{\xi_i}(B) = a_i, i=1,\dots, m\right) \\
				&= \sum_{\substack{a_1+\dots+a_m = k \\ a_i=0,1}} \prod_{i=1}^m \mathbb{P}(\delta_{\xi_i}(B) = a_i),
			\end{align*} 
			where the last equality is due to the independence of the random variables. By our hypothesis on the random variables $\xi_i$, we have
			\[
			\mathbb{P}(\delta_{\xi_i}(B)= a_i) = \begin{cases}
				\frac{\mu(B^c)}{\mu(X)}, & a_i=0  \\
				\frac{\mu(B)}{\mu(X)}, & a_i = 1.
			\end{cases}
			\] 
			A standard stars and bars argument gives us 
			\[
			\sum_{\substack{a_1+\dots+a_m = k \\ a_i=0,1}} \prod_{i=1}^m \mathbb{P}(\delta_{\xi_i}(B) = a_i) = \binom{m}{k} \frac{\mu(B)^k\mu(B^c)^{m-k}}{\mu(X)^m}.
			\]
			Hence,
			\[
			\mathbb{P}(N_B = k) = \frac{\mu(B)^k}{k!}e^{-\mu(X)}\sum_{m\geq k} \frac{\mu(B^c)^{m-k}}{(m-k)!} = \frac{\mu(B)^k}{k!}e^{-\mu(B)}.
			\]
			
			Consider $B_1, B_2, \dots, B_m$ disjoint measurable sets. In order to show that the random variables $N_{B_1}, \dots, N_{B_m}$ are independent it is sufficient to show that
			\[
			\mathbb{P}(N_{B_i}= n_i, i=1, \dots, m) = \prod_{i=1}^m \mathbb{P}(N_{B_i} = n_i).
			\]
			We will assume that $\cup_{i=1}^m B_i = X$ now and show how to prove the general case later. With this assumption, necessarily we must have $\tau(\omega)= n= \sum_{i=1}^m n_i$. Thus, using independence we get
			\begin{align*}
				\mathbb{P}(N_{B_i}= n_i, i=1, \dots, m) &= \frac{\mu(X)^n}{n!}e^{-\mu(X)}\mathbb{P}\left(\sum_{i=1}^n\delta_{\xi_i}(B)= n_j, j=1, \dots, m\right) \\
				&=\sum_{\substack{a_{1,j}+\dots+a_{n,j} = n_j \\ a_{i,j}=0,1}}\mathbb{P}\left(\delta_{\xi_i}(B)= a_{i,j}, j=1, \dots, m, i=1,\dots, n\right)	
			\end{align*}
			Since the sets $B_1, \dots, B_m$ are disjoint, if $a_{i,j}=1$ for some $j$, then $a_{i,k}=0$ for any other $k\neq j$ since the opposite would imply that there is a point in $B_j \cap B_k$. Thus, using independence of the random variables $\xi_i$ we get
			\begin{align*}
				\sum_{\substack{a_{1,j}+\dots+a_{n,j} = n_j \\ a_{i,j}=0,1}}\mathbb{P}\left(\delta_{\xi_i}(B)= a_{i,j}, j=1, \dots, m, i=1,\dots, n\right) &= \sum_{\substack{a_{1,j}+\dots+a_{n,j} = n_j \\ a_{i,1}+\dots + a_{i,m}= 1 \\ a_{i,j}=0,1}}\prod_{i=1}^n\mathbb{P}\left(\delta_{\xi_i}(B)= a_{i,j}, j=1, \dots, m\right) \\
				& \sum_{\substack{a_{1,j}+\dots+a_{n,j} = n_j \\ a_{i,1}+\dots + a_{i,m}= 1 \\ a_{i,j}=0,1}}\prod_{i=1}^n\left(\frac{\mu(B_i)}{\mu(X)}\right)^{n_i}.
			\end{align*}
			Consider $a_{i,j}=0,1$ a solution to $a_{1,j}+\dots+a_{n,j} = n_j$, for any $j=1,\dots, m$ such that $a_{i,1}+\dots + a_{i,m}= 1$. The second equation says that for each $i$ there must be only one $j$ with $a_{i,j}\neq 0$. So we proceed in the following way. To produce a solution  to $a_{1,j}+\dots+a_{n,j} = n_j$ satisfying this constraint, we first choose $n_1$ indices $i$ to put as equal to $1$ and the rest we put equals to zero. For $j=2$ we now have $n-n_1$ indices avaible, so we choose $n_2$ of those to put as equal to $1$. We can proceed inductively until we reach the case $j=m$. This reasoning implies that the number of possible solutions $a_{i,j}$ is exactly
			\[
			\binom{n}{n_1}\binom{n-n_1}{n_2}\dots\binom{n-n_1-\dots-n_{m-1}}{n_m} = \frac{n!}{n_1!n_2!\dots n_m!}
			\]
			Hence,
			\[
			\mathbb{P}(N_{B_i}= n_i, i=1, \dots, m) = \frac{e^{-\mu(X)}}{n_1!n_2!\dots n_m!}\prod_{i=1}^m\mu(B_i)^{n_i},
			\]
			rearranging the terms and using that $\mu(X)=\mu(B_1)+\dots+\mu(B_m)$ yields the desired result. Consider now the general case, i.e., any family of disjoint measurable sets $B_1,\dots, B_m$. Write $B=\cup_{i=1}^m B_i$ and using our previous calculations we get
			\begin{align*}
				\mathbb{P}(N_{B_i}= n_i, i=1, \dots, m) &= \sum_{k\geq 0}\mathbb{P}(N_{B_i}= n_i, i=1, \dots, m, N_{X\setminus B}=k) \\
				&=\prod_{i=1}^m \mathbb{P}(N_{B_i} = n_i) \sum_{k\geq 1} \mathbb{P}(N_{X\setminus B} = k) = \prod_{i=1}^m \mathbb{P}(N_{B_i} = n_i)
			\end{align*}

		\end{proof}
		
		Poisson point process also has a uniqueness property in the sense that for any Poisson process with a given intensity measure $\mu$ are equal in distribution. The proof of this fact can be found in Theorem 1.2.1 in \cite{Reiss}. We need to introduce an important construction in measure theory before we discuss how to integrate functions with respect to a Poisson Point Process. 
		
		\begin{definition}
			The \textbf{coproduct} or the \textbf{disjoint union} of countably infinitely many measure spaces $(X_n,\mathscr{A}_n, \mu_n)$ is defined as
			\begin{align*}
				\bigsqcup_{n \in \mathbb{N}}X_n \coloneqq \bigcup_{n\in \mathbb{N}}\{(x,n): x \in X_n\}, \;\;\;\;
				\mathscr{A} \coloneqq \left\{\bigsqcup_{n\in \mathbb{N}} A_n : A_n \in \mathscr{A}_n\right\},
			\end{align*}
			and the measure of each set is given by
			\[
			\;\;\;\;
			\mu\left(\bigsqcup_{n\in \mathbb{N}} A_n\right) \coloneqq \sum_{n \in \mathbb{N}}\mu_n(A_n).
			\]
		\end{definition}
		
		It is easy to check that the set $\mathscr{A}$ is a $\sigma$-algebra and that $\mu$ is a measure. The fact that we have a countable collection of spaces in our definition does not mean that one needs to restrict to this case: we could, as well, consider an uncountable family of measurable spaces. Notice that we can also define injections $i_n: X_n \rightarrow \sqcup_{n \in \mathbb{N}} X_n$ by $i_n(x) = (x,n)$. These are measurable since 
		\[
		i_m^{-1}\left(\bigsqcup_{n \in \mathbb{N}}A_n\right) = A_m \in \mathscr{A}_m,
		\] 
		by definition of disjoint union. Also, it is easy to see that the image of measurable sets are measurable. Thus, if we have a family of measurable functions $f_n: X_n\rightarrow Y$, we can define a unique measurable function $f:\sqcup_{n \in \mathbb{N}}X_n\rightarrow Y$ by $f(i_n(x)) \coloneqq f_n(x)$. Indeed, given $B \in \mathscr{B}$ we have 
		\[
		f^{-1}(B) = \bigcup_{n \in \mathbb{N}}f^{-1}(B)\cap i_n(X_n) = \bigcup_{n \in \mathbb{N}}i_n(f_n^{-1}(B)),
		\]
		hence $f$ is measurable. This shows that the measurable functions on the coproduct as in one-to-one correspondence with sequences $\{f_n\}_{n\geq 1}$ of measurable functions $f_n:X_n\rightarrow Y$. 
		\begin{comment}
		Let $X^n$ be the product space, with the product $\sigma$-algebra. For the case $X^0 = \{0\}$ and $X^\infty$ the countable infinity product space with the cylinder $\sigma$-algebra. Then, we will consider the maps $f_n: X^n \rightarrow \mathbb{N}(X)$ by
		\[
		f_n((x_1,x_2,\dots, x_n))= \sum_{i=1}^n \delta_{x_i}
		\]
		for $n\in\mathbb{N}\cup{+\infty}$ and $f_0(0)= \mu_\emptyset$, and $f$ the map previously defined using this family. 
		\end{comment}
		
		\begin{proposition}\label{Poisson_int}
			Let $f: \mathbb{N}(X)\rightarrow \mathbb{R}$ be a bounded measurable function and $N:\Omega \rightarrow \mathbb{X}$ a Poisson point process. Then, the following holds,
			\[
			\int_\Omega f\circ N(\omega)d\mathbb{P}(\omega) = f(\mu_\emptyset)e^{-\mu(X)}+e^{-\mu(X)}\sum_{n\geq 1}\frac{1}{n!}\int_{X^n} f(\delta_{x_1}+\dots+\delta_{x_n})d\mu^{\otimes n}(x_1,\dots,x_n),
			\]
			where $\mu^{\otimes n}$ is the $n$-fold product measure. 
		\end{proposition}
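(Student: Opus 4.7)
The plan is to reduce the statement to a direct computation against the explicit empirical representation \eqref{Poisson_rep}. Since any two Poisson point processes on $X$ with the same intensity $\mu$ are equal in distribution (as recalled just before the statement, cf.\ Theorem 1.2.1 in \cite{Reiss}), it suffices to prove the identity for the particular Poisson process $N = \sum_{i=1}^\tau \delta_{\xi_i}$ constructed on $\Omega = \mathbb{N}_0 \times \prod_{i \in \mathbb{N}} X$ in the paragraphs preceding the statement, where $\tau$ is Poisson with parameter $\mu(X)$ and each $\xi_i$ is independent of $\tau$ and of the other $\xi_j$ with law $\mu/\mu(X)$.

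First, I would decompose $\Omega$ along the value of $\tau$: writing $\Omega_n = \{\tau = n\} \times \prod_{i \in \mathbb{N}} X$, one has
\[
\int_\Omega f \circ N \, d\mathbb{P} = \sum_{n \geq 0} \int_{\Omega_n} f \circ N \, d\mathbb{P}.
\]
On $\Omega_0$ the random measure $N$ is the null measure $\mu_\emptyset$, contributing the term $f(\mu_\emptyset) \mathbb{P}(\tau = 0) = f(\mu_\emptyset) e^{-\mu(X)}$. For $n \geq 1$, the integrand $f(N) = f(\delta_{\xi_1} + \cdots + \delta_{\xi_n})$ depends only on the first $n$ coordinates of the $X$-factor; hence by Fubini applied to the product measure, together with the independence of $\tau$ and the $\xi_i$,
\[
\int_{\Omega_n} f \circ N \, d\mathbb{P} = \mathbb{P}(\tau = n) \int_{X^n} f(\delta_{x_1} + \cdots + \delta_{x_n}) \, d(\mu/\mu(X))^{\otimes n}(x_1, \ldots, x_n).
\]

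Substituting $\mathbb{P}(\tau = n) = \mu(X)^n e^{-\mu(X)}/n!$ and $d(\mu/\mu(X))^{\otimes n} = \mu(X)^{-n} d\mu^{\otimes n}$, the factors of $\mu(X)^n$ cancel and one obtains
\[
\int_{\Omega_n} f \circ N \, d\mathbb{P} = \frac{e^{-\mu(X)}}{n!} \int_{X^n} f(\delta_{x_1} + \cdots + \delta_{x_n}) \, d\mu^{\otimes n}(x_1, \ldots, x_n).
\]
Summing over $n \geq 0$ and isolating the $n = 0$ term yields the formula. Boundedness of $f$ and finiteness of $\mu$ ensure dominated convergence in interchanging sum and integral, since $\sum_n \mu(X)^n/n! = e^{\mu(X)} < \infty$.

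The only potential subtlety, and thus the main thing I would be careful about, is the measurability of the map $\omega \mapsto f(N(\omega))$ under the product $\sigma$-algebra on $\Omega$: this needs the fact that $(x_1,\ldots,x_n) \mapsto \delta_{x_1} + \cdots + \delta_{x_n}$ is measurable from $X^n$ into $(\mathbb{N}(X),\mathscr{N}(X))$, which follows at once from the definition of $\mathscr{N}(X)$ via the projections $\pi_A$ and the measurability of $(x_1,\ldots,x_n) \mapsto \sum_i \mathbbm{1}_A(x_i)$. Everything else reduces to routine Fubini and the explicit Poisson mass function already established.
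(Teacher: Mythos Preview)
Your proposal is correct and follows essentially the same strategy as the paper: decompose according to $\{\tau=n\}$, use independence of $\tau$ and the $\xi_i$, and compute the $n$-th term via the product measure on $X^n$. The paper packages this slightly differently---it introduces the coproduct $\bigsqcup_n X^n$ and an auxiliary map $\tilde N:\Omega\to\bigsqcup_n X^n$ sending $\omega$ to the tuple $(\xi_1(\omega),\dots,\xi_{\tau(\omega)}(\omega))$, proves the formula first for $f\circ\tilde N$ on indicator functions $\mathbbm{1}_{B_1\times\cdots\times B_n}$ and then extends, and finally factors $N=\varphi\circ\tilde N$ with $\varphi_n(x_1,\dots,x_n)=\sum_i\delta_{x_i}$---but the substance is identical to your direct Fubini argument, and your measurability check of $(x_1,\dots,x_n)\mapsto\sum_i\delta_{x_i}$ is exactly the measurability of the paper's $\varphi$.
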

		\begin{proof}
			First, let $\tau, \xi_1,\xi_2, \dots$ be the random variables given by the Poisson point process \eqref{Poisson_rep}. Let $X^n$ be the product space, with the product $\sigma$-algebra. For the case $X^0 = \{0\}$ and $X^\infty$ the countable infinity product space with the cylinder $\sigma$-algebra. We will define the function  $\tilde{N}:\Omega\rightarrow \bigsqcup_{n \in \mathbb{N}\cup\{0,\infty\}}X^n$ given by
			\[
			\tilde{N}(\omega) = \begin{cases}
				(\xi_1(\omega),\xi_2(\omega), \dots ,\xi_{\tau(\omega)}(\omega)), & \tau(\omega) \neq 0 \\
				\mu_\emptyset& \tau(\omega)=0
			\end{cases} 
			\]
			This function is measurable. Given an integrable function $f:\bigsqcup_{n \in \mathbb{N}\cup\{0,\infty\}}X^n \rightarrow \mathbb{R}$ and $f_n$ its restrictions to the subspace $X^n$. We have
			\[
			\int_\Omega f \circ \tilde{N} d\mathbb{P}(\omega) = f_0(0)e^{-\mu(X)}+\sum_{n\geq 1} \int_{\tau = n} f_n(\xi_1(\omega), \dots, \xi_n(\omega)) d\mathbb{P}(\omega)
			\]
			Suppose that $f_n = \mathbbm{1}_{B_1 \times \dots \times B_n}$, for measurable sets $B_i \in \mathscr{B}(X)$. Independence of the random variables $\xi_i$ yields
			\begin{align*}
				\int_{\tau = n} f_n(\xi_1(\omega), \dots, \xi_n(\omega)) d\mathbb{P}(\omega) &= \mathbb{P}(\tau=n)\prod_{i=1}^n \mathbb{P}(\xi_i \in B_i) \\
				&= \frac{e^{-\mu(X)}}{n!}\int_{X^n} \mathbbm{1}_{B_1 \times \dots \times B_n}(x_1,\dots,x_n) d\mu^{\otimes n}(x_1,\dots,x_n).
			\end{align*}
			Standard measure theoretic techniques allow us to extend the above result for general integral functions. Hence
			\[
			\int_\Omega f \circ \tilde{N} d\mathbb{P}(\omega) = f_0(0)e^{-\mu(X)}+\sum_{n\geq 1}\frac{e^{-\mu(X)}}{n!}\int_{X^n}f_n(x_1,\dots,x_n) d\mu^{\otimes n}(x_1,\dots,x_n).
			\]
			Let the function $\varphi:\mathbb{N}(X)\rightarrow \bigsqcup_{n \in \mathbb{N}\cup\{0,\infty\}}X^n$ be defined by
			\[
			\varphi_n(x_1,\dots, x_n)= \sum_{i=1}^n \delta_{x_i}.
			\] 
			Notice that $N = \varphi \circ \tilde{N}$. We claim that this function is measurable. Indeed, by previous considerations, we only need to show that $\varphi_{n,A}:X^n \rightarrow \mathbb{N}_0$ is measurable. It is sufficient to show that the pre-images of the singletons are measurable. Thus
			\[
			\varphi^{-1}_{n,A}(\{m\}) = \begin{cases}
				\emptyset & n<m, \\
				\bigcup_{\sigma \in F} B_\sigma & n\geq m,
			\end{cases}
			\]
			where $F =\{\sigma:\{1,\dots,n\}\rightarrow \{0,1\}: \sum_{i=1}^n \sigma(i) = m\}$, $B_\sigma = \prod_{i=1}^n A_{\sigma(i)}$, where $A_1= A$ and $A_0 = A^c$. Hence for any $f:\mathbb{N}(X)\rightarrow \mathbb{R}$, it holds
			\begin{align*}
				\int_\Omega f \circ N(\omega) d\mathbb{P}(\omega) &=\int_\Omega f \circ(\varphi \circ \tilde{N})(\omega) d\mathbb{P}(\omega) \\
				& = f_0(\mu_\emptyset)e^{-\mu(X)}+e^{-\mu(X)}\sum_{n\geq 1}\frac{1}{n!}\int_{X^n}f(\delta_{x_1}+ \dots + \delta_{x_n}) d\mu^{\otimes n}(x_1,\dots,x_n).
			\end{align*}
		\end{proof}
		
		Another important example that we will use to construct random representations for spin systems is the Bernoulli point process. We will focus on a more concrete case, where $X=[0,1]$. Given two point process $N, \tilde{N}$ it is straightforward to see that $N+\tilde{N}$ is again a point process. 
		Consider $\lambda \in \mathbb{R}$ and $\{\xi_{i,j}\}_{i\in \mathbb{N}, j =1,\dots, n}$ a sequence of i.i.d variables such that
		\[
		\mathbbm{P}(\xi_{n,j} = 0) = 1 - \mathbbm{P}(\xi_{n,j} = 1) = \frac{\lambda}{n},
		\]
		for $1\leq j \leq n$. These are probabilities for $n$ large enough. Define the point process
		\be\label{BPP}
		N_n(x,B) = \sum_{j=1}^n \xi_{n,j}(x)\delta_{\frac{j}{n}}(B),
		\ee
		where $B\subset [0,1]$ is a Borel set. 
		
		\begin{proposition}\label{integformpoirep:app1}
			Let $N_n$ be the Bernoulli point process defined in \eqref{BPP}. Then, we have that 
			\[
			\int_\Omega f \circ N_n (\omega) d\mathbb{P}(\omega) = \sum_{m\geq 0} \sum_{\substack{j_l \in \{0,\dots,n-1\} \\ 1 \leq l \leq k}} f\left(\delta_{\frac{j_1}{n}}+ \dots +\delta_{\frac{j_m}{n}}\right)\left(1- \frac{\lambda}{n}\right)^{n-m}\left(\frac{\lambda}{n}\right)^m 
			\]
		\end{proposition}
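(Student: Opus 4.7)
The plan is a direct computation that exploits two features of $N_n$: it is a \emph{finite} sum of atoms at deterministic locations $j/n$, and the Bernoulli weights $\xi_{n,1},\ldots,\xi_{n,n}$ are independent. No limiting argument is needed; everything reduces to enumerating the $2^n$ outcomes of the vector $(\xi_{n,1},\ldots,\xi_{n,n})$.

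First I would rewrite $N_n(\omega)$ in terms of the random subset $S(\omega)\coloneqq \{j\in\{1,\ldots,n\}:\xi_{n,j}(\omega)=1\}$, so that
\[
N_n(\omega) \;=\; \sum_{j\in S(\omega)}\delta_{j/n}.
\]
Since the map $\omega\mapsto S(\omega)$ takes only finitely many values, the partition of $\Omega$ according to the value of $S$ is finite, and integrability questions are trivial. I would then use this partition to unfold the integral:
\[
\int_\Omega f\circ N_n\,d\mathbb{P} \;=\; \sum_{S\subset\{1,\ldots,n\}} f\Bigl(\sum_{j\in S}\delta_{j/n}\Bigr)\,\mathbb{P}\bigl(S(\omega)=S\bigr).
\]

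Next I would compute $\mathbb{P}(S(\omega)=S)$ by independence of the $\xi_{n,j}$'s: the event $\{S(\omega)=S\}$ is the intersection of $\{\xi_{n,j}=1\}$ for $j\in S$ and $\{\xi_{n,j}=0\}$ for $j\notin S$, each of probability $\lambda/n$ respectively $1-\lambda/n$, so
\[
\mathbb{P}\bigl(S(\omega)=S\bigr) \;=\; \Bigl(\tfrac{\lambda}{n}\Bigr)^{|S|}\Bigl(1-\tfrac{\lambda}{n}\Bigr)^{n-|S|}.
\]
Finally I would regroup the sum according to $m=|S|$: the probability factor depends only on $m$, and the sum over subsets of size $m$ can be rewritten as a sum over tuples $(j_1,\ldots,j_m)$ with $j_l\in\{0,\ldots,n-1\}$ (modulo the obvious index shift), yielding the expression in the statement.

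No step presents a real obstacle; the only point requiring care is the bookkeeping when converting sums over subsets into sums over ordered tuples, since enumerating the $m!$ orderings of each size-$m$ subset must be matched with the form of the right-hand side in the proposition.
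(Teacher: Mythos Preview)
Your argument is correct and is essentially the same as the paper's: both proofs partition $\Omega$ according to the realized vector $(\xi_{n,1},\ldots,\xi_{n,n})$, compute the probability of each cell via independence, and then regroup by the number $m$ of successes. The only cosmetic difference is that the paper routes the computation through the coproduct space $\bigsqcup_{m}[0,1]^m$ and an auxiliary map $\tilde N_n$ (parallel to its earlier Poisson argument), whereas you phrase the same thing directly in terms of the random subset $S(\omega)$; your caution about the subset-versus-ordered-tuple bookkeeping is exactly the point where the paper's notation is loose, and you have identified it correctly.
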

		
		\begin{proof}
			The strategy of this proof will be the same as the one employed in Proposition \ref{Poisson_int}. Let $\tilde{N}: \Omega \rightarrow \bigsqcup_{n \in \mathbb{N}_0}[0,1]^n$ defined by
			\[
			\tilde{N}_n(\omega) = \begin{cases}
				(\frac{j_1}{n}, \dots, \frac{j_k}{n}), & \xi_{n,j_l}=1 \;\; \mathrm{ and } \;\;\xi_{n,j}=0, j\neq j_l, 1\leq l \leq k, \\
				0, & \sum_{j=1}^n \xi_{n,j}=0.
			\end{cases}
			\]
			It holds 
			\[
			\int_\Omega f\circ \tilde{N}_n(\omega) d\mathbb{P}(\omega) = \sum_{m\geq 0} \int_{\sum_j \xi_{n,j}=m} f_m \circ \tilde{N}_n(\omega)d\mathbb{P}(\omega). 
			\]
			It is straightforward to see that the r.h.s of the equation above is equal to
			\[
			\sum_{\substack{j_l \in \{0,\dots,n-1\}\\ 1 \leq l \leq k}}f\left(\frac{j_1}{n}, \dots, \frac{j_k}{n}\right) \left(1-\frac{\lambda}{n}\right)^{n-m}\left(\frac{\lambda}{n}\right)^m.
			\]
			Using the map $\varphi$ defined in Proposition \ref{Poisson_int} yields the desired result. 
		\end{proof}
		
		\begin{corollary}\label{corol1:appA}
			Let $N_{n,l}$, $l=1,..., m$, be Bernoulli point processes with probability $\lambda_l/n$. Then, for $N_n = \sum_l N_{n,l}$
			\[
			\int_\Omega f \circ N_n(\omega) d\mathbb{P}(\omega) = 
			\]
		\end{corollary}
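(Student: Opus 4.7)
The plan is to extend Proposition \ref{integformpoirep:app1} from a single Bernoulli point process to a finite independent sum, exploiting the fact that the underlying families $\{\xi^{(l)}_{n,j}\}_{j=1}^n$ are independent across $l \in \{1,\ldots,m\}$. The strategy is, in essence, to run the proof of Proposition \ref{integformpoirep:app1} in parallel on each of the $m$ sub-processes and to bookkeep the result through a product structure.

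First, I would mirror the construction from the proof of Proposition \ref{integformpoirep:app1} by introducing an auxiliary measurable map
\[
\tilde{N}_n : \Omega \to \bigsqcup_{\vec{k} \in \mathbb{N}_0^m} \prod_{l=1}^m [0,1]^{k_l},
\]
which records, on the event $\{\sum_{j=1}^n \xi^{(l)}_{n,j} = k_l \text{ for every } l\}$, the ordered tuples $(j^l_r/n)_{r=1}^{k_l}$ of positions where each sub-process fires. A coordinatewise analogue of the map $\varphi$ used in Proposition \ref{integformpoirep:app1} gives $N_n = \varphi \circ \tilde{N}_n$ and ensures measurability. Next I would partition $\Omega$ according to the count vector $\vec{k}$: by independence across $l$, the joint event has probability $\prod_{l=1}^m \binom{n}{k_l}(\lambda_l/n)^{k_l}(1-\lambda_l/n)^{n-k_l}$, with the positions uniform on their respective index subsets of $\{0,\ldots,n-1\}$. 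Substituting this decomposition into $\int f \circ N_n \, d\mathbb{P}$ and summing over all $\vec{k}$ yields the desired closed form.

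A cleaner alternative, which I would actually carry out, is to factor $\Omega$ as a product of $m$ independent probability spaces, one per sub-process, and apply Fubini together with Proposition \ref{integformpoirep:app1} iteratively $m$ times: each application peels off one Bernoulli process and contributes a fresh layer of sums indexed by $(k_l, j^l_1, \ldots, j^l_{k_l})$ with the corresponding Bernoulli weight $(1-\lambda_l/n)^{n-k_l}(\lambda_l/n)^{k_l}$. I do not anticipate any genuine obstacle, since this is essentially a bookkeeping exercise on top of Proposition \ref{integformpoirep:app1}. The only point warranting care is consistency with the combinatorial convention in the single-process statement, namely whether the inner sum runs over ordered sequences or unordered subsets of positions; once that convention is fixed, the corollary is immediate, and the passage to the Poisson limit used in the main text then follows from the standard convergence of $\mathrm{Bin}(n, \lambda/n)$ to $\mathrm{Poi}(\lambda)$ together with dominated convergence for bounded $f$.
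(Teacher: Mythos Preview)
Your proposal is correct and matches the paper's own argument: the paper's proof consists of exactly the partition-by-count-vector step you describe, writing $\int f\circ N_n\,d\mathbb{P}=\sum_{M\geq 0}\sum_{M_1+\cdots+M_m=M}\int_{\{\sum_k\xi_{n,k,l}=M_l\}} f\circ N_n\,d\mathbb{P}$ and then using independence across $l$ to obtain the product of Bernoulli weights $\prod_{l=1}^m(1-\lambda_l/n)^{n-M_l}(\lambda_l/n)^{M_l}$. Your caution about ordered sequences versus unordered subsets is well placed, since the paper itself silently switches conventions between Proposition~\ref{integformpoirep:app1} (ordered tuples $j_1,\dots,j_k$) and this corollary (subsets $J_l\in\mathcal{P}(\{0,\dots,n-1\})$).
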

		\begin{proof}
			\[
			\int_\Omega f \circ N_n(\omega) d\mathbb{P}(\omega) = \sum_{M\geq 0}\sum_{M_1+\dots+M_m = M} \int_{\sum_k\xi_{n,k,l}=M_l} f \circ N_n(\omega) d\mathbb{P}(\omega)
			\]
			
			\[
			\int_{\sum_k\xi_{n,k,l}=M_l} f \circ N_n(\omega) d\mathbb{P}(\omega)=\sum_{\substack{J_l \in \mathcal{P}(\{0,\dots,n-1\}) \\ 1 \leq l \leq m}}f\left(\sum_{l=1}^m\sum_{j\in J_l} \delta_{\frac{j}{n}}\right)\prod_{l=1}^m\left(1-\frac{\lambda_l}{n}\right)^{n-M_l}\left(\frac{\lambda_l}{n}\right)^{M_l}
			\]
		\end{proof}
		
		Suppose that we have a sequence of probability measures $\mu_n$ in the coproduct space $\bigsqcup_{m \in \mathbb{N}_0}[0,1]^m$. Then, one can define measures on $[0,1]^m$ by restriction. Let these restrictions be denoted by $\mu_{n,m}$. Then, if we have that each $\mu_{n,m}$ converges weakly to a $\mu_m$, then the monotone convergence theorem implies that $\mu_n$ converges to $\mu = \sum_m \mu_m$. Let $B_1,\dots, B_m$ be Borel sets in $[0,1]$. Then, for a continuous function $f:[0,1]^m \rightarrow \mathbb{R} $, we have 
		\[
		\sum_{\substack{j_l \in \{0,\dots,n-1\}\\ 1 \leq l \leq k}}f\left(\frac{j_1}{n}, \dots, \frac{j_k}{n}\right) \left(1-\frac{\lambda}{n}\right)^{n-m}\left(\frac{\lambda}{n}\right)^m =\left(1-\frac{\lambda}{n}\right)^{n-m}\int_{[0,1]^m} g_n(x) d\lambda^{\otimes n},
		\]
		where the function $g_n:[0,1]^m \rightarrow \mathbb{R}$ is defined by 
		\[
		g_n(x) = f\left(\frac{j_1}{n}, \dots, \frac{j_l}{n}\right), \quad \textrm{if} \;\; x_l \in \left(\frac{j_l-1}{n}, \frac{j_l}{n}\right],
		\]
		and $d\lambda = \lambda dx$, where $dx$ is the Lebesgue measure. Notice that $\lim_{n\rightarrow \infty} g_n = f$ pointwise. The Lebesgue dominated convergence theorem gives us that,
		\[
		\lim_{n\rightarrow \infty} \left(1-\frac{\lambda}{n}\right)^{n-m}\int_{[0,1]^m}g_n(x) d\lambda^{\otimes m} = e^{-\lambda}\int_{[0,1]^m}f(x)d\lambda^{\otimes m}.
		\]
		Thus, we get that the Bernoulli point processes converge weakly to a Poisson point process with intensity measure $d\lambda$. In the case both are independent Poisson Point processes, the sum is again a Poisson point process, as the following proposition shows.
		\begin{proposition}
			Let $N,\tilde{N}:\Omega \rightarrow \mathbb{N}(X)$ be two Poisson point processes. Then the point process $N+\tilde{N}$ is Poisson.
		\end{proposition}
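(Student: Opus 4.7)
The plan is to verify the two defining properties of a Poisson point process for $N+\tilde N$, under the (implicit) assumption that $N$ and $\tilde N$ are independent. Let $\mu$ and $\tilde\mu$ denote their intensity measures. First I would check the Poisson marginal property: for a fixed Borel set $A\subset X$, the random variables $N_A$ and $\tilde N_A$ are independent (because $N$ and $\tilde N$ are), $N_A\sim \mathrm{Poisson}(\mu(A))$ and $\tilde N_A\sim \mathrm{Poisson}(\tilde\mu(A))$, so that their sum $(N+\tilde N)_A=N_A+\tilde N_A$ is Poisson with parameter $\mu(A)+\tilde\mu(A)$. This can be done by a direct Vandermonde convolution
\[
\mathbb{P}(N_A+\tilde N_A=k)=\sum_{j=0}^k \frac{\mu(A)^j}{j!}e^{-\mu(A)}\frac{\tilde\mu(A)^{k-j}}{(k-j)!}e^{-\tilde\mu(A)}=\frac{(\mu(A)+\tilde\mu(A))^k}{k!}e^{-(\mu(A)+\tilde\mu(A))},
\]
or alternatively via the generating function $\mathbb{E}[z^{N_A+\tilde N_A}]=\mathbb{E}[z^{N_A}]\mathbb{E}[z^{\tilde N_A}]$.

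Next I would check the independence on disjoint Borel sets. Let $B_1,\dots,B_m$ be pairwise disjoint Borel subsets of $X$. By the Poisson property of $N$, the random variables $N_{B_1},\dots,N_{B_m}$ are independent; similarly for $\tilde N_{B_1},\dots,\tilde N_{B_m}$. Since $N$ and $\tilde N$ are independent processes, the entire family $\{N_{B_i},\tilde N_{B_j}:1\le i,j\le m\}$ is mutually independent. Independence is preserved under forming disjoint, non-overlapping functions of these variables, so the sums
\[
(N+\tilde N)_{B_1}=N_{B_1}+\tilde N_{B_1},\quad\dots,\quad (N+\tilde N)_{B_m}=N_{B_m}+\tilde N_{B_m}
\]
are mutually independent. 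Combined with the first step, this gives the Poisson property with intensity $\mu+\tilde\mu$.

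The main subtlety, rather than any deep obstacle, is the independence hypothesis itself: the statement as written omits it, but it is clearly needed for the intensities to add (and it is the natural assumption coming from the preceding discussion in the excerpt about independent Poisson constructions used in the Lie--Trotter expansion). Once independence is invoked, both defining properties reduce to elementary facts about independent Poisson random variables, so the proof is essentially a bookkeeping exercise: I would make explicit the independence of $N$ and $\tilde N$ as a hypothesis at the start and then carry out the two verifications above.
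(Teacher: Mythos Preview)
Your proposal is correct and follows essentially the same approach as the paper: both verify the Poisson marginal law for a single Borel set via the convolution identity $\mathbb{P}(N_B+\tilde N_B=k)=\sum_{j=0}^k \mathbb{P}(N_B=j)\mathbb{P}(\tilde N_B=k-j)$, and then check independence over pairwise disjoint $B_1,\dots,B_m$ by exploiting the mutual independence of the full family $\{N_{B_i},\tilde N_{B_j}\}$. The only cosmetic difference is that the paper carries out the second step as an explicit factorisation of the joint probability, whereas you invoke the general principle that disjoint measurable functions of independent variables remain independent; your remark that the independence of $N$ and $\tilde N$ is an implicit hypothesis is also apt, as the paper uses it without stating it in the proposition.
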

		\begin{proof}
			Take $B\in \mathscr{B}(X)$, and consider $N_B+\tilde{N}_{B}$. Then, the independence of $N$ and $\tilde{N}$ imply the independence of $N_B$ and $N_B$, thus
			\begin{align*}
				\mathbb{P}(N_B+ \tilde{N}_{B}=k)&=\sum_{j=0}^k \mathbb{P}(N_B=k-j)\mathbb{P}(\tilde{N}_{B}=j) \\
				& = \sum_{j=0}^k \frac{\mu(B)^{k-j}\nu(B)^j}{(k-j)!j!}e^{-(\mu(B)+\nu(B))} \\
				& = \frac{e^{-(\mu(B)+\nu(B))}}{k!}\sum_{j=0}^k \binom{k}{j}\mu(B)^{k-j}\nu(B)^j = \frac{(\mu(B)+\nu(B))^ke^{-(\mu(B)+\nu(B))}}{k!}. 
			\end{align*}
			
			Consider $B_1, \dots, B_m$ disjoint measurable sets. 
			\[
			\mathbb{P}(N_{B_i}+ \tilde{N}_{B_i}=k_i, i=1,\dots,m) = \sum_{\substack{k_{i,1}+k_{i,2}=k_i \\ i=1, \dots, m}} \mathbb{P}(N_{B_i}=k_{i,1}, \tilde{N}_{B_i}=k_{i,2}, i=1, \dots, m)  
			\]
			
			Since $N$ and $\tilde{N}$ are independent and the following holds, we have
			\begin{align*}
				\mathbb{P}(N_{B_i}=k_{i,1}, \tilde{N}_{B_i}=k_{i,2}, i=1, \dots, m)
				&=  \mathbb{P}(N \in \pi_{B_i}^{-1}(\{k_{i,1}\}), \tilde{N} \in \pi_{B_i}^{-1}(\{k_{i,2}\}) i=1, \dots, m) \\
				&=  \mathbb{P}(N \in \pi_{B_i}^{-1}(\{k_{i,1}\})i=1, \dots, m)  \mathbb{P}(\tilde{N} \in \pi_{B_i}^{-1}(\{k_{i,2}\}) i=1, \dots, m).
			\end{align*}
			Hence,
			\begin{align*}
				\mathbb{P}(N \in \pi_{B_i}^{-1}(\{k_{i,1}\}), \tilde{N} \in \pi_{B_i}^{-1}(\{k_{i,2}\}) i=1, \dots, m) &=\prod_{i=1}^m\left(\sum_{k_{i,1}+k_{i,2}=k_i} \mathbb{P}(N \in \pi_{B_i}^{-1}(\{k_{i,1}\}))  \mathbb{P}(\tilde{N} \in \pi_{B_i}^{-1}(\{k_{i,2}\}))\right) \\
				&=\prod_{i=1}^m\sum_{k_{i,1}+k_{i,2}=k_i}\mathbb{P}(N_{B_i}=k_{i,1}, \tilde{N}_{B_i}=k_{i,2}).
			\end{align*}
			This yields the desired result. 
		\end{proof}
		
		The probability that two independent poisson point processes with nonatomic probability measures $\mu, \nu$ will draw the same point is zero. Indeed, let $x\in X$ be a point and $B_{\frac{1}{n}}(x)$ be the balls centered at $x$ with radius $1/n$. Then we have that
		\[
		\mathbb{P}(N_{B_{\frac{1}{n}}(x)},\tilde{N}_{B_{\frac{1}{n}}(x)}\geq 1) = \mathbb{P}(N_{B_{\frac{1}{n}}(x)}\geq 1)\mathbb{P}(\tilde{N}_{B_{\frac{1}{n}}(x)}\geq 1)= (1-e^{-\nu(B_{\frac{1}{n}}(x))})(1-e^{-\mu(B_{\frac{1}{n}}(x))}).
		\]
		Since the measure $\nu$ is finite, we have that $\lim_{n\rightarrow\infty}\nu(B_{\frac{1}{n}}(x)) = \nu(\{x\})=0$. The same holds for $\mu$. Thus we have that
		\[
		\mathbb{P}(N,\tilde{N} \text{  draws the point   }x)=\lim_{n\rightarrow\infty} \mathbb{P}(N_{B_{\frac{1}{n}}(x)},\tilde{N}_{B_{\frac{1}{n}}(x)}\geq 1) = 0.
		\]
		The result above implies that when we have a finite number of independent Poisson point process we can associate to each draw a definite label allowing us to integrating more general functions, that even have a dependence on these labels. 
		\begin{corollary}\label{Corol_ppp}
			Let $N_i$, for $i=1,\dots,M$ be independent Poisson point processes on $[0,1]$ with intensity measures $\lambda_i dt$, for $\lambda_i>0$. Let $N = \sum_{i} N_i$ and $f:\mathbb{N}([0,1]\times\{1,\dots,M\})\rightarrow \mathbb{R}$ be a bounded measurable function. Then, it holds
			\[
			\int f\circ N (\omega) d\mathbb{P}(\omega) = e^{-\beta_i \sum  \lambda_i}\sum_{n\geq 0} \frac{1}{n!}\int_{[0,1]^n}  \sum_{i_1,\dots,i_n\in [i]} f\left(\sum_{j=1}^n\delta_{t_j,i_j}\right) \prod_{j=1}^n \lambda_j dt^n 
			\]
		\end{corollary}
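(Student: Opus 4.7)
The plan is to lift each process $N_i$, which lives on $[0,1]$, to a labelled process $\tilde N_i$ on the product space $[0,1]\times\{1,\dots,M\}$ supported on the slice $[0,1]\times\{i\}$, and then apply the Poisson integration formula of Proposition \ref{Poisson_int} to $\tilde N\coloneqq \sum_{i=1}^M \tilde N_i$. Concretely, if $j_i:[0,1]\to [0,1]\times\{1,\dots,M\}$ denotes the inclusion $t\mapsto(t,i)$, then $\tilde N_i = (j_i)_* N_i$ is a Poisson point process on $[0,1]\times\{1,\dots,M\}$ with intensity measure $\lambda_i\, dt\otimes\delta_i$. By the previous proposition on sums of independent Poisson point processes (iterated $M-1$ times), the sum $\tilde N$ is a Poisson point process on $[0,1]\times\{1,\dots,M\}$ with intensity measure $\mu=\sum_{i=1}^M \lambda_i\, dt\otimes\delta_i$.

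The key observation is that the draws coming from different $N_i$'s are almost surely distinct in $[0,1]$ (each $\lambda_i\, dt$ is nonatomic, so by the remark preceding the corollary the probability of collision is zero). Therefore the labelled sum $\tilde N$ faithfully records the information of the tuple $(N_1,\dots,N_M)$, and integrating $f\circ N$ against $\mathbb{P}$ is the same as integrating $f\circ\tilde N$ against $\mathbb{P}$, where $f$ is reinterpreted as a function on $\mathbb{N}([0,1]\times\{1,\dots,M\})$ in the obvious way.

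I will then apply Proposition \ref{Poisson_int} to $\tilde N$ with $X=[0,1]\times\{1,\dots,M\}$ and intensity $\mu$. Noting that $\mu(X)=\sum_{i=1}^M \lambda_i$ and that $\mu^{\otimes n}$ decomposes as
\[
\int_{X^n} g\, d\mu^{\otimes n}=\sum_{i_1,\dots,i_n=1}^M \Bigl(\prod_{j=1}^n\lambda_{i_j}\Bigr)\int_{[0,1]^n} g\bigl((t_1,i_1),\dots,(t_n,i_n)\bigr)\, dt^n,
\]
the formula of Proposition \ref{Poisson_int} becomes exactly the stated identity, after identifying $\delta_{(t_j,i_j)}$ with $\delta_{t_j,i_j}$ in the notation of the corollary.

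The only genuine obstacle is the bookkeeping for the labels: one must check that the map which sends a configuration of $M$ independent point measures on $[0,1]$ to a single point measure on $[0,1]\times\{1,\dots,M\}$ is measurable and $\mathbb{P}$-a.s.\ injective (using nonatomicity), so that the push-forward identity $\int f\circ N\, d\mathbb{P}=\int f\circ\tilde N\, d\mathbb{P}$ holds. Once this is in place, the computation is a direct expansion of the product measure and gives the announced formula.
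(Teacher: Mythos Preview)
Your approach is correct and matches the route the paper sets up: the paper does not give an explicit proof of this corollary, but the preceding proposition (that a sum of independent Poisson point processes is again Poisson) together with the remark on almost-sure non-collision of draws from nonatomic intensities is exactly the labelling argument you carry out, after which Proposition~\ref{Poisson_int} applied on $X=[0,1]\times\{1,\dots,M\}$ with $\mu=\sum_i \lambda_i\,dt\otimes\delta_i$ yields the stated expansion. So your write-up is precisely the intended derivation.
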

		
	\end{appendices} 
	
\end{document}